\documentclass[preprint,12pt]{elsarticle}

\usepackage{amssymb}
\usepackage{amsmath}
\usepackage[utf8]{inputenc}
\usepackage[T1]{fontenc}

\usepackage{amsthm}
\usepackage{yfonts}
\usepackage{array}
\usepackage{hyperref}

\newtheorem{definition}{Definition}
\newtheorem{theorem}{Theorem}
\newtheorem{lemma}{Lemma}
\newtheorem{proposition}{Proposition}
\newtheorem{corollary}{Corollary}
\newtheorem{claim}{Claim}

\newcommand{\outlen}{{\lceil}\frac{\max\{D\}}{\min\{G\}}{\rceil}}
\newcommand{\tlen}{\frac{|O||D||G|}{|O||D||G|-1}((|O||D||G|)^{\outlen} -1)}
\newcommand{\timedpair}{\{(s_1,\varepsilon),(s_2, \varepsilon)\}}

\usepackage[vlined, ruled]{algorithm2e}
\usepackage{tikz}
\usetikzlibrary{arrows,shapes,automata,backgrounds,calc, shapes.arrows, patterns}

\usepackage{bbm}

\journal{Theoretical Computer Science}

\begin{document}

\begin{frontmatter}

\title{Studying homing and synchronizing sequences for Timed Finite State Machines with output delays}

\author[label1]{Evgenii Vinarskii\corref{mycorrespondingauthor}}
\cortext[mycorrespondingauthor]{Corresponding author}
\author[label2]{Jakub Ruszil}
\author[label2]{Adam Roman}
\author[label1]{Natalia Kushik}
\affiliation[label1]{organization={SAMOVAR, T\'el\'ecom SudParis / Institut Polytechnique de Paris},
             city={Paris},
             country={France}}
\affiliation[label2]{organization={Faculty of Mathematics and Computer Science / Jagiellonian University},
             city={Krakow},
             country={Poland}}

\begin{abstract}
The paper introduces final state identification (synchronizing and homing) sequences for Timed Finite State Machines (TFSMs) with output delays and investigates their properties. We formally define the notions of homing sequences (HSs) and synchronizing sequences (SSs) for these TFSMs and demonstrate that several properties that hold for untimed machines do not necessarily apply to timed ones. Furthermore, we explore the applicability of various approaches for deriving SSs and HSs for Timed FSMs with output delays, such as truncated successor tree-based and FSM abstraction-based methods. Correspondingly, we identify the subclasses of TFSMs for which these approaches can be directly applied and those for which other (original) methods are required. Additionally, we evaluate the complexity of existence check and derivation of (shortest) HSs / SSs for TFSMs with output delays.
\end{abstract}


\begin{keyword}
Timed FSMs with output delays, final state identification, homing / synchronizing sequences
\end{keyword}

\end{frontmatter}

\section{Introduction}\label{sec:intro}
Timed FSMs and Timed Automata (TAs) are widely used in analysis and synthesis of real-time reactive systems (see for example, some recent works~\cite{larsen_schedulability_analysis, larsen_synthesis,larsen_performance_analysis}). Quite often, the current state of the system under analysis is unknown, and in this case, it is necessary to first bring the system to a known, stable state to further continue its analysis (e.g., apply the test cases, verify certain properties, etc.). Being fundamental in the theory of FAs and FSMs, synchronizing and homing sequences~\cite{Sandberg2005} allow the system to be set to a known state. Indeed, SSs and HSs serve as a base for `gedanken'~\cite{fsm} synchronizing and homing experiments, designed to identify a machine's final state, i.e., the state after the sequence has been applied. In the context of FAs and FSMs, a synchronizing experiment involves applying an SS to bring the machine to a known state, regardless of its initial state. A homing experiment, on the contrary, involves not only applying the sequence but also observing the corresponding output response on it, to determine the final state. 

For real-time systems, modeled with Timed FSMs~\cite{vinarskii_pssv}, the concept of the `gedanken' experiments should be extended to consider the timed aspects for both inputs and outputs. These steps have already been taken in the literature; for example, in~\cite{manuel14} and~\cite{tvardovskii_hs_for_tfsm_with_timed_guards}, the authors extend the classical experiment by associating inputs (or events) with timestamps. In this paper, we add an additional observation point at the output channel, capturing both inputs and outputs with precise timestamps. In this setting, inputs are applied and outputs are produced with these timestamps, and that sometimes increases the complexity of the final state identification problem. Key research challenges in the domain include the decidability and computational complexity of the existence check of SSs and HSs, as well as deriving, whenever possible, shortest sequences of this kind.

The derivation of SSs and HSs for classical FAs and FSMs has been extensively studied and summarized by S. Sandberg~\cite{Sandberg2005}, along with algorithms for deriving these sequences. These algorithms can be grouped by: (i) iterative approaches which construct SS (or HS) for the entire machine by combining those derived for every pair of states~\cite{eppstein},~\cite{natarajan},~\cite{Sandberg2005}~\cite{cerny}, (ii) successor tree-based approaches~\cite{kohavi}, which operate on the sets of states by iteratively splitting or merging them at each step, and (iii) solver-based approaches (\cite{volkov19} and~\cite{qbf_homing}) that formulate the conditions for a machine to be synchronized (homed) using a satisfiable formula. For complete deterministic FSMs, checking the existence of HSs (SSs) is relatively straightforward and can be performed in polynomial time, while the problem of the derivation of a shortest HS and SS is NP-hard~\cite{Sandberg2005}. For non-deterministic but observable FSMs and non-deterministic and partial FAs the problem becomes PSPACE-hard~\cite{kushik_observable}, \cite{Ito}, \cite{volkov19}.

This paper explores various strategies for deriving SSs and HSs for the TFSM with output delays considered in~\cite{vinarskii_pssv} and~\cite{enase23}. In the TFSM with output delays, each transition is associated with a timed guard (interval) and an output delay. A transition is executed only when the applied timed input satisfies the corresponding timed guard, while the output delay specifies the time needed to produce the output after the transition execution. The latter allows the TFSM to accept the next input even if the outputs to the previous inputs are still pending. Such TFSM behavior therefore takes into account possible concurrent procedures that are executed in parallel for computing and producing the proper outputs. Therefore, the TFSM of interest has the following features: (i) it operates as a timed input/output automaton, allowing inputs to be accepted without waiting for their corresponding outputs, and (ii) it preserves the FSM property that the number of inputs is equal to the number of outputs in the corresponding response. This affects the properties of SSs and HSs for the TFSM of interest. For example, we show that for these TFSMs, not every prolongation of a homing sequence is a homing sequence. The latter does not allow to ``directly'' adapt the iterative approaches for deriving an HS for the TFSM with output delays. Despite this, we demonstrate that so-called non-integer timed input sequences exist, for which the prolongation on both sides preserves the homing property. Thus, we adapt the truncated successor tree approach~\cite{kushik_complexity} to derive a shortest HS (SS) for the TFSM of interest.

Another approach for deriving SSs and HSs for a timed machine is to reduce the problem to the derivation of SSs and HSs for an untimed abstraction of the timed machine. This reduction is typically achieved through the construction of region automaton / FSM (see for example~\cite{alur},\cite{uppaal},\cite{doyen2014},\cite{tvardovskii_hs_for_tfsm_with_timed_guards},\cite{tvardovskii_hs_for_tfsm_with_timeouts}). The authors in~\cite{tvardovskii_hs_for_tfsm_with_timed_guards} and~\cite{tvardovskii_hs_for_tfsm_with_timeouts} have proven that this reduction can be effectively applied to Timed FSMs with timed guards~\cite{tvardovskii_hs_for_tfsm_with_timed_guards} and Timed FSMs with timeouts~\cite{tvardovskii_hs_for_tfsm_with_timeouts}. However, the same reduction cannot be directly adapted for the Timed FSMs with output delays considered in this paper. To address this, we propose modifying the FSM abstraction introduced in~\cite{tvardovskii_hs_for_tfsm_with_timeouts} to facilitate the derivation of SSs and HSs. The latter motivates us to study the properties of the correspondence between SSs and HSs for the TFSMs with output delays and their untimed abstractions. It also involves comparing the complexity of the existence check of HSs (SSs) and the derivation of a shortest HS (SS) for both, timed and untimed machines.

The main contributions of this paper are the following: (i) introduction and definition of homing and synchronizing sequences for TFSMs with output delays, (ii) HS and SS existence check and derivation strategies for TFSMs, together with the relevant complexity analysis, and (iii) study of various TFSM classes for which successor-tree based approaches and/or FSM abstractions are applicable or not, in the context of the final state identification.

The structure of the paper is as follows. Section~\ref{sec:background} presents the necessary background. Section~\ref{sec:hs_and_ss} introduces the synchronizing and homing sequences for TFSMs with their properties and adjusts a truncated successor tree for their derivation. Section~\ref{sec:fsm_abstractions} discusses the possibilities of different FSM abstractions to derive SS and HS for the TFSM. Section~\ref{sec:conclusion} concludes the paper.

\section{Background}\label{sec:background}

\subsection{Finite State Machines, homing and synchronizing sequences}

A \textit{Finite State Machine} (FSM)~\cite{fsm} is defined as a tuple $\mathcal{M} = (S, I, O, h_S)$ where $S$ is a finite non-empty set of states, $I$ ($O$) is a finite non-empty input (output) alphabet and $h_S \subseteq S \times I \times O \times S$ is a transition relation. We say that $\mathcal{M}$ is \textit{non-deterministic} if for some pair $(s, i) \in S \times I$, there exist at least two different pairs $(o',s'), (o'',s'') \in O \times S$ such that $(s, i, o', s') \in h_S$ and $(s, i, o'', s'') \in h_S$; otherwise, the FSM is \textit{deterministic}. FSM $\mathcal{M}$ is \textit{complete} if the transition relation is defined for each state/input pair $(s, i) \in S \times I$; otherwise, the FSM is \textit{partial}. If for every two transitions $(s, i, o, s_1), (s, i, o, s_2) \in h_S$ it holds that $s_1 = s_2$, then $\mathcal{M}$ is \textit{observable}, otherwise $\mathcal{M}$ is \textit{non-observable}. Consider FSM $\mathcal{M}_1$ (Fig.~\ref{fig:spec_fsm}), $\mathcal{M}_1$ is complete non-deterministic and non-observable since at state $s_1$ under input $i_2$ transitions $(s_1, i_2, o_2, s_0)$ and $(s_1, i_2, o_2, s_1)$ to two distinct states are defined. In order to introduce final state identification sequences, it is convenient to utilize functions $next\_state_{\mathcal{M}}: S \times I^* \times O^* \rightarrow 2^S$ and $out_{\mathcal{M}}: S \times I^* \rightarrow 2^{O^*}$ together with the transition relation. Given states $s$ and $s'$ of $S$, an input sequence $\alpha = i_1i_2 \dots i_n \in I^*$ and an output sequence $\beta = o_1o_2 \dots o_n \in O^*$, we say that $\alpha / \beta$ \textit{brings} FSM $\mathcal{M}$ from state $s$ to state $s'$ if there exist states $s_1 = s, s_2, \dots, s_n, s_{n+1} = s'$ such that $(s_j, i_j, o_j, s_{j+1}) \in h_S$, for $j \in \{ 1, \dots, n \}$. At the same time, function $next\_state^{nd}_{\mathcal{M}}: S \times I^* \rightarrow 2^S$ is defined as follows: $s' \in next\_state^{nd}_{\mathcal{M}}(s, \alpha)$ for $s \in S$ and $\alpha \in I^{*}$ if and only if there exists $\beta \in O^{*}$ such that $s' \in next\_state_{\mathcal{M}}(s, \alpha, \beta)$. The set of all output sequences that $\mathcal{M}$ can produce at state $s$ in response to $\alpha$ is denoted as $out_{\mathcal{M}}(s, \alpha)$. 

Given a complete observable FSM $\mathcal{M}$.

\begin{definition}
\label{def:ms_fsm}
    Given $s, s' \in S$, an input sequence $\alpha \in I^*$ is merging for states $s$ and $s'$ if $next\_state^{nd}_{\mathcal{M}}(s, \alpha) = next\_state^{nd}_{\mathcal{M}}(s', \alpha) = \{ s'' \}$ for $s'' \in S$.
\end{definition}

\begin{definition}
\label{def:ss_fsm}
    An input sequence $\alpha \in I^*$ is synchronizing for $\mathcal{M}$ if $\exists \bar{s} \in S$ such that for every $s \in S$ we have $next\_state^{nd}_{\mathcal{M}}(s,\alpha) = \{\bar{s}\}$. 
\end{definition}

\begin{definition}
\label{def:hs_fsm}
    An input sequence $\alpha \in I^*$ is homing for $\mathcal{M}$ if for each state pair $\{s_1, s_2\}$ of $\mathcal{M}$ the following holds: $\forall \beta \in out_{\mathcal{M}}(s_1, \alpha) \cap out_{\mathcal{M}}(s_2, \alpha)$ it holds that $next\_state_{\mathcal{M}}(s_1, \alpha, \beta) = next\_state_{\mathcal{M}}(s_2, \alpha, \beta) = \{ \bar{s} \}$ for some $\bar{s} \in S$.
\end{definition}

Note that, according to Definitions~\ref{def:ss_fsm} and~\ref{def:hs_fsm}, a synchronizing sequence is always homing, but the converse is not true. The properties of merging, synchronizing, and homing sequences and their relationship have been extensively studied for FAs and FSMs and their different modifications~\cite{Sandberg2005}. It has been shown that a complete deterministic automaton is synchronizing if and only if every pair of distinct states has a merging sequence~\cite{eppstein},~\cite{natarajan} and~\cite{cerny}. This property provides an intuition for deriving, not necessarily shortest, synchronizing sequences by iteratively deriving merging sequences for every pair of states and appending the merging sequences for the successors. The approach relies on the fact that if a sequence homes (or merges) two different states, then any prolongation of this sequence also homes (or merges) these states. In this paper, we investigate whether similar properties of final state identification sequences hold for Timed FSMs with output delays and how the SS and HS derivation may differ with respect to the timed aspects.

\subsection{Timed FSMs with output delays \& related notions}

As mentioned in Section~\ref{sec:intro}, the behavior of a TFSM with output delays at a current state depends on the input, the time when this input is applied and the time required to handle it. These aspects of a real-time behavior can be formalized with \textit{timestamps}, \textit{timed guards}, and \textit{output delays} (or just \textit{delays}). A \textit{timestamp} $t \in \mathbb{R}^{+}_{0}$ specifies a time instance at which a real-time system can receive an input or generate an output. A timed guard $g=[u,v)$, for $u, v \in \mathbb{N}^{+}_0$ and $0 \leq u < v$, indicates the time \emph{window} during which a system transition is enabled to handle the input\footnote{We also consider point timed guards (intervals) $[u,u]$ in Section~\ref{sec:fsm_abstractions}.}. A delay $d \in \mathbb{N}^{+}$ specifies the time required to generate an output after an input has been received. Given finite non-empty input (output) alphabet $I$ ($O$) and a timestamp $t$, we say that a \textit{timed input} (\textit{output}) is a pair $(a,t)$ where $a \in I$ ($a \in O$). A \textit{timed input} (\textit{output}) \textit{sequence} $\alpha = (a_1, t_1) \dots (a_n, t_n)$ is a finite sequence of timed inputs (outputs) where sequence $t_1 \dots t_n$ is non-decreasing.

\begin{definition} A TFSM $\mathcal{S}$ with output delays, over finite $I$ and $O$, is a tuple $(S, I, O, G, D, h_S)$, where $S$, $G$ and $D$ are finite non-empty sets of states, timed guards and output delays, respectively, while $h_S \subseteq S \times I \times G \times O \times D \times S$ is a transition relation.
\end{definition}

\begin{figure}[ht]
\centering
\begin{minipage}{.42\textwidth}
  \centering
  \begin{tikzpicture}[>=stealth',node distance=3.1cm,semithick,auto, scale=1.0, every node/.style={scale=0.7}]
				\node[state, minimum size=1cm]	(s0)
				    {$s_0$};
                \node[state, right of=s0,minimum size=1cm]	(s1)
				    {$s_1$};
				\node[state, right of=s1,minimum size=1cm]	(s2)
				    {$s_2$};

				{
				\path[->] 
				
                (s0) edge [loop above, align=left] node {
                        $i_2/o_3$ \\
                        $i_2/o_1$
                        } (s0)
                (s0) edge [bend left, above, align=left] node {
                        $i_1/o_1$
                        } (s1)
                (s1) edge [bend left, above, align=left] node {
                        $i_1/o_2$
                        } (s2)
                (s2) edge [bend left, below, align=left] node {
                        $i_2/o_1$
                        } (s1)
                (s1) edge [bend left, below, align=left] node {
                        $i_2/o_2$
                        } (s0)
                (s1) edge [loop above, align=left] node {
                        $i_2/o_2$
                        } (s1)
                (s2) edge [loop above, align=left] node {
                        $i_1/o_3$ \\
                        $i_2/o_3$
                        } (s2)
					;
				}
\end{tikzpicture}

  \caption{FSM $\mathcal{M}_1$}
  \label{fig:spec_fsm}
\end{minipage}%
\hfill
\begin{minipage}{.55\textwidth}
  \centering
  \begin{tikzpicture}[>=stealth',node distance=3.7cm,semithick,auto, scale=1.0, every node/.style={scale=0.7}]
			    \node[state, minimum size=1cm]	(s0)
				    {$s_0$};
                \node[state, right of=s0,minimum size=1cm]	(s1)
				    {$s_1$};
                \node[state, right of=s1,minimum size=1cm]	(s2)
				    {$s_2$};

				{
				\path[->] 
				
                (s0) edge [loop above, align=left] node {
                        $i_2,[1,3)/o_3,1$ \\
                        $i_2,[3,6)/o_1,4$
                        } (s0)
                (s0) edge [bend left, above, align=left] node {
                        $i_1,[1, 3) / o_1, 4$
                        } (s1)
                (s1) edge [bend left, above, align=left] node {
                        $i_1,[1,3)/o_2,3$
                        } (s2)
                (s2) edge [bend left, below, align=left] node {
                        $i_2,[1,3)/o_1,2$ \\
                        $i_2,[5,6)/o_1,2$
                        } (s1)
                (s1) edge [bend left, below, align=left] node {
                        $i_2,[1,3)/o_2, 1$ \\
                        $i_2,[4,6)/o_2, 1$ \\
                        } (s0)
                (s1) edge [loop above, align=left] node {
                        $i_2,[3,4)/o_2,1$
                        } (s1)
                (s2) edge [loop above, align=left] node {
                        $i_1,[1,3)/o_3,1$ \\
                        $i_2,[3,5)/o_3,3$
                        } (s2)
					;
				}
\end{tikzpicture}

  \caption{TFSM $\mathcal{S}_1$}
  \label{fig:spec_tfsm}
\end{minipage}
\end{figure}

A transition $(s, i, g, o, d, s') \in h_S$ means that, receiving input $i$ at state $s$ after $t\in g$ time units $\mathcal{S}$ moves to state $s'$, producing output $o$ after $d$ time units (written as $s\stackrel{i,g/o,d}{\longrightarrow}s'$). Given $t_0 = 0$, $s \in S$ and $\alpha = (i_1, t_1) (i_2, t_2) \dots (i_n, t_n)$, we say that $\alpha$ \textit{induces} a sequence of transitions $tr = s \stackrel{i_1,g_1/o_1,d_1}{\longrightarrow} s_1 \stackrel{i_2,g_2/o_2,d_2}{\longrightarrow} \dots \stackrel{i_n,g_n/o_n,d_n}{\longrightarrow} s_n$ from state $s$ if $t_1 - t_0 \in g_1$, $t_2 - t_1 \in g_2$, \dots, $t_n - t_{n-1} \in g_n$. The set of all timed input sequences that induce at least one sequence of transitions from state $s$ is denoted as $Dom_{\mathcal{S}}(s)$. If for every pair of transitions $s\stackrel{i,g_1/o_1,d_1}{\longrightarrow}s_1$ and $s\stackrel{i,g_2/o_2,d_2}{\longrightarrow}s_2$ in $\mathcal{S}$ it holds that $g_1 \cap g_2 = \emptyset$, then $\mathcal{S}$ is \textit{deterministic}. In this paper, we study the final state identification for deterministic timed machine.

The \textit{run} induced by $\alpha$ is denoted as $r = s_0 \xrightarrow[o_1, \tau_1]{i_1, t_1} s_1 \xrightarrow[o_2, \tau_2]{i_2, t_2} \dots \xrightarrow[o_n, \tau_n]{i_n, t_n} s_n$, where $\tau_1 = t_1 + d_1$, $\tau_2 = t_2 + d_2$, \dots, $\tau_n = t_n + d_n$. We denote as $s \xrightarrow[o, \tau]{i, t} s'$ the fact that being at $s$ and receiving $(i, t)$, the machine immediately moves to $s'$ and produces $o$ at $\tau = t + d$. Unlike TFSMs considered in~\cite{bresolin_2014}, the next timed input can be applied before the machine has produced outputs to the previous inputs. Consequently, the sequence of timed outputs $r\downarrow_{O}=(o_1,\tau_1)(o_2,\tau_2)\dots(o_n,\tau_n)$ may not represent the exact output response of $\mathcal{S}$ to $\alpha$. Specifically, there can exist indices $\ell, k \in \{ 1, \dots, n \}$ such that $\ell < k$ but $\tau_{\ell} > \tau_k$, indicating that output $o_{\ell}$ is produced after output $o_k$. Additionally, $\tau_{\ell}$ can be equal to $\tau_k$, in this case the outputs can occur simultaneously. Therefore, the \textit{timed output response} of $\mathcal{S}$ to $\alpha$, denoted as $timed\_out_{\mathcal{S}}(s, \alpha)$, is defined as the set of all possible permutations $(o_{j_1}, \tau_{j_1}) (o_{j_2}, \tau_{j_2}) \dots (o_{j_n}, \tau_{j_n})$ of $r\downarrow_{O}$ such that $\tau_{j_1} \leq \tau_{j_2} \leq \dots \leq \tau_{j_n}$.

As an example, consider TFSM $\mathcal{S}_1$ (Fig.~\ref{fig:spec_tfsm}) and $\alpha_1=(i_1,2)(i_2,4)(i_2,5)$. If $\mathcal{S}_1$ is at $s_0$, then $i_1$ can be processed when received at $t_1 \in [1,3)$. Thus, $s_0\stackrel{i_1,[1,3)/o_1,4}{\longrightarrow}s_1$ is enabled for $(i_1,2)$ and $\mathcal{S}_1$ moves to $s_1$, while $o_1$ will be produced at $2+4=6$ time units. Since $4-2 \in [1,3)$, $s_1\stackrel{i_2,[1,3)/o_2,1}{\longrightarrow}s_0$ is enabled for $(i_2,4)$, $\mathcal{S}_1$ moves to $s_0$, while $o_2$ will be produced at $4+1=5$ time units. Similarly, $s_0\stackrel{i_2,[1,3)/o_3,1}{\longrightarrow}s_0$ is enabled for $(i_2,5)$ and $o_3$ will be produced at $5+1=6$ time units. So $\alpha_1$ is enabled for $\mathcal{S}_1$ and the run induced by $\alpha_1$ at state $s_0$ is $r = s_0 \xrightarrow[o_1,6]{i_1,2} s_1 \xrightarrow[o_2,5]{i_2,4} s_0 \xrightarrow[o_3,6]{i_2,5} s_0$. Note that $i_1$ was applied before $i_2$, but the timestamp of output $o_2$ is less than that of output $o_1$, that is, $o_2$ will be produced before $o_1$. At the same time, the timestamps of outputs $o_1$ and $o_3$ are the same. Thus, $r\downarrow_{O} = (o_1,6)(o_2,5)(o_3,6)$ is not a timed output sequence, and $\mathcal{S}_1$ produces either $\beta_1 = (o_2,5)(o_1,6)(o_3,6)$ or $\beta_2 = (o_2,5)(o_3,6)(o_1,6)$, i.e., due to the competition, $o_1$ and $o_3$ can be produced at the same time instance. Therefore, the response of $\mathcal{S}_1$ to $\alpha_1$ is $timed\_out_{\mathcal{S}_1}(s_0, \alpha_1) = \{ \beta_1, \beta_2 \}$.

This example showcases the difference between the TFSM with output delays and the timed machines considered in~\cite{bresolin_2014}. Consider the execution $s\stackrel{i_1,g_1/o_1,d_1}{\rightarrow}s_1\stackrel{i_2,g_2/o_2,d_2}{\rightarrow}s_2$. In the TFSM with output delays, when the machine is at $s$ and receives $(i_1,t)$, it immediately moves to $s_1$, while starting procedure $f_1$ to process $i_1$ and to generate $o_1$. The execution of $f_1$ requires $d_1$ time units. A subsequent input $i_2$ may be applied even before output $o_2$ has been produced. In that case, the TFSM immediately moves to $s_2$ and starts procedure $f_2$ to produce $o_2$, when $f_1$ is still pending.

Together with the transition relation for a deterministic TFSM $\mathcal{S} = (S, I, O, G,$ $ D, h_S)$, we define the \textit{transition function} $next\_state_{\mathcal{S}}:\ (S \cup \{\bot\}) \times (I \times \mathbb{R}^{+}) \rightarrow (S \cup \{\bot\})$, where $\bot \not\in S$ in the following way: if there exists a transition $s\stackrel{i,g/o,d}{\longrightarrow}s'$ such that $t \in g$, then $next\_state_{\mathcal{S}}(s, (i, t)) = s'$, otherwise $next\_state_{\mathcal{S}}(s, (i, t)) = \bot$. As for classical FSMs, $next\_state_{\mathcal{S}}$ function can be extended to timed input sequences. Let $\alpha = (i_1, t_1) \dots (i_n, t_n)$ be a timed input sequence and $\alpha' = \alpha (i_{n+1},t_{n+1}) = (i_1, t_1) \dots (i_n, t_n) (i_{n+1},t_{n+1})$ be a prolongation of $\alpha$; if $next\_state_{\mathcal{S}}(s, \alpha) = \bot$, then $next\_state_{\mathcal{S}}(s, \alpha') = \bot$, otherwise it is defined as 
$$
next\_state_{\mathcal{S}}(s, \alpha') = next\_state_{\mathcal{S}}(next\_state_{\mathcal{S}}(s, \alpha), (i_{n+1}, t_{n+1}-t_n)).
$$
\begin{proposition}\label{prop:next_state}
If $\mathcal{S}$ is a deterministic TFSM with output delays, $s$ is a state of $\mathcal{S}$ and $\alpha \in Dom_{\mathcal{S}}(s)$, then $|next\_state_{\mathcal{S}}(s, \alpha)| = 1$.
\end{proposition}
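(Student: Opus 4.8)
The plan is to prove this by induction on the length $n$ of the timed input sequence $\alpha = (i_1, t_1) \dots (i_n, t_n)$, exploiting the determinism assumption in the strong sense given in the paper, namely that for any two transitions $s \stackrel{i,g/o,d}{\longrightarrow} s'$ and $s \stackrel{i,g'/o',d'}{\longrightarrow} s''$ from the same state on the same input we have $g \cap g' = \emptyset$. The key observation is that $\alpha \in Dom_{\mathcal{S}}(s)$ guarantees that every prefix of $\alpha$ also lies in the domain of the appropriate state, so the recursive definition of $next\_state_{\mathcal{S}}$ never produces $\bot$ along the way, and at each step we only need to show the single-step transition function is well-defined (single-valued).

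First I would handle the base case $n = 1$: here $\alpha = (i_1, t_1)$ and $\alpha \in Dom_{\mathcal{S}}(s)$ means there is a transition $s \stackrel{i_1, g_1/o_1, d_1}{\longrightarrow} s_1$ with $t_1 - t_0 = t_1 \in g_1$, so $next\_state_{\mathcal{S}}(s, (i_1, t_1))$ is defined and equals $s_1$. Suppose there were another transition $s \stackrel{i_1, g_1'/o_1', d_1'}{\longrightarrow} s_1'$ with $t_1 \in g_1'$; then $t_1 \in g_1 \cap g_1'$, contradicting determinism, so the value is unique and $|next\_state_{\mathcal{S}}(s, \alpha)| = 1$. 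For the inductive step, I would write $\alpha = \alpha'' (i_n, t_n)$ where $\alpha'' = (i_1, t_1) \dots (i_{n-1}, t_{n-1})$. Since $\alpha \in Dom_{\mathcal{S}}(s)$, the sequence of transitions it induces witnesses that $\alpha'' \in Dom_{\mathcal{S}}(s)$ as well, so by the induction hypothesis $next\_state_{\mathcal{S}}(s, \alpha'')$ is a single state $s_{n-1}$; moreover the same induced sequence of transitions shows there is a transition $s_{n-1} \stackrel{i_n, g_n/o_n, d_n}{\longrightarrow} s_n$ with $t_n - t_{n-1} \in g_n$, so by the base-case argument applied at $s_{n-1}$, the single-step value $next\_state_{\mathcal{S}}(s_{n-1}, (i_n, t_n - t_{n-1}))$ is a single state. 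Composing via the recursive definition gives $|next\_state_{\mathcal{S}}(s, \alpha)| = 1$.

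One subtlety worth spelling out is the relationship between $Dom_{\mathcal{S}}(s)$ (defined via induced sequences of transitions) and the recursive $next\_state_{\mathcal{S}}$ definition (which short-circuits to $\bot$): I would note that $\alpha \in Dom_{\mathcal{S}}(s)$ precisely rules out the $\bot$ branch at every prefix, since the existence of an induced run means each successive timed input satisfies some enabled guard. I expect the only real obstacle — really a bookkeeping matter rather than a genuine difficulty — is making the passage from "$\alpha$ induces a sequence of transitions" to "$\alpha''$ induces a sequence of transitions and $s_{n-1}$ has the required outgoing transition" fully rigorous, i.e. confirming that the state reached after $\alpha''$ according to $next\_state_{\mathcal{S}}$ coincides with the penultimate state $s_{n-1}$ of the induced run; this follows again from determinism, since the induced run is itself unique. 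Everything else is a routine unwinding of the definitions.
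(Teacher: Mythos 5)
Your proof is correct. The paper actually states Proposition~\ref{prop:next_state} without any proof, treating it as an immediate consequence of the determinism condition ($g \cap g' = \emptyset$ for distinct transitions from the same state on the same input); your induction on $|\alpha|$, including the observation that membership in $Dom_{\mathcal{S}}(s)$ rules out the $\bot$ branch at every prefix and that the induced run is itself unique, is precisely the routine unwinding the authors leave implicit.
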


Another important property of an FSM is to be complete~\cite{fsm}; for TFSMs \emph{completeness} can be defined in various ways. We say that $\mathcal{S}$ is \textit{strongly-complete} if $next\_state_{\mathcal{S}}$ function is total, i.e., for every state $s$ and for every timed input $(i, t)$ there exists a transition $s \stackrel{i,g/o,d}{\rightarrow}s' \in h_S$. We say that $\mathcal{S}$ is \textit{weakly-complete} if for every pair of states $s, s'$ their domains are equal, i.e., $Dom_{\mathcal{S}}(s) = Dom_{\mathcal{S}}(s')$, otherwise $\mathcal{S}$ is \textit{partial}. As an example, consider again TFSM $\mathcal{S}_1$ (Fig.~\ref{fig:spec_tfsm}), $[1, 3)$ is the single timed guard defined for input $i_1$ at states $s_0, s_1$ and $s_2$; at the same time, timed guards $[1, 3)$ and $[3,4),[3,5),[5,6),[4,6),[3,6)$ are defined at states $s_0, s_1$ and $s_2$ for $i_2$ and cover $[1, 6)$ for all states, therefore, $\mathcal{S}_1$ is weakly-complete, but not strongly-complete. Note that FSM $\mathcal{M}_1$ (Fig.~\ref{fig:spec_fsm}) is derived by erasing all timed guards and output delays from TFSM $\mathcal{S}_1$. We further refer to such FSMs as \emph{FSM-abstractions} and study their properties (see Section~\ref{subsec:1_fsm_abstraction}).

\section{Homing and synchronizing sequences for TFSMs with output delays}\label{sec:hs_and_ss}
\subsection{Final state identification sequences and their properties}\label{sec:properties_ss_hs}

Let $\mathcal{S} = (S, I, O, G, D, h_S)$ be a TFSM with output delays, we introduce the following definitions\footnote{Since we consider deterministic TFSMs, function $next\_state_{\mathcal{S}}$ returns at most one state.}.

\begin{definition}
Given $s, s' \in S$, a timed input sequence $\alpha$ is merging for states $s$ and $s'$ if $next\_state_{\mathcal{S}}(s, \alpha) = next\_state_{\mathcal{S}}(s', \alpha) \neq \bot$.
\end{definition}

\begin{definition}\label{def:ss}
A timed input sequence $\alpha$ is \textit{synchronizing} for $\mathcal{S}$ if $\exists \bar{s} \in S$ such that for every $s \in S$ we have $next\_state_{\mathcal{S}}(s,\alpha) = \bar{s}$.
\end{definition}

Consider the behavior of TFSM $\mathcal{S}_1$ (Fig.~\ref{fig:spec_tfsm}) after applying the timed input sequence $\gamma=(i_1,2)(i_1,4)(i_1,6)$. Since $next\_state_{\mathcal{S}_1}(s_0,\gamma) = s_2$,\\ $next\_state_{\mathcal{S}_1}(s_1,\gamma) = s_2$ and $next\_state_{\mathcal{S}_1}(s_2,\gamma) = s_2$, $\gamma$ is a synchronizing sequence for $\mathcal{S}_1$. Since the next state of the deterministic TFSM is uniquely defined (Proposition~\ref{prop:next_state}), we conclude that the existence of a merging sequence for each state pair implies the existence of an SS for the TFSM.

\begin{definition}
\label{def:hs}
A timed input sequence $\alpha$ is \textit{homing} for $\mathcal{S}$ if for each state pair $\{s_1, s_2\}$ of $\mathcal{S}$ the following holds: $timed\_out_{\mathcal{S}}(s_1, \alpha) = timed\_out_{\mathcal{S}}(s_2, \alpha)$ implies that $next\_state_{\mathcal{S}}(s_1, \alpha)$ $ = next\_state_{\mathcal{S}}(s_2, \alpha) \neq \bot$.
\end{definition}

According to the definition of synchronizing and homing sequences and due to Proposition~\ref{prop:next_state}, we conclude that any synchronizing sequence remains a homing sequence for a TFSM with output delays. In the context of complete deterministic FSMs, any right/left prolongation of a homing sequence remains homing (see Section~\ref{sec:background}). However, we show that it is not the case even for deterministic Timed FSMs with output delays. Consider the behavior of $\mathcal{S}_1$ (Fig.~\ref{fig:spec_tfsm}) after applying $\alpha_1=(i_1,2)$. Since $timed\_out_{\mathcal{S}_1}(s_0,\alpha_1)=\{(o_1,6)\}$, $timed\_out_{\mathcal{S}_1}(s_1,\alpha_1)=\{(o_2,5)\}$ and $timed\_out_{\mathcal{S}_1}(s_2,\alpha_1)=\{(o_3,3)\}$, we conclude, that $\alpha_1$ is an HS for $\mathcal{S}_1$. Now, consider the behavior of $\mathcal{S}_1$ on $\alpha_1'=(i_1,2)(i_2,4)$ which is the right prolongation of $\alpha_1$. Since $next\_state_{\mathcal{S}_1}(s_0, \alpha_1')=s_0$ and $next\_state_{\mathcal{S}_1}(s_1, \alpha_1')=s_1$, $next\_state_{\mathcal{S}_1}(s_0,\alpha_1') \neq next\_state_{\mathcal{S}_1}(s_1,\alpha_1')$. At the same time, $\alpha_1'$ induces run $r_{s_0} = s_0 \xrightarrow[o_1,6]{i_1,2} s_1 \xrightarrow[o_2,5]{i_2,4} s_0$ at state $s_0$ and induces run $r_{s_1} = s_1 \xrightarrow[o_2,5]{i_1,2} s_2 \xrightarrow[o_1,6]{i_2,4} s_1$ at state $s_1$; therefore $timed\_out_{\mathcal{S}_1}(s_0,\alpha_1')=timed\_out_{\mathcal{S}_1}(s_1,\alpha_1')=\{(o_2,5)(o_1,6)\}$. Thus, $\alpha_1'$ is not homing for $\mathcal{S}_1$. The fact that the right prolongation of a homing sequence might stop being homing is also true for non-observable FSMs~\cite{qbf_homing}. However, the left prolongation of a homing sequence remains a homing sequence for complete non-observable FSMs; indeed, any finite amount of inputs can be added before a homing sequence without destroying this property. At the same time, this is not the case for the TFSM with output delays. As an example, consider again TFSM $\mathcal{S}_1$ (Fig.~\ref{fig:spec_tfsm}), $\alpha_2 = (i_2,2)$ is an HS for $\mathcal{S}_1$ since $timed\_out_{\mathcal{S}_1}(s_0, \alpha_2) = \{(o_3, 3)\}$, $timed\_out_{\mathcal{S}_1}(s_1, \alpha_2) = \{(o_2, 3)\}$ and $timed\_out_{\mathcal{S}_1}(s_2, \alpha_2) = \{(o_1, 4)\}$. Now consider $\alpha'_2 = (i_2,4)(i_2,6)$ which is the left prolongation of $\alpha_2$, $\alpha'_2$ induces run $r'_{s_0} = s_0 \xrightarrow[o_1,8]{i_2,4} s_0 \xrightarrow[o_3,7]{i_2,6} s_0$ at state $s_0$ and induces run $r'_{s_2} = s_2 \xrightarrow[o_3,7]{i_2,4} s_2 \xrightarrow[o_1,8]{i_2,6} s_1$ at state $s_2$; therefore $timed\_out_{\mathcal{S}_1}(s_0,\alpha'_2)=timed\_out_{\mathcal{S}_1}(s_2,\alpha'_2)=\{(o_3,7)(o_1,8)\}$ while $next\_state_{\mathcal{S}_1}(s_0,\alpha'_2) \neq next\_state_{\mathcal{S}_1}(s_2,\alpha'_2)$. Thus, $\alpha'_2$ is not a homing sequence.

These examples demonstrate that the assumption that if two distinct states have been homed at some point, they will remain homed for any prolongation, is not necessarily true for TFSMs. In fact, such behavior can happen only for very specific timed input sequences.

\begin{lemma}\label{lemma:prolongation_not_hs}
Let $\mathcal{S}$ be a TFSM with output delays and $\alpha$ be a homing sequence for $\mathcal{S}$. If a prolongation $\alpha' = (i_1,t_1)\dots(i_n,t_n)$ of $\alpha$ is not a homing sequence, then there exist $\ell, r \in \{ 1, \dots, n \}$ such that $t_r - t_{\ell} \in \mathbb{N}^{+}_0$.
\end{lemma}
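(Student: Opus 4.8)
The plan is to argue by contrapositive on the timestamps: assume that for \emph{every} pair of indices $\ell, r \in \{1, \dots, n\}$ we have $t_r - t_\ell \notin \mathbb{N}^+_0$, and show that under this assumption $\alpha'$ must still be homing. The key observation is that the timestamps $t_1, \dots, t_n$ of the prolongation $\alpha'$ lie in pairwise distinct fractional classes modulo $1$: if $t_r - t_\ell$ were a (nonnegative) integer for some $\ell \le r$, we would be in the excluded case, so all the $t_k$ have distinct fractional parts, and in particular all of them are non-integer whenever $n \ge 2$ (taking $\ell = r$ gives nothing, but any genuinely repeated or integer-shifted pair is forbidden). Consequently, along \emph{any} run induced by $\alpha'$ from any starting state, each transition fires with a delay $d_k \in D \subseteq \mathbb{N}^+$, so the output timestamp of the $k$-th transition is $\tau_k = t_k + d_k$, whose fractional part equals that of $t_k$. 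Since the fractional parts of the $t_k$ are pairwise distinct, the fractional parts of the $\tau_k$ are pairwise distinct as well, hence the $\tau_k$ are pairwise distinct regardless of which state we start from and which transitions are taken.

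The next step is to exploit this distinctness to recover the ``untimed'' ordering of outputs. Because the $\tau_k$ are pairwise distinct, the sorting permutation in the definition of $timed\_out_{\mathcal{S}}$ is unique, so $timed\_out_{\mathcal{S}}(s, \alpha')$ is a singleton for every state $s$, and moreover the fractional part of $\tau_k$ — which is exactly the fractional part of $t_k$, a quantity that does \emph{not} depend on the run — identifies which input index $k$ produced each timed output in the sorted sequence. Therefore, from the single timed output sequence in $timed\_out_{\mathcal{S}}(s, \alpha')$ one can read off, for each $k$, the output symbol $o_k$ produced by the $k$-th transition of the run from $s$. Now suppose $timed\_out_{\mathcal{S}}(s_1, \alpha') = timed\_out_{\mathcal{S}}(s_2, \alpha')$ for two states $s_1, s_2$; the previous remark shows the two runs produce the same output symbol at every position $k$, i.e.\ the \emph{untimed} output responses along $\alpha$ (the prefix of $\alpha'$ of length $|\alpha|$) coincide — and in fact the output responses along the whole of $\alpha'$ coincide position-by-position.

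From here the conclusion follows by using that $\alpha$ is homing. Writing $\alpha$ for the length-$m$ prefix of $\alpha'$ (with $m \le n$): the runs from $s_1$ and $s_2$ agree on the first $m$ output symbols, and by the same distinctness argument applied to the prefix timestamps $t_1, \dots, t_m$, this forces $timed\_out_{\mathcal{S}}(s_1, \alpha) = timed\_out_{\mathcal{S}}(s_2, \alpha)$; since $\alpha$ is homing, $next\_state_{\mathcal{S}}(s_1, \alpha) = next\_state_{\mathcal{S}}(s_2, \alpha) = \bar s \ne \bot$. Then both runs continue from the same state $\bar s$ under the same remaining timed inputs, and since $\mathcal{S}$ is deterministic (Proposition~\ref{prop:next_state}) they reach the same state, giving $next\_state_{\mathcal{S}}(s_1, \alpha') = next\_state_{\mathcal{S}}(s_2, \alpha') \ne \bot$. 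Hence $\alpha'$ is homing, contradicting the hypothesis.

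\textbf{Main obstacle.} The delicate point is the bookkeeping around the case $\ell = r$ and around timestamps that are themselves integers: the statement only excludes \emph{differences} $t_r - t_\ell \in \mathbb{N}^+_0$, so I must be careful that ``all fractional parts distinct'' is exactly what the negated hypothesis yields (a single integer-valued $t_k$ is fine in isolation, but two of them, or one integer-shifted pair, are not), and then verify that ``distinct fractional parts of the $t_k$'' genuinely propagates to ``distinct $\tau_k$'' using only $d_k \in \mathbb{N}^+$. The rest — uniqueness of the sorted permutation, reading off output symbols by fractional part, and invoking determinism plus the homing property of the prefix — is routine once that fractional-part invariant is nailed down. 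A secondary subtlety is ensuring the argument is insensitive to the choice of run (non-uniqueness could only come from the permutation in $timed\_out$, and that is precisely what distinctness of the $\tau_k$ removes), and that the length-$m$ prefix timestamps inherit the distinctness property from the full sequence, which is immediate.
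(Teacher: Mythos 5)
Your proposal is correct and is essentially the paper's argument run in contrapositive form: where the paper exhibits the integer timestamp difference directly from the two differing sorting permutations of the output multisets, you assume all pairwise differences are non-integer, deduce that the sorting permutations must coincide (since each $\tau_k = t_k + d_k$ with $d_k \in \mathbb{N}^{+}$ inherits the fractional part of $t_k$, and these are pairwise distinct), and then transfer the homing property of the prefix using determinism --- the same key computation $t_{j_m}+d_{j_m}=t_{k_m}+d'_{k_m}\Rightarrow t_{j_m}-t_{k_m}\in\mathbb{Z}$ in both cases. The only blemish is the throwaway claim that pairwise-distinct fractional parts make \emph{all} timestamps non-integer for $n\geq 2$ (one of them may still be an integer); this is false but never used, as your argument relies only on pairwise distinctness.
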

\begin{proof}
Since $\alpha'$ is not an HS for $\mathcal{S}$, there exist states $s$ and $s'$ of $\mathcal{S}$ such that $timed\_out_{\mathcal{S}}(s, \alpha') = timed\_out_{\mathcal{S}}(s', \alpha')$ while $next\_state_{\mathcal{S}}(s, \alpha') \neq next\_state_{\mathcal{S}}(s', \alpha')$. Since $\alpha$ is a proper prefix of $\alpha'$, we conclude that $next\_state_{\mathcal{S}}(s, \alpha) \neq next\_state_{\mathcal{S}}(s', \alpha)$. Therefore, due to the fact that $\alpha$ is a homing sequence, $timed\_out_{\mathcal{S}}(s, \alpha) \neq timed\_out_{\mathcal{S}}(s', \alpha)$. 
$\alpha'$ induces run $r = s \xrightarrow[o_1, t_1 + d_1]{i_1, t_1} \dots s_{m-1} \xrightarrow[o_m, t_m + d_m]{i_m, t_m} s_m \dots s_{n-1} \xrightarrow[o_n, t_n + d_n]{i_n, t_n} s_n$ at state $s$ and run $r' = s' \xrightarrow[o'_1, t_1 + d'_1]{i_1, t_1} \dots s'_{m-1} \xrightarrow[o'_m, t_m + d'_m]{i_m, t_m} s'_m \dots s'_{n-1} \xrightarrow[o'_n, t_n + d'_n]{i_n, t_n} s'_n$ at state $s'$.

Since $r\downarrow_{O} = (o_1, t_1 + d_1)\dots(o_n, t_n + d_n)$, $timed\_out_{\mathcal{S}}(s, \alpha') = \{ (o_{j_1}, t_{j_1} + d_{j_1})\dots(o_{j_n}, t_{j_n} + d_{j_n}) \}$ is a permutation $j$ of $r\downarrow_{O}$ such that $t_{j_1} + d_{j_1} \leq \dots \leq t_{j_n} + d_{j_n}$. Similarly, since $r'\downarrow_{O} = (o'_1, t_1 + d'_1)\dots(o'_n, t_n + d'_n)$, $timed\_out_{\mathcal{S}}(s', \alpha')$ $ = \{(o'_{k_1}, t_{k_1} + d'_{k_1})\dots(o'_{k_n}, t_{k_n} + d'_{k_n})\}$ is a permutation $k$ of $r'\downarrow_{O}$ such that $t'_{k_1} + d'_{k_1} \leq \dots \leq t'_{k_n} + d'_{k_n}$.

As $timed\_out_{\mathcal{S}}(s, \alpha') = timed\_out_{\mathcal{S}}(s', \alpha')$, it holds that  $o_{j_1} = o_{k_1}$, \dots, $o_{j_n} = o_{k_n}$ and $t_{j_1} + d_{j_1} = t_{k_1} + d'_{k_1}$, \dots, $t_{j_n} + d_{j_n} = t_{k_n} + d'_{k_n}$. Since $timed\_out_{\mathcal{S}}(s, \alpha) \neq timed\_out_{\mathcal{S}}(s', \alpha)$, we conclude that $j$ and $k$ are different permutations. The latter implies $|t_{j_1} - t_{k_1}| = |d'_{k_1} - d_{j_1}| \in \mathbb{N}^+_0$, $|t_{j_2} - t_{k_2}| = |d'_{k_2} - d_{j_2}| \in \mathbb{N}^+_0$ \dots, $|t_{j_n} - t_{k_n}| = |d'_{k_n} - d_{j_n}| \in \mathbb{N}^+_0$. Therefore, there exist $r, \ell \in \{1,\dots,n \}:\ |t_{r} - t_{\ell}| \in \mathbb{N}^+_0$.
\end{proof}

Therefore, if a right/left prolongation of a homing sequence stops being homing, then there exist at least two timed inputs such that the difference between their timestamps is integer. Consider $\alpha = (i_1,t_1)\dots(i_n,t_n)$ with the following property: the difference between any two timestamps is non-integer, i.e., $t_j-t_k\not\in \mathbb{N}^{+}_0$ for every $j, k \in \{ 1, \dots, n \}$, $j\neq k$; we refer to such timed input sequence as \textit{non-integer}. The following corollary holds.
\begin{corollary}\label{corollary:non_integral_seq}
If $\alpha$ is a homing sequence for a TFSM $\mathcal{S}$, then any non-integer right/left prolongation of $\alpha$ remains homing.
\end{corollary}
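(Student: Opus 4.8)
The plan is to read Corollary~\ref{corollary:non_integral_seq} as the contrapositive of Lemma~\ref{lemma:prolongation_not_hs}, once we observe what being \emph{non-integer} excludes. If $\alpha' = (i_1,t_1)\dots(i_n,t_n)$ is non-integer, then $t_j - t_k \notin \mathbb{N}^{+}_{0}$ for all indices $j \neq k$; since $0 \in \mathbb{N}^{+}_{0}$, this forces the timestamps $t_1, \dots, t_n$ to be pairwise distinct, and no two of them may differ by an integer. Hence the conclusion of Lemma~\ref{lemma:prolongation_not_hs} -- that there are $\ell, r$ with $|t_r - t_\ell| \in \mathbb{N}^{+}_{0}$ -- is contradictory as soon as one checks that the two indices it produces are genuinely distinct. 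For a non-integer \emph{right} prolongation $\alpha'$ of the homing sequence $\alpha$ this is immediate: assuming $\alpha'$ is not homing, the proof of Lemma~\ref{lemma:prolongation_not_hs} supplies states $s,s'$, runs $r,r'$ of $\alpha'$, and permutations $j,k$ of $r\downarrow_{O}$, $r'\downarrow_{O}$ realising $timed\_out_{\mathcal{S}}(s,\alpha') = timed\_out_{\mathcal{S}}(s',\alpha')$, together with the fact that $j \neq k$ (if $j = k$, the delays along $r$ and $r'$ would agree coordinatewise, hence so would the timed outputs of the prefix $\alpha$, contradicting $timed\_out_{\mathcal{S}}(s,\alpha) \neq timed\_out_{\mathcal{S}}(s',\alpha)$); picking a position $m$ with $j_m \neq k_m$ yields distinct indices $j_m, k_m$ of $\alpha'$ with $|t_{j_m} - t_{k_m}| \in \mathbb{N}^{+}_{0}$, contradicting non-integrality.

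For a non-integer \emph{left} prolongation I would rerun the same argument with ``prefix'' replaced by ``suffix''. Write $\alpha' = \beta\,\alpha$, where $\beta$ consists of the first, say, $p$ timed inputs of $\alpha'$, and suppose $\alpha'$ is not homing: there are states $s,s'$ with $timed\_out_{\mathcal{S}}(s,\alpha') = timed\_out_{\mathcal{S}}(s',\alpha')$ but $next\_state_{\mathcal{S}}(s,\alpha') \neq next\_state_{\mathcal{S}}(s',\alpha')$. Let $p_s$ and $p_{s'}$ be the states reached from $s$ and $s'$ after $\beta$; by Proposition~\ref{prop:next_state} determinism forces $p_s \neq p_{s'}$, and since $\alpha$ is homing we get $timed\_out_{\mathcal{S}}(p_s,\alpha) \neq timed\_out_{\mathcal{S}}(p_{s'},\alpha)$. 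The timed output response of $\alpha$ read from $p_s$ (resp.\ $p_{s'}$) is determined, up to the common additive timestamp shift contributed by $\beta$, by the output labels and delays of the last $n - p$ transitions of $r$ (resp.\ $r'$). So if the permutations $j,k$ witnessing $timed\_out_{\mathcal{S}}(s,\alpha') = timed\_out_{\mathcal{S}}(s',\alpha')$ were equal, those labels and delays would coincide coordinatewise and $timed\_out_{\mathcal{S}}(p_s,\alpha)$ would equal $timed\_out_{\mathcal{S}}(p_{s'},\alpha)$, a contradiction; hence $j \neq k$, and a disagreeing position again produces two distinct indices of $\alpha'$ with integer timestamp difference. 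Either way, $\alpha'$ is homing.

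The main obstacle I anticipate is the left-prolongation bookkeeping: being precise that the timed output response of the suffix $\alpha$, read off the tail of an already-running run, depends only on that tail's output labels and delays -- the absolute timestamps of $\beta$ enter only as a single additive constant that cancels when the two runs are compared -- so that equality of the two sorting permutations genuinely forces the tail data to agree. The rest is a direct invocation of Lemma~\ref{lemma:prolongation_not_hs} plus the elementary remark that a non-integer timed input sequence has no two distinct positions whose timestamps are equal or differ by an integer.
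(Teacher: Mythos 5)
Your proof is correct and follows the paper's intended route: the corollary is meant to be read as the contrapositive of Lemma~\ref{lemma:prolongation_not_hs}, and the paper offers no further argument for it. You rightly notice --- and repair --- the two places where that lemma as literally stated falls short of delivering the corollary: its conclusion does not assert that the indices $\ell, r$ are distinct (so one must re-enter the lemma's proof and extract a position where the two sorting permutations disagree, exactly as you do), and its written proof only treats the case where $\alpha$ is a prefix of $\alpha'$, so the left-prolongation case genuinely needs the suffix variant you supply, with the timestamps of $\beta$ entering only as a common additive shift that cancels when the two runs are compared.
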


\subsection{Deriving a shortest SS/HS for a TFSM: truncated successor tree approach}\label{subsec:notation}

Let $\mathcal{S}$ be a TFSM, $s$ be a state of $\mathcal{S}$, $\alpha$ and $\alpha'$ be timed input sequences inducing the same sequence of transitions from state $s$. We say that such sequences $\alpha$ and $\alpha'$ are \textit{equivalent for state $s$}, denoted as $\alpha \sim_s \alpha'$. Similarly, if $\alpha$ and $\alpha'$ are equivalent for every state, we say that $\alpha$ and $\alpha'$ are \textit{equivalent for TFSM $\mathcal{S}$}, denoted as $\alpha \sim_{\mathcal{S}} \alpha'$. The following theorem holds.
\begin{theorem}\label{theorem:congruent_sequences}
Given non-integer timed input sequences $\alpha = (i_1,t_1)\dots(i_n,t_n)$ and $\alpha' = (i_1,t'_1)\dots(i_n,t'_n)$, if $\alpha \sim_{\mathcal{S}} \alpha'$, then $\alpha$ is synchronizing (homing) for $\mathcal{S}$ if and only if $\alpha'$ is synchronizing (homing) for $\mathcal{S}$.
\end{theorem}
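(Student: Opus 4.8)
The plan is to show that the synchronizing (resp. homing) property of a non-integer timed input sequence depends only on the sequence of transitions it induces from each state, and that two equivalent non-integer sequences induce ``the same'' runs up to the ordering of their outputs. The key observation is that $\alpha \sim_{\mathcal{S}} \alpha'$ means: for every state $s$, the sequences of transitions $s \stackrel{i_1,g_1/o_1,d_1}{\longrightarrow} s_1 \stackrel{i_2,g_2/o_2,d_2}{\longrightarrow} \dots$ induced by $\alpha$ and by $\alpha'$ coincide. In particular the state-trajectories coincide, so $next\_state_{\mathcal{S}}(s,\alpha) = next\_state_{\mathcal{S}}(s,\alpha')$ for all $s$, and also $\alpha \in Dom_{\mathcal{S}}(s) \iff \alpha' \in Dom_{\mathcal{S}}(s)$ (i.e. neither goes to $\bot$ iff the other does not). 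This immediately settles the synchronizing case by Definition~\ref{def:ss}: $\alpha$ is synchronizing iff there is $\bar s$ with $next\_state_{\mathcal{S}}(s,\alpha)=\bar s$ for all $s$, and the same condition holds verbatim for $\alpha'$.

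For the homing case I would compare, for a fixed state pair $\{s_1,s_2\}$, the timed outputs. The underlying multisets $\{(o_k, t_k + d_k) : k\}$ for $\alpha$ and $\{(o_k, t'_k + d_k) : k\}$ for $\alpha'$ differ only in the timestamps, since the induced transitions (hence the outputs $o_k$ and delays $d_k$) are identical for each starting state. The crucial claim is that for a \emph{non-integer} sequence, the permutation $j$ that sorts $(t_{j_1}+d_{j_1}) \leq \dots \leq (t_{j_n}+d_{j_n})$ is uniquely determined and is the same for $\alpha$ and $\alpha'$ when restricted to runs from $s_1$ versus $s_2$ — or more precisely, the comparison $timed\_out_{\mathcal{S}}(s_1,\alpha) = timed\_out_{\mathcal{S}}(s_2,\alpha)$ holds iff $timed\_out_{\mathcal{S}}(s_1,\alpha') = timed\_out_{\mathcal{S}}(s_2,\alpha')$. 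To see this, note that if $timed\_out_{\mathcal{S}}(s_1,\alpha) = timed\_out_{\mathcal{S}}(s_2,\alpha)$ but the analogous equality fails for $\alpha'$, then (as in the proof of Lemma~\ref{lemma:prolongation_not_hs}) the two runs from $s_1$ and $s_2$ under $\alpha$ use genuinely different delay assignments $d_k \neq d'_k$ at some positions yet produce matching sorted output-time sequences; matching forces differences of the form $t_{j_a} - t_{k_a} = d'_{k_a} - d_{j_a} \in \mathbb{Z}$, contradicting non-integrality unless $j = k$ as permutations — but then the delays must coincide positionally, and in that case the sorted order is governed purely by the $t_k$'s plus fixed integer shifts, which behaves identically under the non-integer perturbation from $\alpha$ to $\alpha'$. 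I would formalize this by arguing that, since $\alpha$ and $\alpha'$ induce the same transitions from $s_1$ and from $s_2$, the only way $timed\_out$ can change between $\alpha$ and $\alpha'$ is through a reordering, and any reordering that newly creates or destroys the equality $timed\_out_{\mathcal{S}}(s_1,\cdot) = timed\_out_{\mathcal{S}}(s_2,\cdot)$ would, by the same arithmetic as in Lemma~\ref{lemma:prolongation_not_hs}, require an integer gap $t_r - t_\ell \in \mathbb{N}^+_0$, which is excluded.

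The main obstacle I anticipate is making the last step fully rigorous: one must carefully track that $timed\_out_{\mathcal{S}}(s_1,\alpha) = timed\_out_{\mathcal{S}}(s_2,\alpha)$ entails not only equality of output letters in sorted order but also equality of the corresponding timestamps, and then show that replacing $t_k$ by $t'_k$ (a non-integer perturbation preserving the integer parts of pairwise gaps' relevant structure) cannot flip any of the strict comparisons $t_a + d_a < t_b + d_b$ unless $t_a + d_a = t_b + d_b$ already held — and two timed outputs with equal timestamps may occur simultaneously, so both orderings are already in $timed\_out$. Equivalently, the set of valid sortings (hence $timed\_out_{\mathcal{S}}(s,\cdot)$) is determined exactly by which pairs $(a,b)$ satisfy $t_a + d_a = t_b + d_b$ versus strict inequality; for a non-integer sequence, $t_a + d_a = t_b + d_b$ is equivalent to $t_a - t_b = d_b - d_a \in \mathbb{Z}$, i.e. $t_a - t_b \in \mathbb{N}^+_0$ with the matching delay difference. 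Since $\alpha \sim_{\mathcal{S}} \alpha'$ forces the $d$'s to match for each starting state and forces the transition sequences (hence which pairs have $t_a - t_b$ integral — though here care is needed, as this is a property of the sequence, not the state), one concludes the equality/inequality pattern is identical for $\alpha$ and $\alpha'$, so $timed\_out$ sets coincide and the homing condition transfers. I would close by combining the two cases into the stated equivalence.
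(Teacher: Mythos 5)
Your proposal is correct and takes essentially the same route as the paper's proof: the synchronizing case follows immediately because equivalent sequences induce identical state trajectories, and the homing case rests on the same key observation (reusing the arithmetic of Lemma~\ref{lemma:prolongation_not_hs}) that for a non-integer sequence, equality of the timed output sets forces the sorting permutations to coincide and hence the outputs and delays to agree positionally, a condition that transfers between $\alpha$ and $\alpha'$ via $\sim_{\mathcal{S}}$. Your extra worry about ties $t_a+d_a=t_b+d_b$ is moot, since non-integrality already rules them out, but this does not affect correctness.
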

\begin{proof}
1. Suppose that $\alpha$ is an SS for $\mathcal{S}$, but $\alpha'$ is not an SS for $\mathcal{S}$. There exist states $s, s' \in S$ such that $next\_state_{\mathcal{S}}(s, \alpha) = next\_state_{\mathcal{S}}(s', \alpha)$, but, $next\_state_{\mathcal{S}}(s, \alpha') \neq next\_state_{\mathcal{S}}(s', \alpha')$. However, $\alpha \sim_{s} \alpha'$ and $\alpha \sim_{s'} \alpha'$, therefore, $next\_state_{\mathcal{S}}(s, \alpha) = next\_state_{\mathcal{S}}(s, \alpha')$ and $next\_state_{\mathcal{S}}(s', \alpha) = next\_state_{\mathcal{S}}(s', \alpha')$, thus $next\_state_{\mathcal{S}}(s, \alpha') $ $= next\_state_{\mathcal{S}}(s', \alpha')$, it is a contradiction, therefore, $\alpha'$ is also an SS.

2. Suppose that $\alpha$ is an HS for $\mathcal{S}$, but $\alpha'$ is not an HS for $\mathcal{S}$. Then there exist states $s, s' \in S$ such that $timed\_out_{\mathcal{S}}(s, \alpha') = timed\_out_{\mathcal{S}}(s', \alpha')$ and $next\_state_{\mathcal{S}}(s, \alpha') \neq next\_state_{\mathcal{S}}(s', \alpha')$, $\alpha'$ induces run $r = s \xrightarrow[o_1, t'_1 + d_1]{i_1, t'_1} \dots s_{n-1} \xrightarrow[o_n, t'_n + d_n]{i_n, t'_n} s_n$ at state $s$ and run $r' = s' \xrightarrow[o'_1, t'_1 + d'_1]{i_1, t'_1} \dots s'_{n-1} \xrightarrow[o'_n, t'_n + d'_n]{i_n, t'_n} s'_n$ at state $s'$. Since $\alpha'$ is a non-integer timed input sequence, $timed\_out_{\mathcal{S}}(s, \alpha') = timed\_out_{\mathcal{S}}(s', \alpha')$ if and only if $o_1 = o'_1$, \dots, $o_n = o'_n$ and $d_1 = d'_1$, \dots, $d_n = d'_n$ (see the proof of  Lemma~\ref{lemma:prolongation_not_hs}). As $\alpha \sim_{\mathcal{S}} \alpha'$, it holds that $next\_state_{\mathcal{S}}(s, \alpha) \neq next\_state_{\mathcal{S}}(s', \alpha)$ and $timed\_out_{\mathcal{S}}(s, \alpha') \neq timed\_out_{\mathcal{S}}(s', \alpha')$. Moreover, runs induced by $\alpha$ at states $s$ and $s'$ are the following: $r = s \xrightarrow[o_1, t_1 + d_1]{i_1, t_1} \dots \xrightarrow[o_n, t_n + d_n]{i_n, t_n} s_n$ and $r' = s' \xrightarrow[o_1, t_1 + d_1]{i_1, t_1} \dots \xrightarrow[o_n, t_n + d_n]{i_n, t_n} s'_n$. Thus, $timed\_out_{\mathcal{S}}(s, \alpha) = timed\_out_{\mathcal{S}}(s', \alpha)$, it is a contradiction with the fact that $\alpha$ is an HS.
\end{proof}

Theorem~\ref{theorem:congruent_sequences} claims that in order to derive a homing (synchronizing) sequence for a TFSM with output delays, it is sufficient to explore only non-equivalent timed input sequences. Let $s \in S$, $(i, t)$ be a timed input and $(o, d) \in O \times D$; we say that $s' \in S$ is $(i,t)/\{(o,t+d)\}$-\textit{successor} of $s$ if $s' = next\_state_{\mathcal{S}}(s, (i, t))$ and $\{(o, t+d)\} = timed\_out_{\mathcal{S}}(s, (i,t))$, written as $s' = (i,t)/\{(o,t+d)\}$-$succ(s)$. Consider, for example, TFSM $\mathcal{S}_2$ (Fig.~\ref{fig:fsm_A}). Since $\mathcal{S}_2$ has the transition $s_0\stackrel{i_2,[2, 4)/o_2,1}{\longrightarrow}s_1$, we conclude that $s_1 = next\_state_{\mathcal{S}_2}(s_0,(i_2,2.5))$ and $\{(o_2,3.5)\} = timed\_out_{\mathcal{S}_2}(s_0,(i_2,2.5))$, thus, $s_1=(i_2,2.5)/\{(o_2,3.5)\}$-$succ(s_0)$.
Function $(i,t)/\{(o,t+d)\}$-$succ: S \rightarrow S$ can be extended to operate over the subsets of states, i.e., $(i,t)/\{(o,t+d)\}$-$succ: 2^S \rightarrow 2^S$. In particular, let $S_1, S_2$ be subsets of $S$, we say that $S_2 = (i,t)/\{(o,t+d)\}$-$succ(S_1)$ if and only if for every state $s' \in S_2$ there exists a state $s \in S_1$ such that $s' = (i,t)/\{(o,t+d)\}$-$succ(s)$. For example, for TFSM $\mathcal{S}_2$ shown in Fig.~\ref{fig:fsm_A}, $\{ s_0, s_1 \} = (i_2,2.5)/\{(o_2,3.5)\}$-$succ(\{ s_0, s_1, s_2 \})$. 

In this section, we deal only with weakly-complete deterministic TFSMs, i.e., for every two states $s$ and $s'$ of TFSM $\mathcal{S}$ it holds that $Dom_{\mathcal{S}}(s) = Dom_{\mathcal{S}}(s')$. Formally, let $i$ be an input, transitions $s \stackrel{i,g_1/o_1,d_1}{\rightarrow} s_1$, \dots, $s \stackrel{i,g_m/o_m,d_m}{\rightarrow} s_m$ are defined at state $s$ and transitions $s' \stackrel{i,g'_1/o'_1,d'_1}{\rightarrow} s'_1$, \dots, $s' \stackrel{i,g'_k/o'_k,d'_k}{\rightarrow} s'_k$ are defined at state $s'$; the weakly-complete property means that $g_1 \cup \dots \cup g_m = g'_1 \cup \dots \cup g'_k$. For further computation, we denote as $U_i$ and $V_i$ the left and right boundaries of timed interval $g_1 \cup \dots \cup g_m$. Given a weakly-complete TFSM $\mathcal{S}$ with $n$ states, in order to derive an HS for $\mathcal{S}$, we propose to construct a truncated successor tree (TST), such that each shortest HS, up to the equivalence, is contained in it. The derivation of the tree as well as the proper truncating rules are presented in Algorithm~\ref{alg:hs_synthesis}.

\begin{algorithm}[!htb]
\small
\SetKwInOut{Input}{input}\SetKwInOut{Output}{output}
\Input{A weakly-complete TFSM $\mathcal{S} = (S, I, O, G, D, h_S)$ with output delays
}
\Output{Message ``There is no HS for $\mathcal{S}$'' or a shortest HS for $\mathcal{S}$}
\textbf{Step 1.} Derivation of a truncated successor tree for $\mathcal{S}$ \\

The root of the tree is the node labeled with set $S$ while each node of the successor tree is labeled with a set of subsets of states; edges of the tree are labeled with timed inputs. Given a non-terminal node labeled with a set $P$ of subsets of states, at level $j, j \geq 0$, and an input $i$ with the minimal left $U_i$ and maximal right $V_i$ boundaries. There is the edge labeled with timed input $(i, k+2^{-j})$ where $k$ is an every integer such that $k \in [U_i, V_i)$, to node labeled with set $Q$ of subsets of states, at level $j+1$ if and only if $S_Q \in Q$, where $S_Q = (i,k+2^{-j})/\{(o,k+2^{-j}+d)\}$--$succ(S_P)$ for some $S_P \in S$ and some $(o, d) \in O \times D$.

Given a node at level $\ell$ labeled with set $P$, the node is terminal if one of the following conditions hold:

\textbf{Rule 1.} $P$ contains only singletons.

\textbf{Rule 2.} $P$ contains a set $R$ without singletons that labels a node at a level $j, j \leq \ell$.

\textbf{Step 2.} \If {the successor tree has no node truncated using the Rule 1}
{
    \Return {the message ``There is no HS for $\mathcal{S}$''}
}

// There is a node in the derived TST truncated using Rule 1.\\
Choose a node $P$ truncated with Rule 1 with the minimal depth. \\
// The path from the root to node $P$ is labeled with $(i_1, \delta_1) (i_2, \delta_2) \dots (i_{\ell}, \delta_{\ell})$

\Return {$(i_1, \delta_1)(i_2, \delta_1 + \delta_2)\dots(i_{\ell}, \delta_1 + \delta_2 + \dots + \delta_{\ell})$
}
\caption{Deriving a shortest HS for a weakly-complete TFSM with output delays}\label{alg:hs_synthesis}
\end{algorithm}

\begin{theorem}[Correctness of Algorithm~\ref{alg:hs_synthesis}]\label{theorem:correctness_of_hs_synthesis_alg}
A weakly-complete deterministic TFSM $\mathcal{S}$ has a homing sequence if and only if the truncated successor tree derived by Algorithm~\ref{alg:hs_synthesis} has a node truncated using Rule 1.
\end{theorem}
\textbf{Sketch of the Proof} (see the proof in the Appendix). First of all, we prove that the TST returned by Algorithm~\ref{alg:hs_synthesis} is finite for every weakly-complete deterministic TFSM. \\
$\Leftarrow$ We show that if a sequence $(i_1,\delta_1)(i_2,\delta_2)\dots(i_{\ell},\delta_{\ell})$ labels the path from the root to a node truncated using Rule 1, then the timed input sequence $(i_1,\delta_1)(i_2,\delta_1+\delta_2)\dots(i_{\ell},\delta_1+\delta_2+\dots+\delta_{\ell})$ is a homing sequence.\\
$\Rightarrow$ We show that for every shortest HS $\alpha=(i_1,t_1)(i_2,t_2)\dots(i_n,t_n)$ there exists the sequence $(i_1,t'_1)(i_2,t'_2)\dots(i_n,t'_n)$ such that $\alpha'$ is a homing sequence and the sequence $\alpha'=(i_1,t'_1)(i_2,t'_2-t'_1)\dots(i_n,t'_n-t'_{n-1})$ labels the path from the root to the node truncated using Rule 1.

\begin{figure}[ht]
\centering
\begin{minipage}{.45\textwidth}
  \centering
  \begin{tikzpicture}[>=stealth',node distance=4.5cm,semithick,auto, scale=1.0, every node/.style={scale=0.7}]
				\node[state, minimum size=1cm]	(s0)
				    {$s_0$};
                \node[state, right of=s0,minimum size=1cm]	(s1)
				    {$s_1$};
				\node[state, right of=s1,minimum size=1cm]	(s2)
				    {$s_2$};

				{
				\path[->] 
				
                (s0) edge [loop above, align=left] node {
                        $i_1, [2, 3) / o_2, 1$
                        } (s0)
                (s0) edge [midway, above, align=left] node {
                        $i_2, [2, 4) / o_2, 1$
                        } (s1)
                (s1) edge [loop above, align=left] node {
                        $i_1, [2, 3) / o_2, 1$
                        } (s1)
                (s1) edge [bend left, above, align=left] node {
                        $i_2, [2, 4) / o_2, 1$
                        } (s2)
                (s2) edge [midway, below, align=left] node {
                        $i_1, [2, 3) / o_1, 2$
                        } (s1)
                (s2) edge [bend left, below, align=left] node {
                        $i_2, [2, 4) / o_2, 1$
                        } (s0)
					;
				}
\end{tikzpicture}

  \caption{TFSM $\mathcal{S}_2$}
  \label{fig:fsm_A}
\end{minipage}%
\hfill
\begin{minipage}{.45\textwidth}
  \centering
  \begin{tikzpicture}[node distance = 1.7cm, auto,scale=0.68, every node/.style={scale=0.68}]
\tikzstyle{vecArrow} = [thick, decoration={markings,mark=at position
   1 with {\arrow[semithick]{open triangle 60}}},
   double distance=1.4pt, shorten >= 5.5pt,
   preaction = {decorate},
   postaction = {draw,line width=1.4pt, white,shorten >= 4.5pt}]
\tikzstyle{innerWhite} = [semithick, white,line width=1.4pt, shorten >= 4.5pt]
\tikzstyle{decision} =
        [
                diamond,
                draw,
                fill = green!20,
                text width = 6em,
                text badly centered,
                node distance = 2cm,
                inner sep = 0pt
        ]
\tikzstyle{block} =
        [
                rectangle,
                draw,
                text width = 5em,
                text centered,
                rounded corners,
                minimum height = 2em
        ]
\tikzstyle{line} =
        [
                draw,
                -latex'
        ]
\tikzstyle{cloud} =
        [
                draw,
                ellipse,
                fill = red!20,
                node distance = 4cm,
                minimum height = 2em
        ]
\tikzstyle{suite}=[->,>=stealth’,thick,rounded corners=4pt]

        \node [block] (s0) {
            $\overline{s_0, s_1, s_2}$
        };
        
        \node [below of = s0] (i0) {};
        \node [left of = i0] (i0_l) {};
        \node [left of = i0_l] (i0_ll) {};
        
        \node [block, left of = i0_ll] (s1) {
            $\overline{s_1}, \overline{s_0, s_1}$
        };
        \node [block, right of = i0_l] (s2) {
            $\overline{s_0, s_1, s_2}$
        };
        \node [below of = i0] (i1) {};
        \node [left of = i1] (i1_l) {};
        \node [left of = i1_l] (i1_ll) {};
        \node [block, left of = i1_ll] (s3) {
            $\overline{s_0, s_1}$
        };
        \node [block, right of = i1_l] (s4) {
            $\overline{s_1, s_2}$
        };
        \node [below of = i1] (i2) {};
        \node [left of = i2] (i2_l) {};
        \node [left of = i2_l] (i2_ll) {};
        \node [block, accepting, left of = i2_ll] (s5) {
            $\overline{s_1}$
        };
        \node [block, right of = i2_l] (s6) {
            $\overline{s_0, s_2}$
        };
        
        \draw [thick, ->] (s0) -- (s1) node[midway, above=3pt] {$(i_1, 2.5)$};
        \draw [thick, ->] (s0) -- (s2) node[midway, right] {
            $(i_2, 2.5)$
        };
        \draw [thick, ->] (s1) -- (s3) node[midway, right] {$(i_1, 2.25)$};
        \draw [thick, ->] (s1) -- (s4) node[midway, above=6pt] {
            $(i_2, 2.25)$
        };
        \draw [thick, ->] (s4) -- (s5) node[midway, above=6pt] {$(i_1, 2.125)$};
        \draw [thick, ->] (s4) -- (s6) node[midway, right] {
            $(i_2, 2.125)$
        };
        
        
\end{tikzpicture}

  \caption{Fragment of the TST for $\mathcal{S}_2$}
  \label{fig:tst_for_A}
\end{minipage}
\end{figure}

As an example of the truncated successor tree derivation and the application of Algorithm~\ref{alg:hs_synthesis}, we consider TFSM $\mathcal{S}_2$ shown in Fig.~\ref{fig:fsm_A} and the corresponding fragment of the tree shown in Fig.~\ref{fig:tst_for_A}. TFSM $\mathcal{S}_2$ has three states: $s_0$, $s_1$ and $s_2$, therefore, the root is labeled with set $\{s_0, s_1, s_2\}$ denoted as $\overline{s_0, s_1, s_2}$. Since $next\_state_{\mathcal{S}_2}(s_0, (i_1, 2.5)) = s_0$, $next\_state_{\mathcal{S}_2}(s_1, (i_1, 2.5)) = s_1$ and $timed\_out_{\mathcal{S}_2}(s_0, (i_1, 2.5)) = timed\_out_{\mathcal{S}_2}(s_1, (i_1, 2.5)) = \{(o_2, 3.5)\}$, $s_0$ and $s_1$ are found in the same block for $(i_1, 2.5)$-successor. Otherwise, due to the fact that $next\_state_{\mathcal{S}_2}(s_2, (i_1, 2.5)) = s_1$ and $timed\_out_{\mathcal{S}_2}(s_2, (i_1, 2.5)) = \{(o_1, 4.5)\}$, timed input $(i_1, 2.5)$ splits state $s_2$ from states $\{ s_0, s_1 \}$. In another branch, timed input $(i_2, 2.5)$ does not split states from $\overline{s_0, s_1, s_2}$, therefore the corresponding node is truncated using Rule 2. Continuing the same way, we conclude that sequence $(i_1,2.5)(i_2,2.25)(i_1,2.125)$ labels the path from the root, therefore $\alpha = (i_1,2.5)(i_2,4.75)(i_1,6.875)$ is homing for TFSM $\mathcal{S}_2$.

Due to the Definitions~\ref{def:ss} and~\ref{def:hs} of the homing and synchronizing sequences, in order to adapt Algorithm~\ref{alg:hs_synthesis} for the derivation of a shortest SS, we simply need to modify the truncating Rule 1, accordingly. Indeed, in this case, set $P$ should contain only one singleton, which ensures the merging of all initial states to one after the application of the sequence labeling the path to this node. As an example of the derivation of a shortest SS, we again consider TFSM $\mathcal{S}_2$. Repeating the corresponding discussion as with the homing sequence derivation for TFSM $\mathcal{S}_2$, we conclude that $\alpha = (i_1,2.5)(i_2,4.75)(i_1,6.875)$ is also a synchronizing (not only homing) sequence for TFSM $\mathcal{S}_2$, since $\alpha$ brings $\mathcal{S}_2$ from states $s_0, s_1, s_2$ to state $s_1$.
\section{Region FSM \& its properties}\label{sec:fsm_abstractions}
\subsection{Derivation of the region FSM}\label{subsec:1_fsm_abstraction}

To begin with, consider two machines -- TFSM $\mathcal{S}_3$ and FSM $\mathcal{M}_3$ in Figures \ref{fig:example_hs_tfsm} and \ref{fig:example_no_hs_fsm}. It is easy to check that $\mathcal{M}_3$ is constructed from $\mathcal{S}_3$ by ignoring all timed delays and guards and also $\mathcal{M}_3$ has no homing sequence. Although TFSM $\mathcal{S}_3$ has homing sequence $\alpha = (i_1, 1.5)(i_2, 3)$ in the view of Definition \ref{def:hs}. The FSM abstraction above ``forgets'' all timed parameters, i.e., timed guards and output delays of transitions, and the example thus demonstrates that we cannot simply erase all the information about time to compute an HS for a TFSM. To solve the problem, we need to proceed differently. That is the reason why we introduce a \emph{refined} FSM abstraction (\textit{region FSM}) of a TFSM and show that under certain assumptions, we can establish the correspondence between SSs and HSs for the TFSM and its region FSM.

\begin{figure}[ht]
\centering
\begin{minipage}{.6\textwidth}
  \centering
  \begin{tikzpicture}[>=stealth',node distance=3.7cm,semithick,auto, scale=0.8, every node/.style={scale=1}]
    \node[state, minimum size=1cm]  (s0) {$s_0$};
    \node[state, right of=s0,minimum size=1cm]  (s1) {$s_1$};
    \node[state, below of=s0,minimum size=1cm]  (s2) {$s_2$};
    \node[state, right of=s2,minimum size=1cm]  (s3) {$s_3$};

        {
        \path[->] 
        
    (s0) edge [midway, below, align=left] node {
        $i_1,[1,2)/o_1,5$
            } (s1)
    (s0) edge [loop above, above, align=left] node {
       $i_2,[1,2)/o_2,1$
            } (s0)
    (s1) edge [bend right, above, align=left] node {
        $i_1,[1,2)/o_1,1$
            } (s0)
    (s1) edge [loop above, above, align=left] node {
       $i_2,[1,2)/o_2,1$
            } (s1)
    (s2) edge [midway, above, align=left] node[rotate=90] {
        $i_2,[1,2)/o_2,1$
            } (s0)
    (s2) edge [midway, below, align=left] node {
        $i_1,[1,2)/o_1,5$
            } (s3)
    (s3) edge [bend right, above, align=left] node {
        $i_1,[1,2)/o_1,1$
            } (s2)
    (s3) edge [midway, below, align=left] node[rotate=90] {
        $i_2,[1,2)/o_2,1$
            } (s1)
        ;
    }
\end{tikzpicture}

  \caption{TFSM $\mathcal{S}_3$}
  \label{fig:example_hs_tfsm}
\end{minipage}%
\hfill
\begin{minipage}{.38\textwidth}
  \centering
  \begin{tikzpicture}[>=stealth',node distance=3.7cm,semithick,auto, scale=0.8, every node/.style={scale=1}]
    \node[state, minimum size=1cm]  (s0) {$s_0$};
    \node[state, right of=s0,minimum size=1cm]  (s1) {$s_1$};
    \node[state, below of=s0,minimum size=1cm]  (s2) {$s_2$};
    \node[state, right of=s2,minimum size=1cm]  (s3) {$s_3$};

        {
        \path[->] 
        
    (s0) edge [midway, below, align=left] node {
        $i_1/o_1$
            } (s1)
    (s0) edge [loop above, above, align=left] node {
       $i_2/o_2$
            } (s0)
    (s1) edge [bend right, above, align=left] node {
       $i_1/o_1$
            } (s0)
    (s1) edge [loop above, above, align=left] node {
       $i_2/o_2$
            } (s1)
    (s2) edge [midway, above, align=left] node[rotate=90] {
       $i_2/o_2$
            } (s0)
    (s2) edge [midway, below, align=left] node {
       $i_1/o_1$
            } (s3)
    (s3) edge [bend right, above, align=left] node {
       $i_1/o_1$
            } (s2)
    (s3) edge [midway, below, align=left] node[rotate=90] {
       $i_2/o_2$
            } (s1)
        ;
    }
\end{tikzpicture}

  \caption{FSM $\mathcal{M}_3$}
  \label{fig:example_no_hs_fsm}
\end{minipage}
\end{figure}

Given a TFSM $\mathcal{S} = (S, I, O, G, D, h_S)$, $i \in I$, $U_i$ and $V_i$ are minimal left and maximal right boundaries for input $i$ (see Section~\ref{subsec:notation}). Let $G_i = \{ g \in G\ |\ \exists\ s \stackrel{i,g/o,d}{\longrightarrow} s' \in h_S \}$ be the set of all timed intervals guarding the transitions labeled with $i$, $\overline{G_i}$ is obtained by the following procedure. Suppose that $G_i = \{ [u_1, v_1), [u_2, v_2), \dots, [u_n, v_n) \}$, and $p_1, p_2, \dots, p_{x}$, where $x \leq 2n$, are boundary points from $G_i$ arranged in an increasing order. For each consecutive pair of boundary points $p_i$ and $p_{i+1}$, a new interval $[p_i,p_{i+1})$ is derived; thus, $\overline{G_i} = \{ [p_1, p_2), [p_2, p_3), \dots, [p_{x-1}, p_x) \}$. 
Consider, for example, TFSM $\mathcal{S}_1$ (Fig.~\ref{fig:spec_tfsm}) and timed guards for input $i_2$. Set $G_{i_2} = \{ [1, 3),[3,4),[3,5),[5,6),$ $[4,6),[3,6) \}$ contains all possible timed guards for $i_2$. Therefore, boundary points are $1, 3, 4, 5, 6$, thus $\overline{G_{i_2}} = \{ [1, 3),[3,4),[3,5),[5,6) \}$. We introduce $I_G = \{ (i, g)\ |\ i \in I,\ g \in \overline{G_i} \}$ and $O_D = \{ (o, d)\ |\ o \in O,\ d \in D \text{ and } \exists\ s \stackrel{i,g/o,d}{\longrightarrow} s' \in h_S \}$ as the sets of \textit{abstract inputs} and \textit{outputs}, respectively. The \textit{region FSM} of $\mathcal{S}$ is derived as $R(\mathcal{S}) = (S, I_G, O_D, R(h_S))$, where $s \stackrel{(i,g)/(o,d)}{\longrightarrow} s' \in R(h_S)$ if and only if $(i,g) \in I_G$, $(o,d) \in O_D$ and there exists $s \stackrel{i,g'/o,d}{\longrightarrow} s' \in h_S$ such that $g \subseteq g'$. As an example of the region FSM derivation, consider TFSM $\mathcal{S}_4$ (Fig.~\ref{fig:example_not_hs_tfsm}). Since $G_{i_1} = \{ [0,1),[1,2),[0,2) \}$ and $G_{i_2} = \{ [1,3) \}$, we conclude that $\overline{G_{i_1}} = \{ [0,1),[1,2) \}$, $\overline{G_{i_2}} = \{ [1,3) \}$. Thus, $I_G = \{ (i_1, [0, 1)), (i_1, [1, 2)), (i_2, [1, 3)) \}$ and $O_D = \{ (o_1, 1), (o_1, 3), (o_2, 2), (o_2, 4) \}$. Due to the fact that TFSM $\mathcal{S}_4$ has transition $s_0 \stackrel{i_1,[0,2) / o_1, 3}{\rightarrow} s_2$ and $\overline{G_{i_1}} = \{ [0,1),[1,2) \}$, transitions $s_0 \stackrel{(i_1,[0,1)) / (o_1,3)}{\rightarrow} s_2$ and $s_0 \stackrel{(i_1,[1,2)) / (o_1,3)}{\rightarrow} s_2$ are included in $R(h_S)$. The corresponding region FSM $R(\mathcal{S}_4)$ is shown in Fig.~\ref{fig:1_fsm_abstraction}.

Consider the correspondence between timed input/output sequences of $\mathcal{S}$ and their untimed counterparts of $R(\mathcal{S})$\footnote{To introduce the correspondence between a timestamp $t$ and the corresponding timed guard $[t]$, we inherit the Alur\&Dill notation~\cite{alur}.}. Let $(i, t)$ be a timed input, $[t]_{\overline{G_i}} = g$ if there exists $g \in \overline{G_i}$ such that $t \in g$, otherwise $[t]_{\overline{G_i}} = \{ \bot \}$. Given $\alpha = (i_1, t_1)(i_2, t_2) \dots (i_n, t_n)$, $[\alpha]_{I_G}$ defines the \textit{untimed projection} of $\alpha$ over $I_G$ in the following way: $[\alpha]_{I_G} = (i_1, [t_1]_{\overline{G_{i_1}}})(i_2, [t_2-t_1]_{\overline{G_{i_2}}})\dots(i_n, [t_n-t_{n-1}]_{\overline{G_{i_n}}})$. As an example, we again consider TFSM $\mathcal{S}_4$ and $\alpha = (i_1,1)(i_2,3)$. Note that $[1]_{\overline{G_{i_1}}}=[1,2)$ and $[3-1]_{\overline{G_{i_2}}}=[1,3)$, therefore $[\alpha]_{I_G} = (i_1,[1,2))(i_2,[1,3))$. According to the construction of the region FSM, Lemma~\ref{lemma:domain} holds.

\begin{lemma}\label{lemma:domain}
If $\mathcal{S} = (S, I, O, G, D, h_S)$ is a weakly-complete deterministic TFSM with $|h_S|=O(|S|^k)$, then its region FSM $R(\mathcal{S}) = (S, I_G, O_D, R(h_S))$ has the following properties:
\begin{enumerate}
    \item $R(\mathcal{S})$ is deterministic and complete;
    \item A timed input sequence $\alpha \in Dom_{\mathcal{S}}(s)$ if and only if $[\alpha]_{I_G} \in Dom_{R(\mathcal{S})}(s)$ for every $s \in S$, moreover, $next\_state_{\mathcal{S}}(s,\alpha)=next\_state_{R(\mathcal{S})}(s,[\alpha]_{I_G})$;
    \item $|R(h_S)| = O(|S|^{2k})$.
\end{enumerate}
\end{lemma}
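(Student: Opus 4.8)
The plan is to derive all three items from one structural fact about the refinement $\overline{G_i}$. Fix an input $i$. Because $\mathcal{S}$ is weakly-complete, the guards of the transitions leaving any fixed state $s$ under $i$ have union the common interval $[U_i,V_i)$, and because $\mathcal{S}$ is deterministic these guards are pairwise disjoint, so they partition $[U_i,V_i)$; moreover the endpoints of all such guards (over all states) occur among the boundary points $p_1<\cdots<p_x$ used to build $\overline{G_i}$, hence every $g\in\overline{G_i}$ is contained in $[U_i,V_i)$ and, for each state $s$, lies inside exactly one guard of $s$ under $i$. I will call this the \emph{containment fact}. The first item follows immediately: if $s\stackrel{(i,g)/(o,d)}{\longrightarrow}s'$ and $s\stackrel{(i,g)/(o',d')}{\longrightarrow}s''$ are both in $R(h_S)$, their originating guards $g',g''\supseteq g\neq\emptyset$ coincide by disjointness, whence $(o,d,s')=(o',d',s'')$ since $\mathcal{S}$ is deterministic; and for completeness, given $(i,g)\in I_G$ and a state $s$, the unique guard $g'$ of $s$ with $g\subseteq g'$ carries a transition $s\stackrel{i,g'/o,d}{\longrightarrow}s'\in h_S$, so $s\stackrel{(i,g)/(o,d)}{\longrightarrow}s'\in R(h_S)$.

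For the second item the crux is a single-step identity: for every state $s$ and timed input $(i,\delta)$ one has $next\_state_{\mathcal{S}}(s,(i,\delta))=next\_state_{R(\mathcal{S})}(s,(i,[\delta]_{\overline{G_i}}))$, where the right side is read as $\bot$ exactly when $[\delta]_{\overline{G_i}}=\{\bot\}$. Indeed $[\delta]_{\overline{G_i}}=\{\bot\}$ iff $\delta\notin[U_i,V_i)$ iff no guard of $s$ under $i$ contains $\delta$ iff $next\_state_{\mathcal{S}}(s,(i,\delta))=\bot$; and if $[\delta]_{\overline{G_i}}=g$ is a genuine refined interval, the containment fact makes the guard $g'\supseteq g$ of $s$ exactly the one containing $\delta$, so both sides equal the target $s'$ of $s\stackrel{i,g'/o,d}{\longrightarrow}s'$. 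I then prove the second item by induction on $|\alpha|$: writing $\alpha=\alpha''(i_n,t_n)$, the inductive hypothesis identifies $next\_state_{\mathcal{S}}(s,\alpha'')$ with $next\_state_{R(\mathcal{S})}(s,[\alpha'']_{I_G})$ --- either both $\bot$ (then both prolongations are $\bot$ by the definition of $next\_state$ on prolongations) or both a common state $s^*$ --- and in the latter case the single-step identity applied at $s^*$ to $(i_n,t_n-t_{n-1})$, whose abstract image is the last letter of $[\alpha]_{I_G}$, closes the step. The domain equivalence $\alpha\in Dom_{\mathcal{S}}(s)\iff[\alpha]_{I_G}\in Dom_{R(\mathcal{S})}(s)$ is then the same statement read through ``$\alpha\in Dom_{\mathcal{S}}(s)$ iff $next\_state_{\mathcal{S}}(s,\alpha)\neq\bot$'' for deterministic TFSMs and its analogue for the complete FSM $R(\mathcal{S})$.

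The third item is a counting argument: each transition $s\stackrel{i,g'/o,d}{\longrightarrow}s'\in h_S$ generates at most $|\overline{G_i}|$ transitions of $R(h_S)$, one per refined interval $g\subseteq g'$, and $|\overline{G_i}|\leq 2|G_i|\leq 2|h_S|$ since $|G_i|$ is at most the number of $i$-labelled transitions; hence $|R(h_S)|\leq|h_S|\cdot 2|h_S|=O(|h_S|^2)=O(|S|^{2k})$. The step I expect to be most delicate is the single-step identity, in particular the bookkeeping of the $\bot$ values on the $R(\mathcal{S})$ side and the point that weak-completeness --- the union of each state's guards under $i$ being the \emph{single} interval $[U_i,V_i)$ --- is exactly what prevents a refined interval of $\overline{G_i}$ from lying in a gap that no guard of $s$ covers; without it $R(\mathcal{S})$ need not be complete and the domain equivalence breaks. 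The induction and the counting are then routine.
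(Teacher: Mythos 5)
The paper states Lemma~2 without proof (it appears neither after the statement nor in the appendix), so there is no authorial argument to compare against; judged on its own, your proof is correct and complete. The ``containment fact'' is exactly the right structural observation: determinism makes each state's guards under $i$ pairwise disjoint, weak-completeness makes their union the single interval $[U_i,V_i)=[p_1,p_x)=\bigcup\overline{G_i}$, and since all guard endpoints occur among the $p_j$'s, each refined interval $[p_j,p_{j+1})$ sits inside exactly one guard of each state --- from which determinism/completeness of $R(\mathcal{S})$, the single-step identity (with the correct $\bot$ bookkeeping on both sides), the induction for item~2, and the $|h_S|\cdot 2|h_S|$ count for item~3 all follow as you describe. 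Your closing remark that weak-completeness is precisely what rules out a refined interval falling into a gap uncovered by some state's guards correctly identifies where the hypothesis is used and would be worth retaining in any written-out version.
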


Note that unlike the region automaton and FSM abstraction defined for Timed Automaton and Timed FSM in~\cite{alur} and~\cite{bresolin_2014} correspondingly, the size of region FSM remains polynomial with respect to the size of the TFSM. This ensures that the transformation does not introduce exponential growth, making it computationally feasible for practical analysis and verification of TFSM properties. Given a weakly-complete deterministic TFSM $\mathcal{S}$ and its region FSM $R(\mathcal{S})$, the following propositions establish the relations between SSs and HSs for $\mathcal{S}$ and $R(\mathcal{S})$.

\begin{theorem}\label{theorem:ss}
$\alpha$ is an SS for $\mathcal{S}$ if and only if $[\alpha]_{I_G}$ is an SS for $R(\mathcal{S})$.
\end{theorem}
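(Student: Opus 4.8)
The plan is to leverage Lemma~\ref{lemma:domain}, which already gives the crucial bridge: for every state $s$ and every timed input sequence $\alpha$, we have $\alpha \in Dom_{\mathcal{S}}(s)$ iff $[\alpha]_{I_G} \in Dom_{R(\mathcal{S})}(s)$, and moreover $next\_state_{\mathcal{S}}(s,\alpha) = next\_state_{R(\mathcal{S})}(s,[\alpha]_{I_G})$ whenever these are defined. Since synchronizing sequences (Definition~\ref{def:ss}) are stated purely in terms of the $next\_state$ function quantified over all states, the proof should be almost a direct unfolding of definitions through this correspondence.

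For the forward direction, I would assume $\alpha$ is an SS for $\mathcal{S}$, so there is $\bar{s} \in S$ with $next\_state_{\mathcal{S}}(s,\alpha) = \bar{s}$ for every $s \in S$. In particular $\alpha \in Dom_{\mathcal{S}}(s)$ for all $s$, so by Lemma~\ref{lemma:domain}(2), $[\alpha]_{I_G} \in Dom_{R(\mathcal{S})}(s)$ and $next\_state_{R(\mathcal{S})}(s,[\alpha]_{I_G}) = next\_state_{\mathcal{S}}(s,\alpha) = \bar{s}$ for every $s$; hence $[\alpha]_{I_G}$ is an SS for $R(\mathcal{S})$ with the same synchronizing state $\bar{s}$. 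For the converse, assume $[\alpha]_{I_G}$ is an SS for $R(\mathcal{S})$, i.e. there is $\bar{s}$ with $next\_state_{R(\mathcal{S})}(s,[\alpha]_{I_G}) = \bar{s} \neq \bot$ for every $s$. Then $[\alpha]_{I_G} \in Dom_{R(\mathcal{S})}(s)$, so by the ``if'' part of Lemma~\ref{lemma:domain}(2), $\alpha \in Dom_{\mathcal{S}}(s)$ and $next\_state_{\mathcal{S}}(s,\alpha) = next\_state_{R(\mathcal{S})}(s,[\alpha]_{I_G}) = \bar{s}$ for every $s$, so $\alpha$ is an SS for $\mathcal{S}$.

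One subtlety I would make sure to address: the definition of SS for an FSM (Definition~\ref{def:ss_fsm}) uses $next\_state^{nd}$, but since $R(\mathcal{S})$ is deterministic and complete by Lemma~\ref{lemma:domain}(1), this set-valued function collapses to the single-valued one, so the two formulations coincide and there is no mismatch. I expect no serious obstacle here — the real work has already been done in establishing Lemma~\ref{lemma:domain}, particularly the domain-preservation and next-state-preservation claims; this theorem is essentially a corollary. The only thing to be careful about is quantifier handling (the synchronizing state $\bar{s}$ is the \emph{same} on both sides, which falls out immediately) and making explicit that ``$\neq \bot$'' on the FSM side corresponds to being in the domain, which is exactly what Lemma~\ref{lemma:domain}(2) transfers.
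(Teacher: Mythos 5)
Your proposal is correct and follows essentially the same route as the paper: both directions reduce immediately to the next-state and domain preservation of Lemma~\ref{lemma:domain}(2), the paper phrasing it as a contradiction and you phrasing it directly. Your explicit handling of the $\bot$/domain issue and of the collapse of $next\_state^{nd}$ to the single-valued function for the deterministic, complete $R(\mathcal{S})$ is slightly more careful than the paper's version, but it is the same argument.
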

\begin{proof}
$\Rightarrow$ Suppose that $\alpha$ is an SS for $\mathcal{S}$, but $[\alpha]_{I_G}$ is not an SS for $R(\mathcal{S})$. Then there exist $s, s' \in S$ such that $next\_state_{\mathcal{S}}(s, [\alpha]_{I_G}) \neq next\_state_{\mathcal{S}}(s', [\alpha]_{I_G})$. And $next\_state_{\mathcal{S}}(s, \alpha) \neq next\_state_{\mathcal{S}}(s', \alpha)$ (Lemma~\ref{lemma:domain}), it is a contradiction.

$\Leftarrow$ Suppose that $[\alpha]_{I_G}$ is an SS for $R(\mathcal{S})$, but $\alpha$ is not an SS for $\mathcal{S}$. Then there exist $s, s' \in S$ such that $next\_state_{\mathcal{S}}(s, \alpha) \neq next\_state_{\mathcal{S}}(s', \alpha)$. And $next\_state_{\mathcal{S}}(s, [\alpha]_{I_G}) \neq next\_state_{\mathcal{S}}(s', [\alpha]_{I_G})$ (Lemma~\ref{lemma:domain}), it is a contradiction.
\end{proof}

Thus, Theorem~\ref{theorem:ss} provides an algorithm for deriving SSs for Timed FSMs by utilizing their corresponding region FSMs. Theorem~\ref{theorem:hs_for_tfsm} claims that the untimed projection of a homing sequence for TFSM remains homing for the corresponding region FSM.

\begin{theorem}\label{theorem:hs_for_tfsm}
If $\alpha=(i_1,t_1)\dots(i_n,t_n)$ is an HS for TFSM $\mathcal{S}$, then $[\alpha]_{I_G}$ is an HS for $R(\mathcal{S})$.
\end{theorem}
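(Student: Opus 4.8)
The plan is to verify the homing condition of Definition~\ref{def:hs_fsm} for $R(\mathcal{S})$ by transporting the hypothesis ``$\alpha$ is homing for $\mathcal{S}$'' across the run-level correspondence between $\mathcal{S}$ and $R(\mathcal{S})$. By Lemma~\ref{lemma:domain}, $R(\mathcal{S})$ is deterministic and complete, so for every state $s$ the set $out_{R(\mathcal{S})}(s,[\alpha]_{I_G})$ is a singleton; and since $\mathcal{S}$ is weakly-complete and $\alpha$ is homing, $\alpha\in Dom_{\mathcal{S}}(s)$ for every $s$, hence by Lemma~\ref{lemma:domain}(2) $[\alpha]_{I_G}$ is a well-formed input sequence of $R(\mathcal{S})$ lying in $Dom_{R(\mathcal{S})}(s)$ for every $s$. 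It therefore suffices to show: for every state pair $\{s_1,s_2\}$, if $out_{R(\mathcal{S})}(s_1,[\alpha]_{I_G})=out_{R(\mathcal{S})}(s_2,[\alpha]_{I_G})$, then $next\_state_{R(\mathcal{S})}(s_1,[\alpha]_{I_G})=next\_state_{R(\mathcal{S})}(s_2,[\alpha]_{I_G})$.

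First I would record the run-level correspondence. Put $t_0=0$ and write $[\alpha]_{I_G}=(i_1,g_1)\dots(i_n,g_n)$ with $g_j=[t_j-t_{j-1}]_{\overline{G_{i_j}}}$. For a state $s$, let $s=s_0\xrightarrow{(i_1,g_1)/(o_1,d_1)}s_1\xrightarrow{(i_2,g_2)/(o_2,d_2)}\cdots\xrightarrow{(i_n,g_n)/(o_n,d_n)}s_n$ be the (unique) run of $R(\mathcal{S})$ on $[\alpha]_{I_G}$. By the definition of $R(h_S)$, each transition $s_{j-1}\xrightarrow{(i_j,g_j)/(o_j,d_j)}s_j$ comes from a transition $s_{j-1}\xrightarrow{i_j,g'_j/o_j,d_j}s_j$ of $h_S$ with $g_j\subseteq g'_j$; since $t_j-t_{j-1}\in g_j\subseteq g'_j$, this is exactly the transition $\mathcal{S}$ fires on $(i_j,t_j)$. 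Hence the run of $\mathcal{S}$ on $\alpha$ from $s$ visits the same states $s_0,\dots,s_n$, equals $s_0\xrightarrow[o_1,t_1+d_1]{i_1,t_1}s_1\cdots\xrightarrow[o_n,t_n+d_n]{i_n,t_n}s_n$, so $out_{R(\mathcal{S})}(s,[\alpha]_{I_G})=\{(o_1,d_1)\dots(o_n,d_n)\}$ and $r\downarrow_{O}=(o_1,t_1+d_1)\dots(o_n,t_n+d_n)$ for that run. This is the transition-by-transition content already used in Lemma~\ref{lemma:domain}(2).

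Then I would close the argument. Fix $\{s_1,s_2\}$ with $out_{R(\mathcal{S})}(s_1,[\alpha]_{I_G})=out_{R(\mathcal{S})}(s_2,[\alpha]_{I_G})$. By the previous step the abstract output sequences of the two runs agree componentwise, i.e.\ $o^{(1)}_j=o^{(2)}_j$ and $d^{(1)}_j=d^{(2)}_j$ for all $j$. Because both runs use the same input timestamps $t_1,\dots,t_n$, the sequences $r_{s_1}\downarrow_{O}$ and $r_{s_2}\downarrow_{O}$ are literally identical; and since $timed\_out_{\mathcal{S}}(s,\alpha)$ is by definition the set of all permutations of $r_s\downarrow_{O}$ whose timestamps are non-decreasing, it follows that $timed\_out_{\mathcal{S}}(s_1,\alpha)=timed\_out_{\mathcal{S}}(s_2,\alpha)$. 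As $\alpha$ is homing for $\mathcal{S}$, Definition~\ref{def:hs} yields $next\_state_{\mathcal{S}}(s_1,\alpha)=next\_state_{\mathcal{S}}(s_2,\alpha)\neq\bot$. Finally, Lemma~\ref{lemma:domain}(2) gives $next\_state_{R(\mathcal{S})}(s_i,[\alpha]_{I_G})=next\_state_{\mathcal{S}}(s_i,\alpha)$ for $i\in\{1,2\}$, so the two region next states coincide, which is the homing condition for $R(\mathcal{S})$.

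The main obstacle is the mismatch between the \emph{in-order} output sequence produced along a run and the operator $timed\_out_{\mathcal{S}}$, which re-sorts outputs by absolute production time and, under ties, returns a whole set of admissible orderings. The point to get right is that this re-sorting --- including the set of tie-breaking permutations --- is a function of $r_s\downarrow_{O}$ together with the common timestamps $t_1,\dots,t_n$ alone, so agreement of the region-FSM outputs really does force agreement of $timed\_out_{\mathcal{S}}$. Only this one implication is needed here; its converse genuinely fails (two states may have equal $timed\_out_{\mathcal{S}}$ while their runs differ in which outputs land at a shared timestamp), which is precisely why the sharper, bijective correspondence of Theorem~\ref{theorem:hs_bijection} requires extra hypotheses rather than following from this theorem.
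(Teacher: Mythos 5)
Your proposal is correct and is essentially the paper's own argument: the paper proves the contrapositive of your key step as Claim~\ref{claim:fsm_outputs} (if the timed outputs differ then the region-FSM outputs differ), using exactly the same run-level correspondence between $\mathcal{S}$ and $R(\mathcal{S})$, and then closes by contradiction where you argue directly. The logical content, including the transfer of $next\_state$ via the construction of $R(\mathcal{S})$, is the same.
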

\begin{proof}
First, prove the following claim.
\begin{claim}\label{claim:fsm_outputs}
Given states $s$ and $s'$ of $\mathcal{S}$, if $timed\_out_{\mathcal{S}}(s, \alpha) \neq timed\_out_{\mathcal{S}}(s', \alpha)$, then $out_{R(\mathcal{S})}(s, [\alpha]_{I_G}) \neq out_{R(\mathcal{S})}(s', [\alpha]_{I_G})$.
\end{claim}
\begin{proof}
Note that, $[\alpha]_{I_G} = (i_1, [p_1, q_1))\dots (i_n, [p_n, q_n))$. Assume that \\$out_{R(\mathcal{S})}(s, [\alpha]_{I_G}) = out_{R(\mathcal{S})}(s', [\alpha]_{I_G})$ and $timed\_out_{\mathcal{S}}(s, \alpha) \neq timed\_out_{\mathcal{S}}(s', \alpha)$. Then $[\alpha]_{I_G}$ induces $r_{FSM} = s \xrightarrow[(o_1, d_1)]{(i_1, [p_1, q_1))} \dots \xrightarrow[(o_n, d_n)]{(i_n, [p_n, q_n))} s_n$ at $s$ and $r'_{FSM} = s' \xrightarrow[(o_1, d_1)]{(i_1, [p_1, q_1))} \dots \xrightarrow[(o_n, d_n)]{(i_n, [p_n, q_n))} s'_n$ at $s'$ for $R(\mathcal{S})$.

According to the derivation of $R(\mathcal{S})$, $\alpha$ induces runs $r$ and $r'$ for $\mathcal{S}$: $r = s \xrightarrow[o_1, t_1 + d_1]{i_1, t_1} \dots \xrightarrow[o_n, t_n + d_n]{i_n, t_n} s_n$ at $s$ and $r' = s' \xrightarrow[o_1, t_1 + d_1]{i_1, t_1} \dots \xrightarrow[o_n, t_n + d_n]{i_n, t_n} s'_n$ at $s'$.

Thus, $timed\_out_{\mathcal{S}}(s, \alpha) = timed\_out_{\mathcal{S}}(s', \alpha)$, it is a contradiction.
\end{proof}

Now we prove the Theorem. Assume that $\alpha$ is an HS for $\mathcal{S}$ and $[\alpha]_{I_G}$ is not an HS for $R(\mathcal{S})$. Then there exist $s, s' \in S$ such that $out_{\mathcal{S}}(s, [\alpha]_{I_G}) = out_{\mathcal{S}}(s', [\alpha]_{I_G})$ and $next\_state_{\mathcal{S}}(s, [\alpha]_{I_G}) \neq next\_state_{\mathcal{S}}(s', [\alpha]_{I_G})$. Then due to the derivation of $R(\mathcal{S})$ it holds that $next\_state_{\mathcal{S}}(s, \alpha) \neq next\_state_{\mathcal{S}}(s', \alpha)$. Since $\alpha$ is an HS for $\mathcal{S}$, we conclude that $timed\_out_{\mathcal{S}}(s, \alpha) \neq timed\_out_{\mathcal{S}}(s', \alpha)$.
Thus, $timed\_out_{\mathcal{S}}(s, \alpha) \neq timed\_out_{\mathcal{S}}(s', \alpha)$ and at the same time it holds that $out_{R(\mathcal{S})}(s, [\alpha]_{I_G}) = out_{R(\mathcal{S})}(s', [\alpha]_{I_G})$, it is a contradiction (Claim~\ref{claim:fsm_outputs}).
\end{proof}

\begin{figure}[ht]
\centering
\begin{minipage}{.45\textwidth}
  \centering
  \begin{tikzpicture}[>=stealth',node distance=4.3cm,semithick,auto, scale=0.8, every node/.style={scale=1}]
	\node[state, minimum size=1cm]	(s0) {$s_0$};
    \node[state, right of=s0,minimum size=1cm]	(s2) {$s_2$};
    \node[state, below of=s0,minimum size=1cm]	(s1) {$s_1$};
    \node[state, right of=s1,minimum size=1cm]	(s3) {$s_3$};

				{
				\path[->] 
				
    (s0) edge [midway, above, align=left] node {
        $i_1,[0,2)/o_1,3$
            } (s2)
    (s0) edge [loop above, align=left] node {
       $i_2,[1,3)/o_1,1$
            } (s0)
    (s1) edge [bend left, above, align=left] node {
        $i_1,[0,2)/o_1,3$
            } (s3)
    (s1) edge [midway, above, align=left] node[rotate=90] {
        $i_2,[1,3)/o_1,1$
            } (s0)
    (s2) edge [bend left, below, align=left] node {
        $i_1,[0,1)/o_1,3$ \\
        $i_1,[1,2)/o_2,4$
            } (s0)
    (s2) edge [midway, below, align=left] node[rotate=90] {
        $i_2,[1,3)/o_2,2$
            } (s3)
    (s3) edge [midway, below, align=left] node {
        $i_1,[0,2)/o_2,4$
            } (s1)
    (s3) edge [loop below, align=left] node {
       $i_2,[1,3)/o_2,2$
            } (s3)
        ;
		}
\end{tikzpicture}
  \caption{TFSM $\mathcal{S}_4$}
  \label{fig:example_not_hs_tfsm}
\end{minipage}%
\hfill
\begin{minipage}{.45\textwidth}
  \centering
  \begin{tikzpicture}[>=stealth',node distance=4.3cm,semithick,auto, scale=0.8, every node/.style={scale=1}]
	\node[state, minimum size=1cm]	(s0) {$s_0$};
    \node[state, right of=s0,minimum size=1cm]	(s2) {$s_2$};
    \node[state, below of=s0,minimum size=1cm]	(s1) {$s_1$};
    \node[state, right of=s1,minimum size=1cm]	(s3) {$s_3$};

				{
				\path[->] 
				
    (s0) edge [midway, above, align=left] node {
        $(i_1,[0,1))/(o_1,3)$ \\
        $(i_1,[1,2))/(o_1,3)$
            } (s2)
    (s0) edge [loop above, align=left] node {
       $(i_2,[1,3))/(o_1,1)$
            } (s0)
    (s1) edge [bend left, above, align=left] node {
        $(i_1,[0,1))/(o_1,3)$ \\
        $(i_1,[1,2))/(o_1,3)$
            } (s3)
    (s1) edge [midway, above, align=left] node[rotate=90] {
        $(i_2,[1,3))/(o_1,1)$
            } (s0)
    (s2) edge [bend left, below, align=left] node {
        $(i_1,[0,1))/(o_1,3)$ \\
        $(i_1,[1,2))/(o_2,4)$
            } (s0)
    (s2) edge [midway, below, align=left] node[rotate=90] {
        $(i_2,[1,3))/(o_2,2)$
            } (s3)
    (s3) edge [midway, below, align=left] node {
        $(i_1,[0,1))/(o_2,4)$ \\
        $(i_1,[1,2))/(o_2,4)$
            } (s1)
    (s3) edge [loop below, align=left] node {
        $(i_2,[1,3))/(o_2,2)$
            } (s3)
        ;
		}
\end{tikzpicture}

  \caption{Region FSM $R(\mathcal{S}_4)$}
  \label{fig:1_fsm_abstraction}
\end{minipage}
\end{figure}

A related question arises: Does the converse hold ? Specifically, does any timed input sequence such that its projection is a homing sequence for $R(\mathcal{S})$ remain homing for $\mathcal{S}$ ? The following example illustrates that this is not always the case. Consider TFSM $\mathcal{S}_4$ (Fig.~\ref{fig:example_not_hs_tfsm}), its region FSM $R(\mathcal{S}_4)$ (Fig.~\ref{fig:1_fsm_abstraction}) and $\alpha = (i_1,1)(i_2,3)$. Since $timed\_out_{\mathcal{S}_4}(s_0,\alpha) = timed\_out_{\mathcal{S}_4}(s_3,\alpha) = \{(o_1,4)(o_2,5)\}$, but $next\_state_{\mathcal{S}_4}(s_0,\alpha) = s_3$ and $next\_state_{\mathcal{S}_4}(s_3,\alpha) = s_0$, we conclude that $\alpha$ is not a homing sequence for $\mathcal{S}_4$. Now consider untimed projection $[\alpha]_{I_G} = (i_1,[1,2))(i_2,[1,3))$ of $\alpha$. Since $out_{R(\mathcal{S}_4)}(s_0, [\alpha]_{I_G}) = out_{R(\mathcal{S}_4)}(s_1, [\alpha]_{I_G})$ and $next\_state_{R(\mathcal{S}_4)}(s_0, [\alpha]_{I_G}) = next\_state_{R(\mathcal{S}_4)}(s_1, [\alpha]_{I_G})$, while $out_{R(\mathcal{S}_4)}(s_2, [\alpha]_{I_G}) = out_{R(\mathcal{S}_4)}(s_3, [\alpha]_{I_G})$ and $next\_state_{R(\mathcal{S}_4)}(s_2, [\alpha]_{I_G}) = next\_state_{R(\mathcal{S}_4)}(s_3, [\alpha]_{I_G})$, we conclude that $[\alpha]_{I_G}$ is an HS for $R(\mathcal{S}_4)$.

We first discuss why a timed input sequence might not be homing for a TFSM while its untimed projection is a homing sequence for the region FSM. The primary reason is the permutation of outputs, which can prevent the sequence $\alpha$ from splitting two different states (similar to Lemma~\ref{lemma:prolongation_not_hs}). Otherwise, Lemma~\ref{lemma:tfsm_outputs} claims that it is not the case for non-integer timed input sequences.

\begin{lemma}\label{lemma:tfsm_outputs}
Given a TFSM $\mathcal{S}$, its projection $R(\mathcal{S})$, states $s$ and $s'$, and a non-integer timed input sequence $\alpha$, the following holds: if $out_{R(\mathcal{S})}(s, [\alpha]_{I_G}) \neq out_{R(\mathcal{S})}(s', [\alpha]_{I_G})$, then $timed\_out_{\mathcal{S}}(s, \alpha) \neq timed\_out_{\mathcal{S}}(s', \alpha)$.
\end{lemma}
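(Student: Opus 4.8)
The plan is to prove the contrapositive: assuming $timed\_out_{\mathcal{S}}(s,\alpha) = timed\_out_{\mathcal{S}}(s',\alpha)$, I will derive $out_{R(\mathcal{S})}(s,[\alpha]_{I_G}) = out_{R(\mathcal{S})}(s',[\alpha]_{I_G})$. Write $\alpha = (i_1,t_1)\dots(i_n,t_n)$, and let $r$ (resp. $r'$) be the run induced by $\alpha$ at $s$ (resp. $s'$), with $k$-th transition labelled $i_k, g_k'/o_k, d_k$ (resp. $i_k, \tilde g_k'/o_k', d_k'$); note that the input timestamps $t_k$ coincide in both runs since the sequence $\alpha$ is the same. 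The substantive case is $\alpha \in Dom_{\mathcal{S}}(s) = Dom_{\mathcal{S}}(s')$, the equality of domains following from weak-completeness; if $\alpha$ lies in neither domain, then by Lemma~\ref{lemma:domain} the projection $[\alpha]_{I_G}$ is outside both domains of $R(\mathcal{S})$ and the two abstract output sets agree trivially.

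First I would reuse the fractional-part argument already carried out in the proof of Lemma~\ref{lemma:prolongation_not_hs} and invoked again in Theorem~\ref{theorem:congruent_sequences}. Since $\alpha$ is non-integer, the fractional parts of $t_1,\dots,t_n$ are pairwise distinct, and since every delay is a positive integer, the timed output produced at instant $t_k + d_k$ has the same fractional part as $t_k$; hence the value $t_k + d_k$ determines the index $k$ uniquely. Matching the sorted timed output responses element by element, the hypothesis $timed\_out_{\mathcal{S}}(s,\alpha) = timed\_out_{\mathcal{S}}(s',\alpha)$ then forces $o_k = o_k'$ and $t_k + d_k = t_k + d_k'$, i.e. $d_k = d_k'$, for every $k \in \{1,\dots,n\}$ -- no genuine permutation of outputs is possible.

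Next I would push this through the construction of $R(\mathcal{S})$. By Lemma~\ref{lemma:domain}, $R(\mathcal{S})$ is deterministic and complete and $[\alpha]_{I_G} \in Dom_{R(\mathcal{S})}(s) = Dom_{R(\mathcal{S})}(s')$, so $out_{R(\mathcal{S})}(s,[\alpha]_{I_G})$ and $out_{R(\mathcal{S})}(s',[\alpha]_{I_G})$ are singletons. Writing $[\alpha]_{I_G} = (i_1,g_1)\dots(i_n,g_n)$ with $g_k = [t_k - t_{k-1}]_{\overline{G_{i_k}}}$ and $t_0 = 0$, the refinement $\overline{G_{i_k}}$ contains no boundary point of $G_{i_k}$ in the interior of $g_k$, so $g_k \subseteq g_k'$; hence the transition $s_{k-1}\stackrel{i_k,g_k'/o_k,d_k}{\longrightarrow} s_k$ of $r$ gives rise, by the definition of $R(h_S)$, to $s_{k-1}\stackrel{(i_k,g_k)/(o_k,d_k)}{\longrightarrow} s_k$ in $R(h_S)$, and by determinism this is the transition taken along the run of $R(\mathcal{S})$ on $[\alpha]_{I_G}$ from $s$, which therefore outputs $(o_1,d_1)\dots(o_n,d_n)$. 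The same reasoning applied to $r'$ shows the run of $R(\mathcal{S})$ from $s'$ outputs $(o_1',d_1')\dots(o_n',d_n')$. Combined with $o_k = o_k'$ and $d_k = d_k'$ for all $k$ from the previous step, the two abstract output sequences coincide, contradicting $out_{R(\mathcal{S})}(s,[\alpha]_{I_G}) \neq out_{R(\mathcal{S})}(s',[\alpha]_{I_G})$.

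The main obstacle is the first step: justifying that for a non-integer $\alpha$ the equality of timed output responses cannot hide a real permutation of outputs and hence forces position-wise agreement of both the outputs and the delays. This is precisely the phenomenon that makes the lemma fail for integer timed input sequences -- and that the $\mathcal{S}_4$ example preceding the lemma illustrates -- but since the argument is already spelled out inside the proof of Lemma~\ref{lemma:prolongation_not_hs}, here it only needs to be cited and applied; the remaining transfer through the region-FSM construction together with Lemma~\ref{lemma:domain} is routine bookkeeping.
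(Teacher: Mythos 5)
Your proposal is correct and follows essentially the same route as the paper's proof: both hinge on the observation that, because every delay is a positive integer while the timestamps of a non-integer sequence have pairwise distinct (non-integer) differences, equality of the sorted timed output responses cannot involve a genuine permutation and therefore forces position-wise agreement of the outputs and delays, which then transfers directly to the region FSM's output sequences. You merely phrase it as a contrapositive with an explicit fractional-part argument, whereas the paper argues by contradiction that distinct permutations would yield an integer timestamp difference; the content is the same, and your added care about domains and the inclusion $g_k \subseteq g_k'$ is sound.
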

\begin{proof}
$\alpha=(i_1,t_1)\dots(i_n,t_n)$ induces the following runs for $\mathcal{S}$: $r = s \xrightarrow[o_1, t_1 + d_1]{i_1, t_1} \dots \xrightarrow[o_n, t_n + d_n]{i_n, t_n} s_n$ at $s$ and $r' = s' \xrightarrow[o'_1, t_1 + d'_1]{i_1, t_1} \dots \xrightarrow[o'_n, t_n + d'_n]{i_n, t_n} s'_n$ at $s'$.

Given $r\downarrow_{O} = (o_1, t_1 + d_1)\dots(o_n, t_n + d_n)$, $timed\_out_{\mathcal{S}}(s, \alpha) = \{ (o_{j_1}, t_{j_1} + d_{j_1})\dots(o_{j_n}, t_{j_n} + d_{j_n})\}$ is such a permutation $j$ of $r\downarrow_{O}$ that $t_{j_1} + d_{j_1}\leq \dots \leq t_{j_n} + d_{j_n}$. Similarly, given $r'\downarrow_{O} = (o'_1, t_1 + d'_1)\dots(o'_n, t_n + d'_n)$, $timed\_out_{\mathcal{S}}(s, \alpha)$ $ = \{(o'_{k_1}, t_{k_1} + d'_{k_1})\dots(o'_{k_n}, t_{k_n} + d'_{k_n})\}$ is such a permutation $k$ of $r'\downarrow_{O}$ that $t_{k_1} + d_{k_1}\leq \dots \leq t_{k_n} + d_{k_n}$. Assume that $out_{R(\mathcal{S})}(s, [\alpha]_{I_G}) $ $\neq out_{R(\mathcal{S})}(s', [\alpha]_{I_G})$ and $timed\_out_{\mathcal{S}}(s, \alpha) = timed\_out_{\mathcal{S}}(s', \alpha)$, then we conclude that $j$ and $k$ are different permutations, and $t_{j_1} + d_{j_1} = t_{k_1} + d'_{k_1}$, $t_{j_2} + d_{j_2} = t_{k_2} + d'_{k_2}$ \dots, $t_{j_n} + d_{j_n} = t_{k_n} + d'_{k_n}$. Therefore, $|t_{j_1} - t_{k_1}| = |d'_{k_1} - d_{j_1}| \in \mathbb{N}^+_0 $, $|t_{j_2} - t_{k_2}| = |d'_{k_2} - d_{j_2}| \in \mathbb{N}^+_0$ \dots, $|t_{j_n} - t_{k_n}| = |d'_{k_n} - d_{j_n}| \in \mathbb{N}^+_0$, it is a contradiction with $\alpha$ being non-integer timed input sequence.
\end{proof}

Lemma~\ref{lemma:tfsm_outputs} establishes that if every untimed input sequence of the region FSM corresponds to at least one non-integer timed input sequence in the original TFSM, then the correspondence between their homing sequences can be set up. This result leads to Theorem~\ref{theorem:hs_bijection}.

\begin{theorem}\label{theorem:hs_bijection}
Given a TFSM $\mathcal{S}$, non-integer timed input sequence $\alpha$ is an HS for $\mathcal{S}$ if and only if $[\alpha]_{I_G}$ is an HS for $R(\mathcal{S})$.
\end{theorem}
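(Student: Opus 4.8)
The plan is to prove the two implications separately, and to observe up front that the left-to-right direction needs nothing about non-integrality, while the converse is exactly where that hypothesis is used. Throughout I work in the weakly-complete deterministic setting, so that by Lemma~\ref{lemma:domain} the region FSM $R(\mathcal{S})$ is deterministic and complete; in particular $out_{R(\mathcal{S})}(\cdot,[\alpha]_{I_G})$ and $next\_state_{R(\mathcal{S})}(\cdot,[\alpha]_{I_G})$ are well-defined single values on every state, and the observable-FSM notion of homing of Definition~\ref{def:hs_fsm} applies to $R(\mathcal{S})$. The forward implication is immediate: if $\alpha$ is an HS for $\mathcal{S}$, then $[\alpha]_{I_G}$ is an HS for $R(\mathcal{S})$ by Theorem~\ref{theorem:hs_for_tfsm}, so there is nothing further to do there.

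For the converse I argue by contradiction. Assume $[\alpha]_{I_G}$ is an HS for $R(\mathcal{S})$ but $\alpha$ is not an HS for $\mathcal{S}$. Then there is a pair of states $s,s'$ with $timed\_out_{\mathcal{S}}(s,\alpha) = timed\_out_{\mathcal{S}}(s',\alpha)$ while $next\_state_{\mathcal{S}}(s,\alpha) \neq next\_state_{\mathcal{S}}(s',\alpha)$; note $\alpha \in Dom_{\mathcal{S}}(s)\cap Dom_{\mathcal{S}}(s')$ automatically, since completeness of $R(\mathcal{S})$ together with Lemma~\ref{lemma:domain}(2) forces $[\alpha]_{I_G}$, hence $\alpha$, into every domain. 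I first transport the state inequality to $R(\mathcal{S})$: by Lemma~\ref{lemma:domain}(2), $next\_state_{\mathcal{S}}(s,\alpha)=next\_state_{R(\mathcal{S})}(s,[\alpha]_{I_G})$ and likewise for $s'$, so $next\_state_{R(\mathcal{S})}(s,[\alpha]_{I_G}) \neq next\_state_{R(\mathcal{S})}(s',[\alpha]_{I_G})$. Since $[\alpha]_{I_G}$ is homing for $R(\mathcal{S})$, this distinctness of final states forces the output responses to differ as well, i.e. $out_{R(\mathcal{S})}(s,[\alpha]_{I_G}) \neq out_{R(\mathcal{S})}(s',[\alpha]_{I_G})$. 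Now invoke Lemma~\ref{lemma:tfsm_outputs}: because $\alpha$ is a non-integer timed input sequence, distinct region-FSM output responses imply $timed\_out_{\mathcal{S}}(s,\alpha) \neq timed\_out_{\mathcal{S}}(s',\alpha)$, contradicting the choice of $s,s'$. Hence $\alpha$ is an HS for $\mathcal{S}$, completing the proof.

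The genuine difficulty has already been factored out into Lemma~\ref{lemma:tfsm_outputs}, so I do not expect a serious obstacle in the theorem itself. The one point that must not be glossed over is \emph{why} the converse can fail for an arbitrary timed input sequence: a permutation of simultaneous or out-of-order timed outputs can collapse two behaviours in $\mathcal{S}$ that remain separated in $R(\mathcal{S})$, and the non-integer hypothesis is precisely what rules out equalities of the form $|t_{j_\ell}-t_{k_\ell}| = |d'_{k_\ell}-d_{j_\ell}| \in \mathbb{N}^{+}_0$ that would permit such a collapse (exactly as in Lemmas~\ref{lemma:prolongation_not_hs} and~\ref{lemma:tfsm_outputs}). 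Beyond that, the only care needed is the bookkeeping already mentioned: that we are in the weakly-complete deterministic regime so $R(\mathcal{S})$ is deterministic and complete, and that the witnessing pair of states has $\alpha$ in its domain, which is automatic.
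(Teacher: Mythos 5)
Your proposal is correct and follows essentially the same route as the paper: the forward direction is delegated to Theorem~\ref{theorem:hs_for_tfsm}, and the converse is the same contradiction argument — transport the next-state inequality to $R(\mathcal{S})$, use the homing property of $[\alpha]_{I_G}$ to force $out_{R(\mathcal{S})}(s,[\alpha]_{I_G}) \neq out_{R(\mathcal{S})}(s',[\alpha]_{I_G})$, and then invoke Lemma~\ref{lemma:tfsm_outputs} (where non-integrality is used) to contradict $timed\_out_{\mathcal{S}}(s,\alpha) = timed\_out_{\mathcal{S}}(s',\alpha)$. Your extra bookkeeping about determinism, completeness and domains is consistent with the paper's (more tersely stated) reasoning.
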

\begin{proof}
$\Rightarrow$ Let $[\alpha]_{I_G}$ be an HS of $R(\mathcal{S})$ and $\alpha$ be a non-integer timed sequence corresponding to $[\alpha]_{I_G}$, assume that $\alpha$ is not an HS for $\mathcal{S}$, then there exist states $s, s'$ of $\mathcal{S}$ such that $timed\_out_{\mathcal{S}}(s, \alpha) = timed\_out_{\mathcal{S}}(s', \alpha)$ and $next\_state_{\mathcal{S}}(s, \alpha) \neq next\_state_{\mathcal{S}}(s', \alpha)$. Due to the derivation of $R(\mathcal{S})$ it holds that $next\_state_{R(\mathcal{S})}(s, [\alpha]_{I_G}) \neq next\_state_{R(\mathcal{S})}(s', [\alpha]_{I_G})$. Since $[\alpha]_{I_G}$ is a homing sequence, we conclude that $out_{R(\mathcal{S})}(s, [\alpha]_{I_G}) \neq out_{R(\mathcal{S})}(s', [\alpha]_{I_G})$. Thus, we conclude that $out_{R(\mathcal{S})}(s, [\alpha]_{I_G}) \neq out_{R(\mathcal{S})}(s', [\alpha]_{I_G})$, $timed\_out_{\mathcal{S}}(s, \alpha) $ $= timed\_out_{\mathcal{S}}(s', \alpha)$, it is a contradiction.

$\Leftarrow$ See Theorem~\ref{theorem:hs_for_tfsm}.
\end{proof}

The definition of left-closed and right-open intervals implies that the intersection of all pairs of timed guards $g, g' \in G$ is either empty or is a left-closed and right-open interval. This leads us to the following Corollary of Theorem~\ref{theorem:hs_bijection}.
\begin{corollary}\label{corollary:hs_criterium}
A TFSM $\mathcal{S}$ has an HS if and only if $R(\mathcal{S})$ has an HS.
\end{corollary}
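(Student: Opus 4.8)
The plan is to derive Corollary~\ref{corollary:hs_criterium} from Theorem~\ref{theorem:hs_bijection} by closing the gap between ``\emph{some} timed input sequence'' and ``some \emph{non-integer} timed input sequence.'' Theorem~\ref{theorem:hs_bijection} already gives a bijective-style correspondence, but only for non-integer timed input sequences; so to get the full equivalence ``$\mathcal{S}$ has an HS iff $R(\mathcal{S})$ has an HS'' I need to argue that restricting attention to non-integer sequences is without loss of generality on the TFSM side. The engine for this is the observation, already used in the proof of Theorem~\ref{theorem:correctness_of_ds_synthesis_alg} and foreshadowed in the sentence just before the Corollary: since all timed guards are left-closed and right-open (so that the intersection of any two guards is again left-closed and right-open, and in particular every guard contains a left-endpoint-free rational interval around any chosen offset), for every integer — or more generally, every arbitrary — timed input sequence there is an \emph{equivalent} (in the sense of $\sim_{\mathcal{S}}$) non-integer timed input sequence.

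Concretely, I would proceed in the following steps. First, the easy direction: if $R(\mathcal{S})$ has no HS, then by the ``$\Leftarrow$'' part of Theorem~\ref{theorem:hs_bijection} (i.e.\ Theorem~\ref{theorem:hs_for_tfsm}), no non-integer timed input sequence is an HS for $\mathcal{S}$; and by the perturbation argument any arbitrary HS for $\mathcal{S}$ could be replaced by an equivalent non-integer one, which by Theorem~\ref{theorem:congruent_sequences} would still be an HS — contradiction. So $\mathcal{S}$ has no HS. Second, the other direction: if $R(\mathcal{S})$ has an HS $\bar\alpha \in I_G^{*}$, I need to exhibit a non-integer timed input sequence $\alpha$ with $[\alpha]_{I_G} = \bar\alpha$; then Theorem~\ref{theorem:hs_bijection}'s ``$\Rightarrow$'' direction makes $\alpha$ an HS for $\mathcal{S}$. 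Such an $\alpha$ is built greedily: writing $\bar\alpha = (i_1,[p_1,q_1))\dots(i_n,[p_n,q_n))$, pick offsets $\delta_j \in [p_j,q_j)$ that are mutually ``non-integer-compatible,'' i.e.\ so that all partial sums $t_j = \delta_1+\dots+\delta_j$ have pairwise non-integer differences; since each $[p_j,q_j)$ is a nonempty interval with integer endpoints, it contains a whole subinterval of choices, and finitely many forbidden integer-difference constraints exclude only a measure-zero set, so a valid choice always exists. Then $[\alpha]_{I_G}=\bar\alpha$ because $[\delta_j]_{\overline{G_{i_j}}} = [p_j,q_j)$ by construction of $\overline{G_{i_j}}$ as the coarsest common refinement.

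The main obstacle, and the step deserving the most care, is the construction in the second direction — specifically checking that one can always choose the offsets so that the resulting sequence is genuinely non-integer (all pairwise differences of the \emph{absolute} timestamps $t_\ell, t_r$ avoid $\mathbb{N}_0$), \emph{while simultaneously} keeping each $\delta_j$ inside its prescribed region interval $[p_j,q_j)$ and preserving the untimed projection. The two requirements pull in slightly different directions: the region interval may be short (width one, since endpoints are consecutive integers), yet it is still a nonempty half-open interval of positive length, so a small generic perturbation within it suffices; one processes $j=1,\dots,n$ in order, at each stage having only finitely many (at most $j-1$) ``bad'' values of $\delta_j$ to avoid, and the half-open interval $[p_j, q_j)$ has uncountably many candidates, so a good $\delta_j$ exists at every stage. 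I would also note explicitly that, for the easy direction, the same left-closed/right-open structure is exactly what guarantees the existence of an equivalent non-integer perturbation of an arbitrary HS (the perturbation stays inside the same guards because guards, and their intersections, are intervals), which is the point the sentence preceding the Corollary is making. Once these two observations are in place, the Corollary follows immediately by combining them with Theorem~\ref{theorem:hs_bijection} and Theorem~\ref{theorem:congruent_sequences}.
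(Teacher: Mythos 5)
Your proposal is correct and matches the paper's intended argument: the paper offers no explicit proof of Corollary~\ref{corollary:hs_criterium} beyond the remark that left-closed, right-open guards guarantee non-integer representatives, and your two steps (projecting an arbitrary HS of $\mathcal{S}$ down, and lifting an HS of $R(\mathcal{S})$ to a non-integer timed realization so that Theorem~\ref{theorem:hs_bijection} applies) supply exactly the details the paper leaves implicit. One small caution: in your first direction you invoke Theorem~\ref{theorem:congruent_sequences} to pass from an arbitrary HS of $\mathcal{S}$ to an equivalent non-integer one, but that theorem is stated only for pairs of \emph{non-integer} sequences; the detour is unnecessary anyway, since Theorem~\ref{theorem:hs_for_tfsm} carries no non-integer hypothesis and already gives that direction for any HS of $\mathcal{S}$.
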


Let $\mathcal{S}$ be a TFSM with $n$ states. Lemma~\ref{lemma:domain} claims that $R(\mathcal{S})$ has the polynomial size with respect to the size of $\mathcal{S}$, while Theorem~\ref{theorem:ss} and Corollary~\ref{corollary:hs_criterium} establish the correspondence between SSs/HSs for $R(\mathcal{S})$ and $\mathcal{S}$. The latter allows to draw conclusions about the complexity of the SS/HS existence check for TFSMs with output delays. Namely, checking if $\mathcal{S}$ has an SS/HS can be done in polynomial time with respect to $n$. At the same time, the problem of deriving a shortest SS/HS is NP-hard. Theorem~\ref{theorem:hs_bijection} also gives the upper bound on the polynomial length of an HS, when the $R(\mathcal{S})$ is reduced and connected, namely $O(n^2)$. Naturally, a question arises: Is it possible to establish a similar correspondence not only for TFSMs with left-closed and right-open (left-open and right-closed) intervals ? The next section aims to answer this question.

\subsection{Properties of the region FSM for a TFSM with point intervals} \label{subsec:1_fsm_abstraction_for_point_intervals}

\begin{figure}[ht]
\centering
\begin{minipage}{.33\textwidth}
  \centering
  \begin{tikzpicture}[>=stealth',node distance=2.5cm,semithick,auto, scale=0.8, every node/.style={scale=1}]
    \node[state, minimum size=1cm]  (s0) {$s_0$};
    \node[state, right of=s0,minimum size=1cm]  (s1) {$s_1$};
    \node[state, below of=s0,minimum size=1cm]  (s2) {$s_3$};
    \node[state, right of=s2,minimum size=1cm]  (s3) {$s_2$};

        {
        \path[->] 
        
    (s0) edge [bend left, above, align=left] node {
        $i_1,[1,1]/o_1,2$
            } (s1)
    (s1) edge [bend left, below, align=left] node[rotate=90] {
        $i_1,[1,1]/o_1,2$
            } (s3)
    (s2) edge [bend left, above, align=left] node[rotate=90] {
        $i_1,[1,1]/o_1,1$
            } (s0)
    (s3) edge [bend left, below, align=left] node {
        $i_1,[1,1]/o_1,3$
            } (s2)
        ;
    }
\end{tikzpicture}

  \caption{TFSM $\mathcal{B}_4$}
  \label{fig:counterexample_tfsm}
\end{minipage}
\hfill
\begin{minipage}{.33\textwidth}
  \centering
  \begin{tikzpicture}[>=stealth',node distance=2.5cm,semithick,auto, scale=0.8, every node/.style={scale=1}]
    \node[state, minimum size=1cm]  (s0) {$s_0$};
    \node[state, right of=s0,minimum size=1cm]  (s1) {$s_1$};
    \node[state, below of=s0,minimum size=1cm]  (s2) {$s_3$};
    \node[state, right of=s2,minimum size=1cm]  (s3) {$s_2$};

        {
        \path[->] 
        
    (s0) edge [bend left, above, align=left] node {
        $(i_1,[1,1])/(o_1,2)$
            } (s1)
    (s1) edge [bend left, below, align=left] node[rotate=90] {
        $(i_1,[1,1])/(o_1,2)$
            } (s3)
    (s2) edge [bend left, above, align=left] node[rotate=90] {
        $(i_1,[1,1])/(o_1,1)$
            } (s0)
    (s3) edge [bend left, below, align=left] node {
        $(i_1,[1,1])/(o_1,3)$
            } (s2)
        ;
    }
\end{tikzpicture}
  \caption{Region FSM $R(\mathcal{B}_4)$}
  \label{fig:counterexample_fsm}
\end{minipage}%
\hfill
\begin{minipage}{.25\textwidth}
  \centering
  \begin{tikzpicture}[node distance = 1.5cm, auto,scale=0.68, every node/.style={scale=1.0}]
\tikzstyle{vecArrow} = [thick, decoration={markings,mark=at position
   1 with {\arrow[semithick]{open triangle 60}}},
   double distance=1.4pt, shorten >= 5.5pt,
   preaction = {decorate},
   postaction = {draw,line width=1.4pt, white,shorten >= 4.5pt}]
\tikzstyle{innerWhite} = [semithick, white,line width=1.4pt, shorten >= 4.5pt]
\tikzstyle{decision} =
        [
                diamond,
                draw,
                fill = green!20,
                text width = 6em,
                text badly centered,
                node distance = 2cm,
                inner sep = 0pt
        ]
\tikzstyle{block} =
        [
                rectangle,
                draw,
                text width = 5em,
                text centered,
                rounded corners,
                minimum height = 2em
        ]
\tikzstyle{line} =
        [
                draw,
                -latex'
        ]
\tikzstyle{cloud} =
        [
                draw,
                ellipse,
                fill = red!20,
                node distance = 4cm,
                minimum height = 2em
        ]
\tikzstyle{suite}=[->,>=stealth’,thick,rounded corners=4pt]

        \node [block] (s0) {
            $\overline{s_0, s_1, s_2, s_3}$
        };
        \node [block, below of = s0] (s1) {
            $\overline{s_1, s_2}, \overline{s_3}, \overline{s_0}$
        };
        \node [block, accepting, below of = s1] (s2) {
            $\overline{s_2}, \overline{s_3}, \overline{s_0}, \overline{s_1}$
        };
        
        \draw [thick, ->] (s0) -- (s1) node[midway, right] {$(i_1, 1)$};
        \draw [thick, ->] (s1) -- (s2) node[midway, right] {
            $(i_1, 1)$
        };
\end{tikzpicture}

  \caption{TST for $\mathcal{B}_4$}
  \label{fig:tst_for_B4}
\end{minipage}
\end{figure}

We previously focused on checking the existence and deriving HSs for a certain class of TFSMs. We have proven that a left-closed and right-open TFSM has an SS (HS) if and only if its region FSM has an SS (HS) (Theorem~\ref{theorem:ss} and Corollary~\ref{corollary:hs_criterium}). In this section, we consider another class of TFSMs. We say that $\mathcal{S}_p$ is a \emph{TFSM with only point intervals} if every timed guard $g$ of $\mathcal{S}_p$ is a point interval, i.e., $g = [u, u]$ for $u \in \mathbb{N}^{+}$.
\begin{lemma}\label{lemma:point_intervals_properties}
If $\mathcal{S}_p$ is a TFSM with only point intervals, then for every $s \in S$ and for every $\alpha \in Dom_{\mathcal{S}_p}(s)$ it holds that:
\begin{enumerate}
    \item $\alpha$ is an integer timed input sequence;
    \item $next\_state_{\mathcal{S}_p}(s, \alpha)$ returns a singleton.
\end{enumerate}
\end{lemma}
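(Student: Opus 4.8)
The plan is to prove both parts essentially by induction on the length of $\alpha$, exploiting the fact that every timed guard of $\mathcal{S}_p$ is a single point $[u,u]$ with $u \in \mathbb{N}^{+}$. For part (1), suppose $\alpha = (i_1, t_1)(i_2, t_2) \dots (i_n, t_n) \in Dom_{\mathcal{S}_p}(s)$. By definition of $Dom_{\mathcal{S}_p}(s)$, the sequence $\alpha$ induces at least one sequence of transitions $s \stackrel{i_1, g_1/o_1, d_1}{\longrightarrow} s_1 \stackrel{i_2, g_2/o_2, d_2}{\longrightarrow} \dots \stackrel{i_n, g_n/o_n, d_n}{\longrightarrow} s_n$ with $t_1 - t_0 \in g_1$, $t_2 - t_1 \in g_2$, \dots, $t_n - t_{n-1} \in g_n$ where $t_0 = 0$. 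Since each $g_j$ is a point interval $[u_j, u_j]$ with $u_j \in \mathbb{N}^{+}$, the membership $t_j - t_{j-1} \in g_j$ forces $t_j - t_{j-1} = u_j \in \mathbb{N}^{+}$. Summing, $t_j = u_1 + u_2 + \dots + u_j \in \mathbb{N}^{+}$ for each $j$, so in particular every timestamp $t_j$ is a positive integer and $\alpha$ is an integer timed input sequence.

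For part (2), I would proceed by induction on $n$, the length of $\alpha$. The base case $n = 0$ (the empty sequence) gives $next\_state_{\mathcal{S}_p}(s, \varepsilon) = \{s\}$, a singleton. For the inductive step, write $\alpha = \alpha'(i_n, t_n)$ where $\alpha' = (i_1, t_1)\dots(i_{n-1}, t_{n-1})$. Since $\alpha \in Dom_{\mathcal{S}_p}(s)$, also $\alpha' \in Dom_{\mathcal{S}_p}(s)$, so by the induction hypothesis $next\_state_{\mathcal{S}_p}(s, \alpha') = \{s'\}$ for a unique state $s'$. Now $next\_state_{\mathcal{S}_p}(s, \alpha) = next\_state_{\mathcal{S}_p}(s', (i_n, t_n - t_{n-1}))$. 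By part (1), $t_n - t_{n-1}$ equals the (integer) point value $u_n$ of the guard on the transition from $s'$ labeled $i_n$ that $\alpha$ induces. Because $\mathcal{S}_p$ is deterministic — for any two transitions $s'\stackrel{i_n,g/o,d}{\longrightarrow}s''$ and $s'\stackrel{i_n,g'/o',d'}{\longrightarrow}s'''$ we have $g \cap g' = \emptyset$ — at most one transition out of $s'$ on input $i_n$ has a guard containing the value $t_n - t_{n-1}$. Hence $next\_state_{\mathcal{S}_p}(s', (i_n, t_n - t_{n-1}))$ is a single state (it is nonempty since $\alpha$ induces a transition), so $next\_state_{\mathcal{S}_p}(s, \alpha)$ is a singleton, completing the induction.

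Alternatively, part (2) is essentially an immediate consequence of Proposition~\ref{prop:next_state}: since $\mathcal{S}_p$ is a deterministic TFSM with output delays and $\alpha \in Dom_{\mathcal{S}_p}(s)$, we have $|next\_state_{\mathcal{S}_p}(s, \alpha)| = 1$. I would mention this as the short route while noting that the induction above also makes the dependence on point intervals transparent. The main obstacle, such as it is, is purely bookkeeping: making sure the definitions of $Dom_{\mathcal{S}_p}$, "induces", and the extended $next\_state$ function are threaded together correctly so that "$\alpha$ induces a transition sequence" and "$next\_state$ is defined and nonempty" are used consistently; there is no genuine mathematical difficulty once the point-interval structure is observed to pin down each inter-arrival gap to a fixed integer.
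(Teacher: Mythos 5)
Your proposal is correct and follows essentially the same route as the paper: part (1) is the same observation that each point guard $[u_j,u_j]$ pins $t_j - t_{j-1}$ to the integer $u_j$, and part (2) is exactly the paper's one-line appeal to determinism (equivalently Proposition~\ref{prop:next_state}), which you merely unfold into an explicit induction. No gaps; your version is just more detailed than the paper's.
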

\begin{proof}
1. Given $s \in S$ and $\alpha = (i_1,t_1)\dots(i_n,t_n) \in Dom_{\mathcal{S}_p}(s)$. Since $\alpha \in Dom_{\mathcal{S}_p}(s)$, it holds that $t_j - t_{j-1} \in g$ for $j \in \{ 1, \dots, n \}$ and for some $g \in G$. Due to the fact that $g = [u, u]$ for $u \in \mathbb{N}^{+}$, we conclude that $\alpha$ is an integer timed input sequence.

2. Since $\mathcal{S}_p$ is deterministic, for every $s \in S$ and for every $\alpha \in Dom_{\mathcal{S}_p}(s)$ it holds that $next\_state_{\mathcal{S}_p}(s, \alpha)$ is a singleton.
\end{proof}

Lemma~\ref{lemma:point_intervals_properties} allows to conclude that Theorem~\ref{theorem:ss} is also valid for TFSMs with only point intervals. Namely, $\mathcal{S}_p$ has an SS if and only if $R(\mathcal{S}_p)$ has an SS. However, it is not the case for HSs. Consider TFSM $\mathcal{B}_4$ shown in Fig.~\ref{fig:counterexample_tfsm}, its region FSM $R(\mathcal{B}_4)$ shown in Fig.~\ref{fig:counterexample_fsm} and $\alpha_{fsm}=(i_1,[1,1])(i_1,[1,1])$. Since $out_{R(\mathcal{B}_4)}(s_0,\alpha_{fsm}) = \{(o_1,2)(o_1,2)\}$, $out_{R(\mathcal{B}_4)}(s_1,\alpha_{fsm}) = \{(o_1,2)(o_1,3)\}$,\\ $out_{R(\mathcal{B}_4)}(s_2,\alpha_{fsm})= \{(o_1,3)(o_1,1)\}$ and $out_{R(\mathcal{B}_4)}(s_3,\alpha_{fsm}) = \{(o_1,1)(o_1,2)\}$, we conclude that $\alpha_{fsm}$ is an HS for $R(\mathcal{B}_4)$. At the same time, consider the behavior of TFSM $\mathcal{B}_4$ on $\alpha=(i_1,1)(i_1,2)$, note that $\alpha_{fsm}=[\alpha]_{I_G}$. Due to the fact that $timed\_out_{\mathcal{B}_4}(s_0,\alpha)=timed\_out_{\mathcal{B}_4}(s_2,\alpha)=\{(o_1,3)(o_1,4)\}$, $next\_state_{\mathcal{B}_4}(s_0,\alpha) $ $ \neq next\_state_{\mathcal{B}_4}(s_2,\alpha)$, $\alpha$ is not an HS for $\mathcal{B}_4$. Moreover, there exists a class of TFSMs that cannot be homed while their region FSMs can be homed (Theorem~\ref{theorem:B_n^t_is_not_homed}). We define TFSM $\mathcal{B}_n = (S_n, I, O, G, D, h_{S_n})$ in the following way: 
\begin{itemize}
    \item $S_n = \{s_0, \ldots, s_{n-1}\}$;
    \item $I = \{i_1\}$ and $O = \{o_1\}$;
    \item $G=\{[1,1]\}$ and $D = \{ 1, 2, 3 \}$;
    \item $h_{S_n} = \{(s_i,i_1,[1,1],o_1,2,s_{i+1}) : 0 \leq i \leq n-3\} \cup \\ \cup \{ (s_{n-2},i_1,[1,1],o_1,3,s_{n-1}),  (s_{n-1},i_1,[1,1],o_1,1,s_0)\}$.
\end{itemize}
In other words, input $i_1$ acts like a cyclic permutation on the set of states, and all the outputs, except at states $s_{n-2}$ and $s_{n-1}$, are $(o_1,2)$. It is easy to see that the machine in the Fig. \ref{fig:counterexample_fsm} is a machine $\mathcal{B}_n$ for $n = 4$.

\begin{theorem}\label{theorem:B_n^t_is_not_homed}
For any $n > 3$ we have that $R(\mathcal{B}_n)$ can be homed, but $\mathcal{B}_n$ cannot be homed.
\end{theorem}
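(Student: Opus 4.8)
The plan is to treat the two directions separately, using the structure of $\mathcal{B}_n$: the input $i_1$ acts as an $n$-cycle on the states with a fixed pattern of output delays. Write $d_\ell$ for the delay of the transition leaving $s_\ell$, so $d_\ell=2$ for $0\le\ell\le n-3$, $d_{n-2}=3$, $d_{n-1}=1$; in particular the value $1$ occurs exactly once per cycle. I will keep this notation throughout.

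For ``$R(\mathcal{B}_n)$ can be homed'', I would first observe that $R(\mathcal{B}_n)$ is a complete deterministic FSM over the single abstract input $\iota=(i_1,[1,1])$, sending $s_j$ to $s_{j+1}$ and emitting $(o_1,d_j)$ (indices mod $n$). Then applying $\iota^n$ at $s_j$ produces the output word $(o_1,d_j)(o_1,d_{j+1})\cdots(o_1,d_{j+n-1})$, which is a cyclic rotation of $(o_1,2)^{n-2}(o_1,3)(o_1,1)$. Since $(o_1,1)$ occurs exactly once, distinct starting states give distinct words, so no two distinct states produce the same response on $\iota^n$; the homing requirement is then satisfied vacuously and $\iota^n$ is an HS (note it is not synchronizing, since $\iota^n$ fixes every state).

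For ``$\mathcal{B}_n$ cannot be homed'', I would first reduce the problem. By Lemma~\ref{lemma:point_intervals_properties} every $\alpha\in Dom_{\mathcal{B}_n}(s)$ is integer, and since $i_1$ with guard $[1,1]$ is the only transition label, the unique domain sequence of length $k$ is $\alpha_k=(i_1,1)(i_1,2)\cdots(i_1,k)$; any other timed input sequence (including $\varepsilon$) sends every state to $\bot$ (resp. fixes every state) and is therefore not homing. So it suffices to show each $\alpha_k$, $k\ge1$, fails to be homing. The key computation: since every output letter is $o_1$, $timed\_out_{\mathcal{B}_n}(s_j,\alpha_k)$ is determined by the multiset $M_j=\{\,m+d_{(j+m-1)\bmod n}:1\le m\le k\,\}$, sorted and tagged with $o_1$. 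I would then prove that $M_j$ equals the ``default'' multiset $B=\{3,4,\dots,k+2\}$ for every $j$ outside a set of at most two exceptional states. The argument is that along any run $s_{n-2}$ is always immediately followed by $s_{n-1}$: an occurrence of $s_{n-2}$ at a step $m<k$ (delay $3$) is paired with $s_{n-1}$ at step $m+1$ (delay $1$), contributing timestamps $\{m+3,m+2\}$, exactly the base contribution of those steps; the only deviations that can survive are $s_{n-1}$ at the first step (only if $j=n-1$, turning a $3$ into a $2$) and $s_{n-2}$ at the last step (only if $j\equiv n-1-k\pmod n$, turning a $k+2$ into a $k+3$). Hence at least $n-2\ge2$ states share the response corresponding to $B$; picking distinct such $s_a,s_b$ gives $timed\_out_{\mathcal{B}_n}(s_a,\alpha_k)=timed\_out_{\mathcal{B}_n}(s_b,\alpha_k)$ while $next\_state_{\mathcal{B}_n}(s_a,\alpha_k)=s_{a+k}\ne s_{b+k}=next\_state_{\mathcal{B}_n}(s_b,\alpha_k)$, so $\alpha_k$ is not homing.

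The main obstacle I anticipate is the ``pairing/cancellation'' step: rigorously justifying that the perturbations caused by $d_{n-2}=3$ and $d_{n-1}=1$ always cancel inside the window except at its two ends, and verifying that in the degenerate cases ($k=1$, and $k\equiv0\pmod n$ where both end-deviations fall on the single state $s_{n-1}$) the count of ``default'' states still stays at least $2$, which is exactly where $n>3$ is used. The reduction via Lemma~\ref{lemma:point_intervals_properties} and the vacuous-homing argument for $R(\mathcal{B}_n)$ should be routine.
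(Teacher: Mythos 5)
Your proof is correct, and its second half takes a genuinely different route from the paper's. For the homability of $R(\mathcal{B}_n)$ you and the paper argue in essentially the same way: exhibit a power of the single abstract input whose output responses are pairwise distinct, so that the homing condition holds vacuously (the paper uses length $n-2$ and tracks the position of the unique $(o_1,1)$; you use length $n$, which is marginally cleaner since every response then actually contains that unique $(o_1,1)$). For the non-homability of $\mathcal{B}_n$, the paper restricts attention to the three states $s_0,s_1,s_2$, computes their responses explicitly in four cases $l=n-3,n-2,n-1,n$ (plus the trivial range $l<n-3$), and then extends to all $l>n$ by a periodicity observation: after $n$ steps the three states return to themselves having produced identical outputs. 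Your pairing/cancellation argument replaces all of this with one uniform computation: since $s_{n-2}$ (delay $3$) is always immediately followed by $s_{n-1}$ (delay $1$), each such adjacent pair of steps contributes exactly the timestamps $\{m+2,m+3\}$ that two default delay-$2$ steps would contribute, so the response multiset equals $B=\{3,\dots,k+2\}$ for every start state except possibly $s_{n-1}$ (unpaired delay $1$ at the first step) and $s_{(n-1-k)\bmod n}$ (unpaired delay $3$ at the last step). This buys you: no case analysis on $k$, no periodicity step, an exact description of the exceptional states (your formula reproduces the paper's four cases as special instances), and a transparent account of where $n>3$ enters (one needs $n-2\ge 2$ colliding states). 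The details you flag as potential obstacles check out: the reduction to the sequences $\alpha_k$ via Lemma~\ref{lemma:point_intervals_properties} is the same one the paper makes implicitly, and in the degenerate case $k\equiv 0\pmod n$ the two exceptional indices coincide, which only increases the number of states sharing $B$.
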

\begin{proof}
First, observe that $out_{R(\mathcal{B}_n)}(\cdot, \cdot)$ and $timed\_out_{\mathcal{B}_n}(\cdot, \cdot)$ always give us a singleton, so we can omit the brackets. Let $a = (i_1,[1,1])$. Observe that a sequence $a^{n-2}$ is an HS for $R(\mathcal{B}_n)$.  Indeed, $out_{R(\mathcal{B}_n)}(s_0, a^{n-2}) = (o_1,2)^{n-2}$ and for any $0 < i < n$ it holds, that $out_{R(\mathcal{B}_n)}(s_i, a^{n-2})$ has $(o_1, 1)$ in the $(n-i+1)$-th position, whilst for $j \neq i$ $(n-i+1)$-th position of $out_{R(\mathcal{B}_n)}(s_j, a^{n-2})$ is occupied by either $(o_1,2)$ or $(o_1,1)$. Thus, $a^{n-2}$ is a homing sequence for $R(\mathcal{B}_n)$.

All sequences enabled for $\mathcal{B}_n$ are of the form $\alpha = (i_1,1)(i_1,2) \ldots (i_1,k)$, for $k \in \mathbb{N}$. First, we will show that for any sequence $\alpha_l = (i_1,1) \ldots (i_1,l)$ such that $l \leq n$, at least one pair of states from set $\{s_0, s_1, s_2\}$ produces identical timed output. Obviously, as long as $l < n-3$, it holds $timed\_out_{B_n}(s_0, \alpha_l) = timed\_out_{B_n}(s_1, \alpha_l) = timed\_out_{B_n}(s_2,\alpha_l) =  (o_1, 2+1)(o_1, 2+2) \ldots (o_1, 2+l) $. Consider four cases:

\noindent
\textbf{Case 1:} $l = n-3$, $timed\_out_{B_n}(s_0, \alpha_l) = timed\_out_{B_n}(s_1, \alpha_l) = (o_1, 3)(o_1, 4)$ $ \ldots (o_1, n-1) $ and $timed\_out_{B_n}(s_2, \alpha_l) = (o_1, 3)(o_1, 4) \ldots (o_1, n-2)(o_1, n)$.

\noindent
\textbf{Case 2:} $l = n-2$, $timed\_out_{B_n}(s_0, \alpha_l) = timed\_out_{B_n}(s_2, \alpha_l) = (o_1, 3)(o_1, 4)$ $ \ldots (o_1, n)$ and $timed\_out_{B_n}(s_2, \alpha_l) = (o_1, 3)(o_1, 4) \ldots (o_1, n-1)(o_1, n+1)$.

\noindent
\textbf{Case 3:} $l = n-1$, $timed\_out_{B_n}(s_1, \alpha_l) = timed\_out_{B_n}(s_2, \alpha_l) = (o_1, 3)(o_1, 4)$ $ \ldots (o_1, n+1)$ and $timed\_out_{B_n}(s_2, \alpha_l) = (o_1, 3)(o_1, 4) \ldots (o_1, n)(o_1, n+2)$.

\noindent
\textbf{Case 4:} $l = n$, $timed\_out_{B_n}(s_0, \alpha_l) = timed\_out_{B_n}(s_1, \alpha_l) =$\\ $= timed\_out_{B_n}(s_2, \alpha_l) = (o_1, 3)(o_1, 4) \ldots(o_1, n + 2)$.

Since input $i_1$ induces a cyclic permutation, and $timed\_out_{B_n}(s_0, \alpha_n)  = \\ timed\_out_{B_n}(s_1, \alpha_n) = timed\_out_{B_n}(s_2, \alpha_n)$, then our argument can be extended for any $l > n$.
\end{proof}

In Section~\ref{subsec:notation} we presented Algorithm~\ref{alg:hs_synthesis}, which returns a shortest HS for TFSMs with left-closed and right-open intervals. To evaluate whether the same algorithm can be applied to TFSMs with only point intervals\footnote{For such a TFSM, the edges of the tree are labeled with $(i,u)$ for $[u,u]\in G$, and not with $(i,u+\theta)$.}, consider the tree shown in Fig.~\ref{fig:tst_for_B4} for TFSM $\mathcal{B}_4$ which is truncated using Rule 1; $\alpha = (i_1, 1)(i_1, 2)$ is not homing for $\mathcal{B}_4$ (as discussed earlier), however, $(i_1, 1)(i_1, 1)$ labels the path from the root to the terminal node (Rule 1). The reason is that a TFSM with only point intervals does not have any non-integer timed input sequence in the domain (Lemma~\ref{lemma:point_intervals_properties}). In particular, the permutation of outputs for an integer timed input sequence $\alpha_{int}$ can lead to the following: even if $\alpha_{int}$ is homing for a state pair at $j$-th level, the prolongation of $\alpha_{int}$ might stop being homing for the same pair of states at $\ell$-th level for $j < \ell$ (Lemma~\ref{lemma:prolongation_not_hs}). Therefore, one of the ways to modify the truncated tree derivation is to take into account also timed outputs that can be produced after the execution of a timed input sequence (\emph{timed output tail}), while deriving the successor tree. Namely, instead of defining the successor function solely over states, we propose defining it over states and timed output tails. This refinement ensures a more precise unrolling of the TFSM’s behavior, allowing for the correct derivation of HSs in the presence of point intervals. Let $\mathcal{S}_p$ be a TFSM with only point intervals, $s \in S$ and $\alpha = (i_1, t_1)\dots(i_n,t_n) \in Dom_{\mathcal{S}_p}(s)$ while $t(\alpha) = t_n$ denotes the execution time of $\alpha$.
We define the \textit{timed output tail} (or \textit{tail}, for short) of the response of $\mathcal{S}_p$ to $\alpha$ at $s$ as $timed\_out_{\mathcal{S}_p \geq t(\alpha)}(s, \alpha) = \{(o_1,\tau_1)^{k_1},\dots,(o_m,\tau_m)^{k_m} \}$ which for $k_j > 0$ is the set of all (possibly repeated) timed outputs $(o_j,\tau_j)$, that are in $timed\_out_{\mathcal{S}_p}(s, \alpha)$ and are produced at or after $t(\alpha)$. Formally, $(o, \tau-\tau(\alpha))^{k} \in timed\_out_{\mathcal{S}_p \geq t(\alpha)}(s, \alpha)$ if and only if timed output $(o, \tau)$ occurs $k$ times in every timed output sequence of $timed\_out_{\mathcal{S}_p}(s, \alpha)$ and $\tau \geq t(\alpha)$, $|timed\_out_{\mathcal{S}_p \geq t(\alpha)}(s, \alpha)| = k_1+\dots+k_m$. As an example, consider $\gamma=(i_1,1)(i_1,2)(i_1,3)(i_1,4)$ applied at state $s_0$ of TFSM $\mathcal{B}_4$. Since $t(\gamma) = 4$ and $timed\_out_{\mathcal{B}_4}(s_0, \gamma) = \{ (o_1,3)(o_1,4)(o_1,5)(o_1,6) \}$, the tail of $\gamma$ at $s_0$ is $\{ (o_1,0),(o_1,1),(o_1,2) \}$. Let $\mathcal{T}_{out}$ be the set of timed output tails of $\mathcal{S}_p$, Lemma~\ref{lemma:t_out_len} establishes the upper bounds on the cardinalities of the reachable tails and $\mathcal{T}_{out}$.

\begin{lemma}\label{lemma:t_out_len}
Let $\mathcal{S}_p = (S, I, O, G, D, h_S)$ be a (possibly partial) TFSM with only point intervals and $|h_S| = O(|S^k|)$, the following holds:
\begin{enumerate}
    \item $|timed\_out_{\mathcal{S}_p \geq t(\alpha)}(s, \alpha)| \leq \outlen$ for every $s \in S$ and $\alpha \in Dom_{\mathcal{S}_p}(s)$;
    \item $|\mathcal{T}_{out}| = O(|S|^{3k\cdot \outlen})$.
\end{enumerate}
\end{lemma}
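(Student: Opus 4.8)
The plan is to bound each tail's cardinality by tracking how many outputs can still be "in flight" when the last input of $\alpha$ is applied, and then to count the number of distinct reachable (state, tail) configurations. For part 1, the key observation is that an output produced by the $j$-th input of $\alpha = (i_1,t_1)\dots(i_n,t_n)$ occurs at time $t_j + d_j$, and it belongs to the tail exactly when $t_j + d_j \geq t_n = t(\alpha)$, i.e. when $t_n - t_j \leq d_j$. Since $\mathcal{S}_p$ has only point intervals, by Lemma~\ref{lemma:point_intervals_properties} the sequence $\alpha$ is integer and each successive gap $t_{\ell+1}-t_\ell$ is at least $\min\{G\}$ (the smallest point value); hence $t_n - t_j \geq (n-j)\min\{G\}$. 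Combining with $t_n - t_j \leq d_j \leq \max\{D\}$ gives $n - j \leq \max\{D\}/\min\{G\}$, so at most $\lceil \max\{D\}/\min\{G\}\rceil$ of the final inputs of $\alpha$ can contribute an output to the tail. Each such input contributes exactly one timed output (the machine is deterministic), so $|timed\_out_{\mathcal{S}_p \geq t(\alpha)}(s,\alpha)| \leq \outlen$, which is part 1. I would also note that this is robust to $\mathcal{S}_p$ being partial, since the argument only uses the transitions that $\alpha$ actually fires.

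For part 2, I would count reachable tails by encoding a tail as a multiset of at most $L := \outlen$ timed outputs of the form $(o,\tau)$ where $o \in O$ and $\tau$ is a nonnegative offset from $t(\alpha)$. The offset $\tau$ equals $d_j - (t_n - t_j)$ for some in-flight input $j$; since $d_j \in D$ and $t_n - t_j$ is a nonnegative integer bounded by $\max\{D\}$, the offset ranges over a set of size $O(\max\{D\})$, hence $O(|D|)$ up to the relevant constants — more carefully, $\tau \in \{0,1,\dots,\max\{D\}-1\}$, and since delays and gaps are built from $G$ and $D$, the number of realizable offsets is at most $|D|$ times a bounded factor, which I would absorb into the $O(|O||D|)$-many possible timed-output symbols. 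A single tail is then a multiset of size at most $L$ drawn from an alphabet of size $O(|O||D|)$, of which there are $O((|O||D|)^{L})$. Finally, a reachable tail is determined by a reachable (state, tail) pair, and the number of states is $|S|$; absorbing $|O|, |D| = O(|S|^k)$ (since $|h_S| = O(|S|^k)$ forces $|O|\cdot|D| = O(|S|^k)$ as each abstract output symbol appears on some transition) gives $|\mathcal{T}_{out}| = O(|S| \cdot (|S|^k)^{L}) = O(|S|^{1 + kL})$; tightening the exponent bookkeeping — each of the $L$ slots carries an output symbol $O(|S|^k)$ and a delay-offset $O(|D|) = O(|S|^k)$ and the "number of steps ago" index $O(|S|)$ — yields the stated $O(|S|^{3k\cdot\outlen})$ bound.

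The main obstacle is the second counting step: getting the exponent $3k\cdot\outlen$ to come out requires being careful about what data actually distinguishes two tails. The subtlety is that the offset $\tau$ is not a free integer — it is constrained by the arithmetic $\tau = d_j - (t_n - t_j)$, and $t_n - t_j$ itself is a sum of point-interval gaps — so one must argue that the number of attainable values of each coordinate (output letter, residual delay, position among the last $L$ inputs) is $O(|S|^k)$, and that these three coordinates per slot, across $L$ slots, multiply to give the claimed bound without hidden dependence on $\max\{D\}$ beyond what $|h_S| = O(|S|^k)$ permits. I would handle this by first showing $\max\{D\} = O(|S|^k)$ and $\min\{G\} \geq 1$ are consequences of the size bound and integrality, so that $\outlen$ is itself a well-defined quantity bounded in terms of the machine, and then doing the multiset count slot-by-slot. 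The rest — part 1 and the reduction of a tail to a bounded multiset — is routine once the in-flight-window inequality $n-j \leq \max\{D\}/\min\{G\}$ is in place.
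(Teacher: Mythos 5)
Your Part 1 is essentially the paper's argument: an output generated by an input applied before $t(\alpha) - \max\{D\}$ lands strictly before $t(\alpha)$, and at most $\outlen$ inputs fit into the remaining window because consecutive timestamps of a point-interval run differ by at least $\min\{G\}$. (Strictly speaking the window argument gives $\lfloor \max\{D\}/\min\{G\}\rfloor + 1$ contributing inputs rather than $\lceil \max\{D\}/\min\{G\}\rceil$, but the paper's own proof has the same off-by-one, so this is not a defect specific to your write-up.)

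Part 2 contains a genuine gap. You count tails as multisets over an alphabet of symbols $(o,\tau)$ and need the number of realizable offsets $\tau = d_j - (t_n - t_j)$ to be $O(|S|^k)$. Neither of your justifications holds. First, ``$O(\max\{D\})$, hence $O(|D|)$'' conflates the \emph{magnitude} of the largest delay with the \emph{number} of distinct delays: $|h_S| = O(|S|^k)$ bounds $|D|$ but says nothing about $\max\{D\}$, which may be arbitrarily large (e.g.\ $D = \{1, 2^{|S|}\}$), so your proposed repair ``$\max\{D\} = O(|S|^k)$ is a consequence of the size bound'' is false. Second, the set of realizable values of $t_n - t_j$ is the set of sums of at most $L = \outlen$ elements of $G$, which is not ``a bounded factor'': it can contain on the order of $|G|^{L}$ values, so counting distinct symbols directly yields an exponent of order $kL^2$, not $3kL$. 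The way the stated exponent actually arises --- and what the paper does --- is to count \emph{generating data} rather than distinct symbols: each of the at most $L$ in-flight slots is specified by a triple consisting of an output letter, a delay, and the point guard separating that input from the next one, i.e.\ an element of $O \times D \times G$, and every tail is the image of at least one such tuple under ``evaluate the timestamps and cut at $0$''. This gives $|\mathcal{T}_{out}| \leq \sum_{i=1}^{L}(|O||D||G|)^i = O(|S|^{3kL})$ immediately. Your slot decomposition (output, delay-offset, steps-ago index) does not achieve this: the steps-ago index is redundant and does not encode the inter-input gaps, so your triple does not determine the tail; replacing it by the gap $g \in G$ is precisely the missing ingredient.
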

\textbf{Sketch of the Proof} (see the proof in the Appendix). To prove Point 1, we establish that the maximal possible number of inputs applied between $t(\alpha) - \max\{D\}$ and $t(\alpha)$ is exactly $\outlen$. Since exactly one output is produced for every input, the claim holds.
To prove Point 2, we construct multiset $T$ of all possible timed output tails for $\mathcal{S}_p$ and show that $\mathcal{T}_{out} = \{cut\_right(\kappa,0) : \kappa \in \mathcal{T}\} \cup \{\epsilon\}$. In the second step, we show that $|\mathcal{T}_{out}| \leq \sum_{i=1}^{\outlen}(|O||D||G|)^i = \tlen$. Given that $|h_S| = O(|S|^k)$, the claim holds.

Let $tail = \{(o_1,\tau_1)^{k_1},\dots,(o_m,\tau_m)^{k_m} \}$ be a timed output tail of $\mathcal{T}_{out}$, we define the following operations over $tail$:
i) $cut\_left(tail,t) = \{ (o, \tau)^k\in tail\ |\ \tau < t \}$ is the set of the timed outputs of $tail$ such that all their timestamps are less than $t$, ii) $cut\_right(tail,t) = \{ (o, \tau)^k\in tail\ |\ \tau \geq t \}$ is the set of the timed outputs of $tail$ such that all their timestamps are greater than or equal to $t$, and iii) $shift(tail,t) = \{ (o, \tau + t)^k\ |\ (o, \tau)^k \in tail \}$. We say that $(s',tail')=(i,t)/ \{\dots,(o_j,\tau_j)^{k_j},\dots \}$-$succ((s, tail))$ if $s' = next\_state_{\mathcal{S}_p}(s, (i, t))$, $\{(o, t+d)\} = timed\_out_{\mathcal{S}_p}(s, (i, t)) \}$, $\{\dots,(o_j,\tau_j)^{k_j},\dots\}$ $=cut\_left(shift(tail,-t),0)$ and $tail'=cut\_right(shift(tail,-t),0) \cup \{ (o,d) \}$. As an example, consider TFSM $\mathcal{B}_4$ (Fig.~\ref{fig:counterexample_tfsm}) at state $s_2$ when timed outputs $(o_1, 0)$ and $(o_1, 1)$ are pending. Since $s_2 \stackrel{i_1, [1,1] / o_1, 3}{\rightarrow} s_3$, \\$cut\_right(shift(\{(o_1,0),(o_1,1)\},-1),0)\cup \{ (o_1, 3)\}=\{(o_1,0), (o_1,3)\}$ and $cut\_left(shift($ $\{(o_1,0),(o_1,1)\},-1),0)=\{(o_1,-1)\}$, it holds that \\$(s_3, \{(o_1,0),(o_1,3)\})=(i_1, 1)/\{(o_1,-1)\}$-$succ(s_2, \{(o_1,0),(o_1,1)\})$. Function $(i,t)/\{\dots,(o_j,\tau_j)^{k_j},\dots\}$-$succ: S \times \mathcal{T}_{out} \rightarrow S \times \mathcal{T}_{out}$ can be extended to operate over the subsets of $S \times \mathcal{T}_{out}$. In particular, let $Q_1, Q_2$ be subsets of $S\times \mathcal{T}_{out}$, we say that $Q_2 = (i,t)/\{\dots,(o_j,\tau_j)^{k_j},\dots\}$-$succ(Q_1)$ if and only if for every $q' \in Q_2$ there exists $q \in Q_1$ such that $q' = (i,t)/\{\dots,(o_j,\tau_j)^{k_j},\dots\}$-$succ(q)$.

Given a TFSM $\mathcal{S}_p$ with only point intervals. In order to derive an HS for $\mathcal{S}_p$, we modify the successor function in Algorithm~\ref{alg:hs_synthesis} as discussed above together with the labeling of nodes. Instead of labeling them with the set of subsets of $S$, we label them with the set of subsets of $S \times \mathcal{T}_{out}$, without changing truncated rules.
Therefore, for TFSM $\mathcal{B}_4$, the root of the tree will be labeled with $\overline{(s_0,\{ \varepsilon \}),(s_1,\{ \varepsilon \}),(s_2,\{ \varepsilon \}),(s_3,\{ \varepsilon \})}$; and moreover it will only have one branch which is truncated using Rule 2 (Fig.~\ref{fig:tst_configuration}). Therefore, $\mathcal{B}_4$ does not have a homing sequence.
The following theorem establishes an upper bound on the length of a shortest HS for TFSMs with only point intervals (when it exists).

\begin{theorem}\label{theorem:point_tfsm_hs_bound}
Let $\mathcal{S}_p = (S, I, O, G, D, h_S)$ be a (possibly partial) TFSM with only point intervals, $|h_S| = O(|S|^k)$ and $\alpha$ be a shortest HS for $\mathcal{S}_p$, it holds that $|\alpha| < O(2^{|S|^{6k+\outlen+2}})$.
\end{theorem}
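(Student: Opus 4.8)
The plan is to bound $|\alpha|$ by the depth of the modified truncated successor tree for $\mathcal{S}_p$ described just above the theorem: its nodes are labeled with sets of subsets of $S\times\mathcal{T}_{out}$, its edges with timed inputs $(i,u)$ for point guards $[u,u]\in G$, and its truncation rules are the unchanged Rules~1 and~2 of Algorithm~\ref{alg:hs_synthesis}. By Lemma~\ref{lemma:point_intervals_properties} the domains of $\mathcal{S}_p$ contain only integer timed input sequences, so the sequences enumerated along the tree's paths are exactly the candidate HSs, and --- arguing as for Theorem~\ref{theorem:correctness_of_ds_synthesis_alg} --- if $\mathcal{S}_p$ has an HS then a shortest one labels a shortest root-to-leaf path ending in a node truncated by Rule~1. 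Since Rule~2 makes a node terminal as soon as its label equals (or contains) the label of an already-generated node, all labels along a root-to-leaf path are pairwise distinct; hence the depth of the tree, and therefore $|\alpha|$, is strictly less than the number of distinct node labels that can occur.

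It remains to count the labels. By Lemma~\ref{lemma:t_out_len} together with $|h_S|=O(|S|^k)$ --- which gives $|O|,|D|,|G|=O(|S|^k)$ and hence $|\mathcal{T}_{out}|=O(|S|^{3k\cdot\outlen})$ --- the ground set $S\times\mathcal{T}_{out}$ has size $O(|S|^{3k\cdot\outlen+1})$. The key structural observation is that every block occurring in a label is an uncertainty set $\{(next\_state_{\mathcal{S}_p}(s,\beta),tail_s):s\in C\}$, where $\beta$ is the path reaching the node and $C\subseteq S$ is a class of the partition of the initial states according to their observed timed output along $\beta$; consequently every block has at most $|S|$ elements, a label contains at most $|S|$ blocks, and the tree collapses blocks of equal content. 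Therefore the number of candidate blocks is at most $|S\times\mathcal{T}_{out}|^{O(|S|)}$ and the number of labels at most $|S\times\mathcal{T}_{out}|^{O(|S|^2)}$, whose binary logarithm is $O\!\bigl(k\,\outlen\,|S|^2\log|S|\bigr)$. A short estimate (using $|S|\ge 2$ and $k,\outlen\ge1$, so that $k\,\outlen\,\log|S|\le |S|^{6k+\outlen}$) then yields that the number of labels, and hence $|\alpha|$, is $O\bigl(2^{|S|^{6k+\outlen+2}}\bigr)$.

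The step I expect to be the main obstacle is justifying the structural observation of the second paragraph: one must unwind the modified successor relation over states-with-tails to verify that the blocks at every level remain images of the initial states under the current path --- so that their number never exceeds $|S|$ despite the repeated splitting on observed outputs --- and that distinct blocks of equal content are genuinely identified. Once this is in place, the remaining work is routine bookkeeping --- folding $|\mathcal{T}_{out}|=O(|S|^{3k\cdot\outlen})$, the partition count, and the $\log$-factors into the single clean exponent $6k+\outlen+2$ --- and the generous constants $6k$ and $+2$ there are exactly the slack that absorbs these lower-order contributions.
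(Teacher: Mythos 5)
Your route is genuinely different from the paper's: the paper never argues through the modified successor tree at all, but instead builds a pairwise abstraction $\mathcal{A}_{\mathcal{S}_p}$ whose states are sets of unordered pairs over $S\times\mathcal{T}_{out}$, proves Claims~\ref{claim:pairs_not_varnothing} and~\ref{claim:pairs_varnothing} characterizing exactly when a pair of initial states is still unresolved after $\alpha$ in terms of the $\delta$-image of $\{(s_1,\varepsilon),(s_2,\varepsilon)\}$, and reads the bound off the number of states of that abstraction. Your counting of tree labels is fine as far as it goes (at most $|S|$ blocks of size at most $|S|$ over a ground set of size $|S|\cdot O(|S|^{3k\cdot\outlen})$ gives $2^{O(k\,\outlen\,|S|^2\log|S|)}$ labels, comfortably inside the stated bound), so if the tree argument were sound it would even yield a sharper estimate.

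The gap is the claim that a shortest HS labels a path to a node truncated by the \emph{unchanged} Rule~1. That claim is false. The homing condition of Definition~\ref{def:hs} compares the \emph{entire} timed output response, including the pending outputs produced at or after $t(\alpha)$, whereas the tree splits blocks only on the already-emitted outputs (the $cut\_left(\cdot)$ part of the successor); a pair of initial states whose emitted prefixes coincide but whose tails differ is homed by $\alpha$, yet remains a two-element block. Concretely, take $S=\{s_0,s_1\}$ with $s_0\stackrel{i,[1,1]/o,1}{\longrightarrow}s_1$ and $s_1\stackrel{i,[1,1]/o,2}{\longrightarrow}s_0$: the sequence $(i,1)$ is an HS (the responses are $\{(o,2)\}$ and $\{(o,3)\}$), but in the tree both successors emit $\emptyset$ and land in the single block $\{(s_1,\{(o,1)\}),(s_0,\{(o,2)\})\}$, so Rule~1 is not met at depth one, and nothing guarantees a Rule-1 node is reachable at all --- in which case your depth bound says nothing about $|\alpha|$. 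The accepting condition must be the one the paper actually uses, namely that within every surviving pair the two tails differ ($\kappa_1\neq\kappa_2$), and proving that this condition together with Rule-2-style pruning is both sound and complete is precisely the content of Claims~\ref{claim:pairs_not_varnothing} and~\ref{claim:pairs_varnothing}, i.e., the faithfulness of the (state, tail) abstraction with respect to $timed\_out$ and $next\_state$. What you flag as the main obstacle --- that blocks remain images of the initial states --- is the easy bookkeeping; the missing piece is this equivalence, without which the bound on the tree depth does not transfer to a bound on the length of a shortest HS.
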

\textbf{Sketch of the Proof} (see the proof in the Appendix). We first show that for every $\alpha \in \mathcal{T}_{out}$ and every $g \in G$ the result of the application $cut\_right$ and $shift$ remains in $\mathcal{T}_{out}$. This property allows us to define the function $\delta: \mathcal{W} \times I \times G \rightarrow{\mathcal{W}}$, where $\mathcal{W} = \binom{S\times\mathcal{T}_{out}}{2} \cup \{\varnothing\}$ is the set of all unordered pairs of (state, timed output tail). In the second step, for TFSM $\mathcal{S}_p$ we define the pairwise automaton (abstraction) $\mathcal{A}_{\mathcal{S}_p} = (\mathcal{D}, (I \times G) ,\tau)$, where $\mathcal{D} = \{W \in 2^\mathcal{W} : |W| \leq \binom{S}{2}\}$, $W_{init} = \{\{(s_1, \epsilon) (s_2, \epsilon) : \{s_1,s_2\} \in \binom{S}{2}\}\}$ and $\tau: \mathcal{D} \times (I \times G) \rightarrow \mathcal{D}$ is the transition relation.

Finally, we show that $\mathcal{S}_p$ has an HS if and only if there exists a path from state $W_{init}$ to any state $W$, where for each $w = \{(s_1, \kappa_1), (s_2, \kappa_2)\} \in W$ we have $\kappa_1 \neq \kappa_2$. Since $$|\mathcal{D}| \leq 2^\mathcal{|W|} =  2^{\binom{|S \times \mathcal{T}_{out}|}{2} + 1} \leq 2^{\binom{|S| \cdot \tlen}{2} + 1},$$ the theorem holds.

\begin{figure}[ht]
  \centering
  \begin{tikzpicture}[node distance = 1.5cm, auto,scale=0.68, every node/.style={scale=1.0}]
\tikzstyle{vecArrow} = [thick, decoration={markings,mark=at position
   1 with {\arrow[semithick]{open triangle 60}}},
   double distance=1.4pt, shorten >= 5.5pt,
   preaction = {decorate},
   postaction = {draw,line width=1.4pt, white,shorten >= 4.5pt}]
\tikzstyle{innerWhite} = [semithick, white,line width=1.4pt, shorten >= 4.5pt]
\tikzstyle{decision} =
        [
                diamond,
                draw,
                fill = green!20,
                text width = 6em,
                text badly centered,
                node distance = 2cm,
                inner sep = 0pt
        ]
\tikzstyle{block} =
        [
                rectangle,
                draw,
                text width = 16em,
                text centered,
                rounded corners,
                minimum height = 2em
        ]
\tikzstyle{block1} =
        [
                rectangle,
                draw,
                text width = 24em,
                text centered,
                rounded corners,
                minimum height = 2em
        ]
\tikzstyle{block2} =
        [
                rectangle,
                draw,
                text width = 34em,
                text centered,
                rounded corners,
                minimum height = 2em
        ]
\tikzstyle{block3} =
        [
                rectangle,
                draw,
                text width = 38em,
                text centered,
                rounded corners,
                minimum height = 2em
        ]
\tikzstyle{block4} =
        [
                rectangle,
                draw,
                text width = 30em,
                text centered,
                rounded corners,
                minimum height = 2em
        ]
\tikzstyle{line} =
        [
                draw,
                -latex'
        ]
\tikzstyle{cloud} =
        [
                draw,
                ellipse,
                fill = red!20,
                node distance = 4cm,
                minimum height = 2em
        ]
\tikzstyle{suite}=[->,>=stealth’,thick,rounded corners=4pt]

        \node [block] (s0) {
            $\overline{(s_0,\{ \varepsilon \}),(s_1,\{ \varepsilon \}),(s_2,\{ \varepsilon \}),(s_3,\{ \varepsilon \})}$
        };
        \node [block1, below of = s0] (s1) {
            $\overline{(s_1,\{(o_1, 2)\},(s_2,\{(o_1, 2)\}),(s_3,\{(o_1, 3)\}),(s_0,\{(o_1, 1)\})}$
        };
        \node [block4, below of = s1] (s2) {
            $\overline{(s_2,\{(o_1,1),(o_1,2)\}),(s_3,\{(o_1,1),(o_1,3)\}),(s_0,\{(o_1,2),(o_1,1)\})}$,
            \dots
        };
        \node [block4, accepting, below of = s2] (s3) {
            $\overline{(s_2,\{(o_1,1),(o_1,2)\}),(s_3,\{(o_1,1),(o_1,3)\}),(s_0,\{(o_1,2),(o_1,1)\})}$,
            \dots
        };
        
        \draw [thick, ->] (s0) -- (s1) node[midway, right] {$(i_1, 1)$};
        \draw [thick, ->] (s1) -- (s2) node[midway, right] {
            $(i_1, 1)$
        };
        \draw [thick, ->] (s2) -- (s3) node[midway, right] {
            $\dots$
        };
\end{tikzpicture}
  \caption{Fragment of the modified truncated successor tree for $\mathcal{B}_4$}
  \label{fig:tst_configuration}
\end{figure}

The proof of Theorem~\ref{theorem:point_tfsm_hs_bound} gives us also the conclusion that if $\frac{\max\{D\}}{\min\{G\}} = poly(|S|)$, then checking whether a given (possibly partial) point-interval deterministic TFSM has a homing sequence is in PSPACE. Indeed, the NPSPACE algorithm would non-deterministically apply $j$-th input of the desired sequence until it reaches the upper bound. Note that each state of abstraction $\mathcal{A}_{\mathcal{S}_p}$ is of the form $W = \{\{(s_1, \kappa_1), (s_2, \kappa_2)\}: s_1, s_2 \in S, \kappa_1, \kappa_2 \in \mathcal{T}_{out}\}$ with $|W| \leq \binom{S}{2}$. Since $\frac{\max\{D\}}{\min\{G\}}$ is polynomial, a timed output tail is also polynomial (see Lemma~\ref{lemma:t_out_len}), therefore we can encode a state of $\mathcal{A}_{\mathcal{S}_p}$ using the polynomial space in terms of $|S|$. In $j$-th iteration, we must store only the state of $\mathcal{A}_{\mathcal{S}_p}$ where we apply the input, the result of that computation and the $j$-th input of the sequence. The Savitch's Theorem~\cite{SAVITCH1970177} concludes the proof, while Theorem~\ref{theorem:pspace_c_pointinterval} establishes the PSPACE-completeness of homing problem for partial TFSMs with only point intervals.

\begin{theorem}
\label{theorem:pspace_c_pointinterval}
Let $\mathcal{S}_p = (S, I ,O, D, G, h_S)$ be a partial point-interval deterministic TFSM, $|h_S| = poly(|S|)$ and $\frac{\max\{D\}}{\min\{G\}} = poly(|S|)$, checking if $\mathcal{S}_p$ has an HS is PSPACE-complete.
\end{theorem}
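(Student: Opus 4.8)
The plan is to combine the membership argument that was already sketched immediately before the statement with a polynomial-time reduction establishing hardness. For membership, recall that under $\frac{\max\{D\}}{\min\{G\}}=poly(|S|)$ the timed output tails of $\mathcal{S}_p$ have $poly(|S|)$ size (Lemma~\ref{lemma:t_out_len}), so a state of the pairwise abstraction $\mathcal{A}_{\mathcal{S}_p}$ of Theorem~\ref{theorem:point_tfsm_hs_bound} --- a family of at most $\binom{|S|}{2}$ pairs $\{(s_1,\kappa_1),(s_2,\kappa_2)\}$ --- is encodable in polynomial space; an NPSPACE procedure guesses the candidate HS input by input while storing only the current abstraction state, the last input, and a step counter bounded by the $2^{poly(|S|)}$ estimate of Theorem~\ref{theorem:point_tfsm_hs_bound}, accepting once every pair of the current state has $\kappa_1\neq\kappa_2$; Savitch's theorem~\cite{SAVITCH1970177} then gives membership in PSPACE.

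For hardness I would reduce from the homing-sequence existence problem for partial deterministic FSMs, which is PSPACE-complete (it already contains the carefully-synchronizing-word problem for partial DFAs, obtained by taking a singleton output alphabet; cf.~\cite{Ito},~\cite{volkov19}). Given a partial deterministic FSM $\mathcal{M}=(S,I,O,h)$, build the TFSM $\mathcal{S}_p=(S,I,O,\{[1,1]\},\{1\},h')$ with $h'=\{(s,i,[1,1],o,1,s'):(s,i,o,s')\in h\}$. Then $\mathcal{S}_p$ is a partial, point-interval, deterministic TFSM (trivially deterministic, since there is a single guard per input), with $|h'|=|h|=poly(|S|)$ and $\max\{D\}=\min\{G\}=1$, so $\frac{\max\{D\}}{\min\{G\}}=1=poly(|S|)$; all hypotheses of the statement are met by a polynomial-time construction.

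The core of the argument is then to show that $\mathcal{S}_p$ has an HS iff $\mathcal{M}$ does. The guard $[1,1]$ forces consecutive timed inputs to be exactly one time unit apart, and since both $next\_state_{\mathcal{S}_p}$ and the \emph{relative} timing of the produced outputs are invariant under a common shift of all timestamps, a candidate HS may be taken of the form $\alpha=(i_1,1)(i_2,2)\dots(i_n,n)$; such an $\alpha$ belongs to $Dom_{\mathcal{S}_p}(s)$ exactly when $i_1\cdots i_n$ is defined from $s$ in $\mathcal{M}$. Since the single delay equals $1$, the outputs of $\alpha$ from any state are emitted at the strictly increasing instants $2,3,\dots,n+1$, so no output reordering occurs --- precisely the phenomenon present in Theorem~\ref{theorem:B_n^t_is_not_homed} but neutralized here by uniform delays --- whence $timed\_out_{\mathcal{S}_p}(s,\alpha)=\{(o_1,2)(o_2,3)\dots(o_n,n+1)\}$ with $o_1\cdots o_n$ the response of $\mathcal{M}$ from $s$, and $next\_state_{\mathcal{S}_p}(s,\alpha)=next\_state_{\mathcal{M}}(s,i_1\cdots i_n)$. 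Comparing Definitions~\ref{def:hs} and~\ref{def:hs_fsm}, the homing condition for $\mathcal{S}_p$ on $\alpha$ matches, pair by pair, the homing condition for $\mathcal{M}$ on $i_1\cdots i_n$, so $\alpha$ homes $\mathcal{S}_p$ iff $i_1\cdots i_n$ homes $\mathcal{M}$; together with membership this gives PSPACE-completeness.

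I expect the main obstacle to be exactly this last pairwise correspondence, and within it the handling of state pairs that lie outside the domain of the candidate sequence: Definitions~\ref{def:hs} and~\ref{def:hs_fsm} must be applied with mutually compatible conventions, so that an undefined run is never conflated with a defined one. Should the literal readings diverge, I would patch the reduction with a trivial gadget that routes every originally undefined pair $(s,i)$ to a fresh state emitting a fresh output symbol (that fresh state itself having an undefined input, so $\mathcal{S}_p$ stays partial and deterministic); this makes hitting an undefined transition observable without altering any defined behavior, so the correspondence above goes through verbatim.
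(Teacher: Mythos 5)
Your membership argument is exactly the paper's (polynomially bounded tails, polynomial encoding of a state of $\mathcal{A}_{\mathcal{S}_p}$, nondeterministic guessing plus Savitch), and the core of your hardness construction --- guard $[1,1]$, delay $1$, identity on states and transitions --- is also the construction the paper uses. The difference is that the paper reduces \emph{directly} from careful synchronization of partial automata~\cite{DBLP:journals/mst/Martyugin14}, taking a \emph{singleton} output alphabet $\{o\}$ and delay set $\{1\}$, so that $timed\_out$ coincides on every pair of states and Definition~\ref{def:hs} collapses to ``every pair is merged with $next\_state\neq\bot$'', i.e.\ to careful synchronization. You instead factor the reduction through an intermediate problem, ``HS existence for partial deterministic FSMs'', and this is where the genuine gap sits.

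The gap is that the PSPACE-hardness of that intermediate problem is not established under the paper's Definition~\ref{def:hs_fsm}. For a partial FSM, if $\alpha$ is undefined at one of $s_1,s_2$ then $out_{\mathcal{M}}(s_1,\alpha)\cap out_{\mathcal{M}}(s_2,\alpha)=\emptyset$ and the homing condition for that pair holds \emph{vacuously}; with a singleton output alphabet a word that is undefined at some state can therefore be ``homing'' without being carefully synchronizing (two states, $\delta(q_1,a)=q_1$, $\delta(q_2,a)$ undefined: $a$ is vacuously homing). So the claim ``partial FSM HS contains careful synchronization as the singleton-output special case'' fails as stated, and your gadget does not repair it: routing undefined transitions to a fresh sink with a fresh output makes a sequence that hits the sink from some states but not others \emph{distinguish} those states by output, so the gadgeted FSM can again be homed by words that do not carefully synchronize the original PFA. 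The clean fix is precisely to drop the intermediate problem and reduce careful synchronization straight to the TFSM, where Definition~\ref{def:hs} explicitly demands $next\_state\neq\bot$ whenever the timed outputs agree --- which, with a single output letter and a single delay, is for every pair. Your analysis of the timed side (outputs emitted at the strictly increasing instants $2,\dots,n+1$, hence no permutation, hence $timed\_out$ and $next\_state$ mirror the untimed machine) is correct and is all that is needed once the chain is collapsed; at that point your proof becomes the paper's.
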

\begin{proof}
We know that the problem is in PSPACE, so we need only to prove that it is PSPACE-hard. The proof is a reduction of the problem of checking if a given PFA (partial finite automaton) $\mathcal{A}$ is carefully synchronizing \cite{DBLP:journals/mst/Martyugin14}. Let $\mathcal{A} = (Q, \Sigma, \delta)$, we define a point-interval deterministic TFSM $\mathcal{S}_{\mathcal{A}} = (Q, \Sigma,\{o\} ,\{1\}, \{1\}, h_{\delta})$ with $h_\delta = \{(q,a,1,1,\delta(q,a)) : q \in Q \land a \in \Sigma\}$. For any timed sequence $\alpha \in (I \times G)^\ast$ and every pair of states $q_1, q_2 \in Q$ we have $timed\_out_{\mathcal{S}_{\mathcal{A}}}(q_1, \alpha) = timed\_out_{\mathcal{S}_{\mathcal{A}}}(q_2, \alpha)$ (1). Define also for $w = a_1 \ldots a_n$, a sequence $\alpha_w = (a_1, 1)\ldots(a_n, n)$. For every $q$, such that $\delta(q, w) = q'$,  $next\_state_{\mathcal{S}_{\mathcal{A}}}(q, \alpha_w) = q'$ (2). Obviously, from (1), if $w$ is carefully synchronizing for $\mathcal{A}$ (there exists $\bar{q}$ such that $\delta(q,w) = \bar{q}$ for all $q \in Q$), then $\alpha_w$ is homing for $\mathcal{S}_{\mathcal{A}}$. Conversely, if $\alpha = (a_1, 1)\ldots(a_n, n)$ is a homing sequence, then, from (2), there exists $\bar{q}$, such that for every $q \in Q$ $next\_state_{\mathcal{S}_{\mathcal{A}}}(q, \alpha) = \bar{q}$. But this means that $\mathcal{A}$ is carefully synchronized by the word $w_{\alpha} = a_1\ldots a_n$. The reduction is performed in polynomial time, so the result holds.
\end{proof}

\section{Conclusion \& future work}\label{sec:conclusion}
In this paper, we have defined synchronizing and homing sequences for Timed Finite State Machines with output delays and analyzed their properties. We have developed novel approaches for deriving SSs and HSs for TFSMs together with the relevant complexity analysis. Additionally, we have explored the correspondence between these sequences in TFSMs and their FSM abstractions.

This paper opens a number of directions for future work. One important direction is to address the challenge of deriving HSs for TFSMs with arbitrary timed guards. Another problem is how to derive HSs for TFSMs when we cannot observe output response time. Furthermore, it would be valuable to define and investigate the properties of sequences that synchronize (or home) a TFSM not only to a specific state but also to a configuration or location, representing a current state and a combination of concurrently running procedures.
\section*{Acknowledgments} Partial funding for this work was provided by the Erasmus+ program.
\bibliographystyle{elsarticle-harv} 
\bibliography{references}
\appendix
\section{Statement proofs for the Reviewers}
\noindent
\textbf{Theorem~\ref{theorem:correctness_of_hs_synthesis_alg} (Correctness of Algorithm~\ref{alg:hs_synthesis}).} \textit{A weakly-complete deterministic TFSM $\mathcal{S}$ has a homing sequence if and only if the truncated successor tree derived by Algorithm~\ref{alg:hs_synthesis} has a node truncated using Rule 1.
}
\begin{proof}
Given a weakly-complete deterministic TFSM $\mathcal{S} = (S, I, O, G, D, h_{S})$, let $U$ and $V$ denote the minimal left and maximal right boundaries of the timed guards in $G$, respectively. Since the nodes of the TST are labeled with sets of subsets of states, there are at most $2^{2^{|S|}}$ distinct node labels. Due to Rule 2, two nodes with identical labels cannot occur in the same branch. According to the construction of the TST, each node can have at most $(V-U)\cdot |I|$ successors. Therefore, the TST derived by Algorithm~\ref{alg:hs_synthesis} is finite.

$\Leftarrow$ Let $P$ be a node truncated using Rule 1 and $(i_1,\delta_1)(i_2,\delta_2)\dots(i_{\ell},\delta_{\ell})$ labels the path from the root to $P$. We show that $\alpha=(i_1,\delta_1)(i_2,\delta_1+\delta_2)\dots(i_{\ell},\delta_1+\delta_2+\dots+\delta_{\ell})$ is a homing sequence. Assume that it is not true, i.e., there exist states $s$ and $s'$ of $\mathcal{S}$ such that $timed\_out_{\mathcal{S}}(s,\alpha)=timed\_out_{\mathcal{S}}(s',\alpha)$ and $next\_state_{\mathcal{S}}(s,\alpha)\neq next\_state_{\mathcal{S}}(s',\alpha)$. Two options are possible: \textbf{Case 1.} In the path from the root to $P$ there exists a node at level $k$, $k\in \{ 1, \dots, \ell \}$ such that $(i_1,\delta_1)\dots(i_{k-1},\delta_1+\dots+\delta_{k-1})$-successors of $s$ and $s'$ are in the same block, while $(i_1,\delta_1)\dots(i_{k-1},\delta_1+\dots+\delta_{k-1})(i_k,\delta_1+\dots+\delta_{k-1}+\delta_k)$-successors of $s$ and $s'$ are in different blocks; and \textbf{Case 2.} For every $j$, $j\in \{ 1, \dots, \ell \}$ the $(i_1,\delta_1)\dots(i_j,\delta_1+\dots+\delta_{j-1}+\delta_j)$-successors of $s$ and $s'$ are in the same block.

\textbf{Case 1.} Choose the minimal $k$, $k\in \{ 1, \dots, \ell \}$ such that $(i_1,\delta_1)\dots(i_{k-1},\delta_1+\dots+\delta_{k-1})$-successors of $s$ and $s'$ are in the same block and $(i_1,\delta_1)\dots(i_{k-1},\delta_1+\dots+\delta_{k-1})(i_k,\delta_1+\dots+\delta_{k-1}+\delta_k)$-successors of $s$ and $s'$ are in different blocks. The branch labeled with $(i_1,\delta_1)\dots(i_{k},\delta_{k})$ is as follows:$$\{\overline{s, s', \dots} \} \stackrel{i_1,\delta_1}{\longrightarrow} \dots \stackrel{i_{k-1},\delta_{k-1}}{\longrightarrow} \{ \overline{s_{k-1}, s'_{k-1}, \dots}, \dots \} \stackrel{i_k,\delta_k}{\longrightarrow} \{\overline{s_k, \dots}, \overline{s'_k, \dots}, \dots \}.$$
Thus, $timed\_out_{\mathcal{S}}(s_{k-1},(i_k,\delta_k))\neq timed\_out_{\mathcal{S}}(s'_{k-1},(i_k,\delta_k))\Rightarrow$\\ $\Rightarrow \{ \alpha \text{\ is non-integer} \} \Rightarrow 
timed\_out_{\mathcal{S}}(s,(i_1,\delta_1)\dots(i_k,\delta_1+\dots+\delta_k))\neq $\\ $timed\_out_{\mathcal{S}}(s',(i_1,\delta_1)\dots(i_k,\delta_1+\dots+\delta_k)) \Rightarrow \{ \text{the proof of Lemma~\ref{lemma:prolongation_not_hs}} \} \Rightarrow timed\_out_{\mathcal{S}}(s,\alpha)\neq timed\_out_{\mathcal{S}}(s',\alpha)$, it is a contradiction.

\textbf{Case 2.} Assume that for every $j$, $j\in \{ 1, \dots, \ell \}$,  $(i_1,\delta_1)\dots(i_j,\delta_1+\dots+\delta_j)$-successors of $s$ and $s'$ are in the same block. Since $P$ is labeled only with singletons, there exists $k\in \{ 1, \dots, \ell \}$ such that $s_k=s'_k$. Therefore, $next\_state_{\mathcal{S}}(s,(i_1,\delta_1)\dots(i_k,\delta_1+\dots+\delta_k))= next\_state_{\mathcal{S}}(s',(i_1,\delta_1)\dots(i_k,\delta_1+\dots+\delta_k)) \Rightarrow \{ \mathcal{S} \text{ is deterministic} \} \Rightarrow next\_state_{\mathcal{S}}(s,\alpha)= next\_state_{\mathcal{S}}(s',\alpha)$, it is a contradiction.

$\Rightarrow$ Assume that $\mathcal{S}$ has a homing sequence, but the TST derived by Algorithm~\ref{alg:hs_synthesis} does not have any node truncated using Rule 1.\\
Let $\alpha=(i_1,t_1)(i_2,t_2)\dots(i_{\ell},t_{\ell})$ be a shortest HS for TFSM $\mathcal{S}$. Define $\delta_1=\lfloor t_1 \rfloor+2^{-1}$ and $\delta_j=\lfloor t_j-t_{j-1} \rfloor+2^{-j}$ for $j\in \{ 2,\dots,\ell \}$. By this choice of $\delta_1$, $\delta_2$, \dots, $\delta_n$, the sequence $\alpha'=(i_1,\delta_1)(i_2,\delta_1+\delta_2)\dots(i_{\ell},\delta_1+\delta_2+\dots+\delta_{\ell})$ is a non-integer timed input sequence. \\
Since all timed guards are left-closed and right-open, for every state of $\mathcal{S}$ the sequences $\alpha$ and $\alpha'$ activate the same sequence of transitions, therefore $\alpha \sim_{\mathcal{S}} \alpha'$. Thus, $\alpha'$ is also a homing sequence (see Theorem~\ref{theorem:congruent_sequences}). \\
By construction, the TST has the branch labeled with $(i_1,\delta_1)(i_2,\delta_2)\dots(i_{\ell},\delta_{\ell})$ leading to a node $P$. Due to the fact that $\alpha'$ is an HS, $\alpha'$ either splits or merges every pair of states of $\mathcal{S}$, thus $P$ contains only singletons and $P$ is truncated using Rule 1, it is a contradiction.

\end{proof}

\noindent
\textbf{Lemma~\ref{lemma:t_out_len}.}
\textit{Let $\mathcal{S}_p = (S, I, O, G, D, h_S)$ be a (possibly partial) TFSM with only point intervals and $|h_S| = O(|S^k|)$, the following holds:
\begin{enumerate}
    \item $|timed\_out_{\mathcal{S}_p \geq t(\alpha)}(s, \alpha)| \leq \outlen$ for every $s \in S$ and $\alpha \in Dom_{\mathcal{S}_p}(s)$;
    \item $|\mathcal{T}_{out}| = O(|S|^{3k\cdot \outlen})$.
\end{enumerate}}
\begin{proof}
1. We will count how many outputs can occur after time $t(\alpha)$. Observe that any output produced by the input from $\alpha$ applied at time $t < t(\alpha) - \max\{D\}$, must be contained in $timed\_out_{\mathcal{S}_p < t(\alpha)}(s, \alpha)$. The maximum possible number of inputs applied between time $t(\alpha) - \max\{D\}$ and $t(\alpha)$ is exactly $\outlen$. Since the TFSM produces exactly one output for every input, the claim holds.

2. We first will give a precise definition of the set of all pending outputs denoted as $\mathcal{T}_{out}$. Define a set of multisets $$\mathcal{T} = \{ \{(o_1, d'_1),(o_2,d'_2 - t_1),\ldots,(o_k, d'_k - t_{k-1}) \}: (o_i \in O )\land (d'_i \in D) \land (t_1 \in G) \}$$
with \begin{itemize}
    \item $k \leq \outlen$;
    \item $t_{i} = t_{i-1} + g$ for $i \in \{ 2, \dots, k \}$;
    \item $g \in G$.
\end{itemize}

Then we construct set $\mathcal{T}_{out} = \{cut\_right(\kappa,0) : \kappa \in \mathcal{T}\} \cup \{\epsilon\}$.
In other words, set $\mathcal{T}_{out}$ encodes all pending outputs for input sequences, that is, $\mathcal{T}_{out}$ is the set of all possible $shift(timed\_out_{\mathcal{S}_p \geq t(\alpha)}(s, \alpha), -t(\alpha))$ for every $\alpha$ and for every $s$.

Consider $\kappa \in \mathcal{T}$ such that $|\kappa| = k$. Obviously, $\kappa = \{ (o_1, d'_1)( o_t, d'_2 - g'_1) \ldots, (o_k, d'_k - g'_1 - (\sum_{i=2}^{k-1} g'_i)) \}$. Note that we can choose an output of each element in the sequence $\kappa$ in $|O|$ ways, and we can choose a delay in $|D|$ ways. For each next input, we add a guard in one of $|G|$ ways and a delay in $|D|$ ways. So, the number of sequences of length $k$ is equal to $(|O||D|)^k|G|^{k-1}$. \\ 
Thus, $|\mathcal{T}_{out}| - 1 \leq  |\mathcal{T}| \leq \sum_{i=1}^{\outlen}(|O||D||G|)^i = \tlen$. Since $|h_S| = O(|S|^k)$, the claim holds.
\end{proof}

\noindent
\textbf{Theorem~\ref{theorem:point_tfsm_hs_bound}.}
\textit{
Let $\mathcal{S}_p = (S, I, O, G, D, h_S)$ be a (possibly partial) homing TFSM with only point intervals, $|h_S| = O(|S|^k)$ and $\alpha$ be a shortest HS for $\mathcal{S}_p$, it holds that $|\alpha| < O(2^{|S|^{6k+\outlen+2}})$.
}

\begin{proof}
We start with a simple claim: 
\begin{claim}
\label{claim:t_in_W}
If $\kappa \in \mathcal{T}_{out}$, then $cut\_right(shift(\kappa, -g) \cup timed\_out_{\mathcal{S}_p}(s_1,(i,g)), 0)  \in \mathcal{T}_{out}$ for any $s \in S, i \in I, g \in G$.
\end{claim}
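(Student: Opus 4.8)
The plan is to prove this closure property \emph{syntactically}, from the explicit description of $\mathcal{T}_{out}$ established in the proof of Lemma~\ref{lemma:t_out_len}: every non-empty element of $\mathcal{T}_{out}$ is $cut\_right(\kappa',0)$ for some $\kappa'\in\mathcal{T}$, where a member of $\mathcal{T}$ is a timed output sequence $(\hat o_1,d_1-t_0)(\hat o_2,d_2-t_1)\cdots(\hat o_k,d_k-t_{k-1})$ with $k\le\outlen$, $t_0=0$, $t_1\in G$, $t_m-t_{m-1}\in G$ for $2\le m\le k$, each $\hat o_m\in O$ and each $d_m\in D$. It then suffices, given $\kappa$, $s_1$, $i$, $g$, to exhibit an explicit pre-image $\kappa''\in\mathcal{T}$ whose $cut\_right(\cdot,0)$ equals the expression in the claim. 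As a preliminary reconciliation of notation: writing the transition taken at $s_1$ on $(i,g)$ as $s_1\stackrel{i,[g,g]/o,d}{\longrightarrow}s''$, advancing the global clock by $g$ turns $timed\_out_{\mathcal{S}_p}(s_1,(i,g))$ into the pair $(o,d)\in O\times D$, so I read the claimed set as $\Phi:=cut\_right\bigl(shift(\kappa,-g)\cup\{(o,d)\},\,0\bigr)$.

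First the trivial case: if $\kappa=\epsilon$ then $\Phi=\{(o,d)\}$ (since $d\ge 1$), and $\{(o,d)\}=cut\_right((o,d),0)$ with $(o,d)\in\mathcal{T}$, so $\Phi\in\mathcal{T}_{out}$. Otherwise fix $\kappa'\in\mathcal{T}$ as above with $\kappa=cut\_right(\kappa',0)$. I would build $\kappa''$ by \emph{prepending} the new output and shifting the whole timeline of $\kappa'$ back by $g$:
\[
\kappa''=(o,d)\,(\hat o_1,d_1-g)\,(\hat o_2,d_2-t_1-g)\,\cdots\,(\hat o_k,d_k-t_{k-1}-g).
\]
This is of $\mathcal{T}$-shape with time sequence $t''_0=0$, $t''_1=g$ and $t''_m=t_{m-1}+g$ for $2\le m\le k+1$: the consecutive gaps are $g$, then $t_1$, then $t_m-t_{m-1}$ for $2\le m\le k$, all of which lie in $G$; the outputs and delays are the unchanged $\hat o_m\in O$, $d_m\in D$, plus the prepended $o\in O$, $d\in D$. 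Hence $\kappa''$ satisfies every defining condition of $\mathcal{T}$ except possibly the length bound, which I treat last.

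Next I would verify $cut\_right(\kappa'',0)=\Phi$. The leading component $(o,d)$ survives $cut\_right(\cdot,0)$ because $d\ge 1>0$. For $m\ge 2$ the $m$-th component of $\kappa''$ has timestamp $d_{m-1}-t_{m-2}-g$, which is $\ge 0$ exactly when the $(m-1)$-st component of $\kappa'$ has timestamp $\ge g$; those components of $\kappa'$ coincide with the components of $\kappa=cut\_right(\kappa',0)$ whose timestamps stay $\ge 0$ after $shift(\cdot,-g)$, and their shifted values are precisely the non-negative timestamps appearing in $\kappa''$. Since $cut\_right(\cdot,0)$ distributes over multiset union, $cut\_right(\kappa'',0)=\{(o,d)\}\cup cut\_right(shift(\kappa,-g),0)=\Phi$.

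Finally, the length bookkeeping, which I expect to be the only delicate point. We have $|\kappa''|=k+1\le\outlen+1$, so if $k<\outlen$ we are done: $\kappa''\in\mathcal{T}$ and $\Phi=cut\_right(\kappa'',0)\in\mathcal{T}_{out}$. If $k=\outlen$, the last component of $\kappa''$ has timestamp $d_k-t_{k-1}-g\le\max\{D\}-(\outlen-1)\min\{G\}-\min\{G\}=\max\{D\}-\outlen\min\{G\}\le 0$ (each gap of the time sequence is $\ge\min\{G\}$, so $t_{k-1}\ge(\outlen-1)\min\{G\}$, and $\outlen\ge\max\{D\}/\min\{G\}$); whenever this value is strictly negative the last component is deleted by the outer $cut\_right$, so $\Phi=cut\_right(\widetilde\kappa,0)$ with $\widetilde\kappa$ the length-$\outlen$ truncation of $\kappa''$, and $\widetilde\kappa\in\mathcal{T}$. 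The sole residual situation is the arithmetic knife-edge $\min\{G\}\mid\max\{D\}$ with all of $t_1,\dots,t_{k-1}$ spaced by $\min\{G\}$, $g=\min\{G\}$ and $d_k=\max\{D\}$, where the timestamp is exactly $0$ and is retained; there the bound $\outlen$ in the definition of $\mathcal{T}$ (and in Lemma~\ref{lemma:t_out_len}(1)) is tight only up to one unit, and reading it as $\outlen+1$ throughout — which rescales only the exponents in Lemma~\ref{lemma:t_out_len}(2) and Theorem~\ref{theorem:point_tfsm_hs_bound}, not any qualitative claim — restores the argument. The reindexing in the two preceding paragraphs is routine; it is this edge case in the length count that requires the most care.
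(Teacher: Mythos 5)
Your construction is the same one the paper uses: prepend the freshly emitted $(o,d)$ to the tail shifted by $-g$, and exhibit the result as $cut\_right(\cdot,0)$ of a $\mathcal{T}$-shaped sequence. The reconciliation of $timed\_out_{\mathcal{S}_p}(s_1,(i,g))$ to the summand $(o,d)$, the $\kappa=\epsilon$ case, and the check that $cut\_right(\cdot,0)$ commutes with the re-indexing all match the paper's intent, so up to the length bookkeeping your argument and the paper's coincide.

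The ``knife-edge'' you isolate, however, is not merely a point requiring care: it is a genuine defect of the statement as written, and the paper's own proof does not close it. The paper only treats the case $\sum_i l_i = \max\{D\}/\min\{G\}$, where the oldest element's new timestamp is strictly negative and is cut; it silently assumes the complementary case is harmless. It is not. Take $G=\{[1,1]\}$ and $D=\{2\}$, so $\outlen=2$, and $\kappa=\{(o,2),(o,1)\}=cut\_right\bigl((o,2)(o,2-1),0\bigr)\in\mathcal{T}_{out}$ (here $k=2$ but $\sum_i l_i=1<\max\{D\}/\min\{G\}$, so the paper's case analysis never fires). With $g=1$ and $d=2$ the operation yields $\{(o,2),(o,1),(o,0)\}$: the element at timestamp $0$ is retained because $cut\_right$ uses $\tau\ge t$, and a three-element multiset has no pre-image in $\mathcal{T}$, whose members carry at most $\outlen=2$ entries. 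This tail is moreover actually reachable (three consecutive inputs one unit apart, each with delay $2$, give outputs at $3,4,5$ with $t(\alpha)=3$), so point 1 of Lemma~\ref{lemma:t_out_len} suffers the same off-by-one whenever $\min\{G\}$ divides $\max\{D\}$. Your repair --- replacing $\outlen$ by $\outlen+1$ in the cap on $|\kappa|$ (equivalently, making the cut strict in the definition of the tail) --- is the right one: with cap $\outlen+1$, the $(\outlen+2)$-nd oldest element has timestamp at most $\max\{D\}-(\outlen+1)\min\{G\}\le-\min\{G\}<0$ and is always discarded, so closure is restored and only the constants in Lemma~\ref{lemma:t_out_len}(2) and Theorem~\ref{theorem:point_tfsm_hs_bound} are rescaled.
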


\begin{proof}
Note that $timed\_out_{\mathcal{S}_p}(s,(i,g)) = (o,g+d)$ where $o \in O$ and $d \in D$. Denote also $\kappa = (o_1, t_1)\ldots(o_k, t_k)$. According to the definition of $\mathcal{T}_{out}$, we can permutate $\kappa$ to obtain $\kappa' = (o'_1, d'_1)(o'_2,d'_2 - a_1)\ldots(o'_k,d'_n -a_1 - \ldots -a_k)$, where each $a_i = \sum_{j=1}^{l_i} g'_j$, $g'_j \in G$. Now, $\kappa'' = shift(\kappa', -g) \cup (o,d) = (o,d)(o'_1, d'_1 - g)(o'_2,d'_2 - a_1 - g)\ldots(o'_k,d'_n -a_1 - \ldots -a_k - g)$. If $\sum_{i=1}^k l_i = \frac{\max\{D\}}{\min\{G\}}$, then observe (since $d \leq \max\{D\}$, $g \geq \min\{G\}$ and each $a_i = \sum_{j=1}^{l_i} g'_j$) that $d'_n -a_1 - \ldots -a_k - g < 0$, so $cut\_right(\kappa'',0) \in \mathcal{T}_{out}$.
\end{proof}

If $X$ is a set, then, as usual, denote as $\binom{X}{2}$ the set of all pairs of the elements from $X$, and as $2^{X}$ the set of all subsets of $X$. Denote also $\mathcal{W} = \binom{S\times\mathcal{T}_{out}}{2} \cup \{\varnothing\}$. Let $w = \{(s_1,\kappa_1),(s_2,\kappa_2)\}$. Define function $\delta: \mathcal{W} \times I \times G \rightarrow{\mathcal{W}}$ in the following way: 
\begin{enumerate}
 \item if $next\_state_{\mathcal{S}_p}(s_1, (i, g)) = \bot$ or  $next\_state_{\mathcal{S}_p}(s_2, (i,g)) = \bot$, then $\delta(w, i, g)$ is not defined;
 \item \label{def:delta_merge} if $next\_state_{\mathcal{S}_p}(s_1, (i, g)) =  next\_state_{\mathcal{S}_p}(s_2, (i, g))$, then $\delta(w, i, g) = \varnothing$;
 \item if $next\_state_{\mathcal{S}_p}(s_1, (i, g)) = s'_1$ and $next\_state_{\mathcal{S}_p}(s_2, (i, g)) = s'_2$ and $s'_1 \neq s'_2$ and :
   \begin{enumerate}
    \item \label{def:delta_different_less} $cut\_left(shift(\kappa_1, -g) \cup timed\_out_{\mathcal{S}_p}(s_1,(i,g)),0) \neq \\cut\_left(shift(\kappa_2, -g) \cup timed\_out_{\mathcal{S}_p}(s_2,(i,g)),0)$ then $\delta(w, i, g) = \varnothing$;
    \item \label{def:delta_different_not_less}  otherwise $\delta(w, i, g) = \{(s'_1, \kappa'_1)(s'_2, \kappa'_2)\}$ where 
    $$\kappa'_1 = cut\_right(shift(\kappa_1, -g) \cup timed\_out_{\mathcal{S}_p}(s_1,(i,g)), 0)$$   and 
    $$\kappa'_2 = cut\_right(shift(\kappa_2, -g) \cup timed\_out_{\mathcal{S}_p}(s_2,(i,g)), 0)$$ (see Claim \ref{claim:t_in_W});
  \end{enumerate}
  \item \label{def:delta_nothing_bot} $\delta(\varnothing, i, g) = \varnothing $ and $\delta(\bot, i, g) = \bot$.
\end{enumerate}

We can now extend this function $\delta$ to $(I \times G)^\ast$ in a classical manner: 
\begin{itemize}
\item $\delta(w, \epsilon) = w$;
\item $\delta(w, (i,g)\alpha) = \delta(\delta(w,i,g), \alpha)$.
\end{itemize}

Examples of function $\delta$ for the initial parameters $\{(s_0,\epsilon), (s_1, \epsilon)\}$ and $\{(s_0,\epsilon), (s_3, \epsilon)\}$ for machine $\mathcal{B}_4$ are presented in Figures \ref{fig:fig:s0s3_example} and \ref{fig:fig:s0s1_example}. The blue outputs are those added by $timed\_out$ part of function $\delta$, and the red outputs are those shifted by $-1$ (see Point \ref{def:delta_different_not_less}). All outputs for which time is less than $0$ are removed due to function $cut\_right$. Observe also that the sequence of transitions in Fig. \ref{fig:fig:s0s3_example} ends with $\varnothing$ because the condition at Point \ref{def:delta_different_less} is fulfilled.
\begin{figure}[ht]
  \centering

\tikzset{every picture/.style={line width=0.75pt}} 

\begin{tikzpicture}[x=0.75pt,y=0.75pt,yscale=-1,xscale=1]

\draw    (135,37) -- (192,37) ;
\draw [shift={(194,37)}, rotate = 180] [color={rgb, 255:red, 0; green, 0; blue, 0 }  ][line width=0.75]    (10.93,-3.29) .. controls (6.95,-1.4) and (3.31,-0.3) .. (0,0) .. controls (3.31,0.3) and (6.95,1.4) .. (10.93,3.29)   ;
\draw    (298,37) -- (355,37) ;
\draw [shift={(357,37)}, rotate = 180] [color={rgb, 255:red, 0; green, 0; blue, 0 }  ][line width=0.75]    (10.93,-3.29) .. controls (6.95,-1.4) and (3.31,-0.3) .. (0,0) .. controls (3.31,0.3) and (6.95,1.4) .. (10.93,3.29)   ;
\draw    (499,36) -- (556,36) ;
\draw [shift={(558,36)}, rotate = 180] [color={rgb, 255:red, 0; green, 0; blue, 0 }  ][line width=0.75]    (10.93,-3.29) .. controls (6.95,-1.4) and (3.31,-0.3) .. (0,0) .. controls (3.31,0.3) and (6.95,1.4) .. (10.93,3.29)   ;

\draw (90,15) node [anchor=north west][inner sep=0.75pt]   [align=left] {$\displaystyle  \begin{array}{{>{\displaystyle}l}}
( s_{0} ,\epsilon )\\
( s_{3} ,\epsilon )
\end{array}$};
\draw (140,15) node [anchor=north west][inner sep=0.75pt]   [align=left] {$\displaystyle ( i_{1} ,1)$};
\draw (203,16) node [anchor=north west][inner sep=0.75pt]   [align=left] {$\displaystyle  \begin{array}{{>{\displaystyle}l}}
( s_{1} ,[\textcolor[rgb]{0.29,0.56,0.89}{( o_{1} ,2)}])\\
( s_{0} ,[\textcolor[rgb]{0.29,0.56,0.89}{( o_{1} ,1)}])
\end{array}$};
\draw (300,17) node [anchor=north west][inner sep=0.75pt]   [align=left] {$\displaystyle ( i_{1} ,1)$};
\draw (363,16) node [anchor=north west][inner sep=0.75pt]   [align=left] {$\displaystyle  \begin{array}{{>{\displaystyle}l}}
( s_{2} ,[\textcolor[rgb]{0.82,0.01,0.11}{( o_{1} ,1)}\textcolor[rgb]{0.29,0.56,0.89}{( o_{1} ,2)}])\\
( s_{1} ,[\textcolor[rgb]{0.82,0.01,0.11}{( o_{1} ,0)}\textcolor[rgb]{0.29,0.56,0.89}{( o_{1} ,2)}])
\end{array}$};
\draw (501,16) node [anchor=north west][inner sep=0.75pt]   [align=left] {$\displaystyle ( i_{1} ,1)$};
\draw (564,27) node [anchor=north west][inner sep=0.75pt]   [align=left] {$\displaystyle \emptyset $};

\end{tikzpicture}

  \caption{Function $\delta$ for $\{(s_0,\epsilon), (s_3, \epsilon)\}$}
  \label{fig:fig:s0s3_example}
\end{figure}
\begin{figure}[ht]
  \centering

\tikzset{every picture/.style={line width=0.75pt}} 

\begin{tikzpicture}[x=0.75pt,y=0.75pt,yscale=-1,xscale=1]

\draw    (353,30) -- (386,30) ;
\draw [shift={(388,30)}, rotate = 180] [color={rgb, 255:red, 0; green, 0; blue, 0 }  ][line width=0.75]    (10.93,-3.29) .. controls (6.95,-1.4) and (3.31,-0.3) .. (0,0) .. controls (3.31,0.3) and (6.95,1.4) .. (10.93,3.29)   ;
\draw    (479,54) -- (479,115) ;
\draw [shift={(479,117)}, rotate = 270] [color={rgb, 255:red, 0; green, 0; blue, 0 }  ][line width=0.75]    (10.93,-3.29) .. controls (6.95,-1.4) and (3.31,-0.3) .. (0,0) .. controls (3.31,0.3) and (6.95,1.4) .. (10.93,3.29)   ;
\draw    (208,142) -- (179,142) ;
\draw [shift={(177,142)}, rotate = 360] [color={rgb, 255:red, 0; green, 0; blue, 0 }  ][line width=0.75]    (10.93,-3.29) .. controls (6.95,-1.4) and (3.31,-0.3) .. (0,0) .. controls (3.31,0.3) and (6.95,1.4) .. (10.93,3.29)   ;
\draw    (106,116) -- (394.04,57.4) ;
\draw [shift={(396,57)}, rotate = 168.5] [color={rgb, 255:red, 0; green, 0; blue, 0 }  ][line width=0.75]    (10.93,-3.29) .. controls (6.95,-1.4) and (3.31,-0.3) .. (0,0) .. controls (3.31,0.3) and (6.95,1.4) .. (10.93,3.29)   ;
\draw    (53,30) -- (86,30) ;
\draw [shift={(88,30)}, rotate = 180] [color={rgb, 255:red, 0; green, 0; blue, 0 }  ][line width=0.75]    (10.93,-3.29) .. controls (6.95,-1.4) and (3.31,-0.3) .. (0,0) .. controls (3.31,0.3) and (6.95,1.4) .. (10.93,3.29)   ;
\draw    (179,30) -- (212,30) ;
\draw [shift={(214,30)}, rotate = 180] [color={rgb, 255:red, 0; green, 0; blue, 0 }  ][line width=0.75]    (10.93,-3.29) .. controls (6.95,-1.4) and (3.31,-0.3) .. (0,0) .. controls (3.31,0.3) and (6.95,1.4) .. (10.93,3.29)   ;
\draw    (414,141) -- (385,141) ;
\draw [shift={(383,141)}, rotate = 360] [color={rgb, 255:red, 0; green, 0; blue, 0 }  ][line width=0.75]    (10.93,-3.29) .. controls (6.95,-1.4) and (3.31,-0.3) .. (0,0) .. controls (3.31,0.3) and (6.95,1.4) .. (10.93,3.29)   ;

\draw (7,7) node [anchor=north west][inner sep=0.75pt]   [align=left] {$\displaystyle  \begin{array}{{>{\displaystyle}l}}
( s_{0} ,\epsilon )\\
( s_{1} ,\epsilon )
\end{array}$};
\draw (91,8) node [anchor=north west][inner sep=0.75pt]   [align=left] {$\displaystyle  \begin{array}{{>{\displaystyle}l}}
( s_{1} ,[\textcolor[rgb]{0.29,0.56,0.89}{(}\textcolor[rgb]{0.29,0.56,0.89}{o}\textcolor[rgb]{0.29,0.56,0.89}{_{1}}\textcolor[rgb]{0.29,0.56,0.89}{,2}\textcolor[rgb]{0.29,0.56,0.89}{)}])\\
( s_{2} ,[\textcolor[rgb]{0.29,0.56,0.89}{(}\textcolor[rgb]{0.29,0.56,0.89}{o}\textcolor[rgb]{0.29,0.56,0.89}{_{1}}\textcolor[rgb]{0.29,0.56,0.89}{,2}\textcolor[rgb]{0.29,0.56,0.89}{)}])
\end{array}$};
\draw (220,9) node [anchor=north west][inner sep=0.75pt]   [align=left] {$\displaystyle  \begin{array}{{>{\displaystyle}l}}
( s_{2} ,[\textcolor[rgb]{0.82,0.01,0.11}{(}\textcolor[rgb]{0.82,0.01,0.11}{o}\textcolor[rgb]{0.82,0.01,0.11}{_{1}}\textcolor[rgb]{0.82,0.01,0.11}{,1}\textcolor[rgb]{0.82,0.01,0.11}{)}\textcolor[rgb]{0.29,0.56,0.89}{(}\textcolor[rgb]{0.29,0.56,0.89}{o}\textcolor[rgb]{0.29,0.56,0.89}{_{1}}\textcolor[rgb]{0.29,0.56,0.89}{,2}\textcolor[rgb]{0.29,0.56,0.89}{)}])\\
( s_{3} ,[\textcolor[rgb]{0.82,0.01,0.11}{(}\textcolor[rgb]{0.82,0.01,0.11}{o}\textcolor[rgb]{0.82,0.01,0.11}{_{1}}\textcolor[rgb]{0.82,0.01,0.11}{,1}\textcolor[rgb]{0.82,0.01,0.11}{)}\textcolor[rgb]{0.29,0.56,0.89}{(}\textcolor[rgb]{0.29,0.56,0.89}{o}\textcolor[rgb]{0.29,0.56,0.89}{_{1}}\textcolor[rgb]{0.29,0.56,0.89}{,3}\textcolor[rgb]{0.29,0.56,0.89}{)}])
\end{array}$};
\draw (359,15) node [anchor=north west][inner sep=0.75pt]  [font=\tiny] [align=left] {$\displaystyle ( i_{1} ,1)$};
\draw (393,9) node [anchor=north west][inner sep=0.75pt]   [align=left] {$\displaystyle  \begin{array}{{>{\displaystyle}l}}
( s_{3} ,[\textcolor[rgb]{0.82,0.01,0.11}{( o_{1} ,0)(}\textcolor[rgb]{0.82,0.01,0.11}{o}\textcolor[rgb]{0.82,0.01,0.11}{_{1}}\textcolor[rgb]{0.82,0.01,0.11}{,1}\textcolor[rgb]{0.82,0.01,0.11}{)}\textcolor[rgb]{0.29,0.56,0.89}{(}\textcolor[rgb]{0.29,0.56,0.89}{o}\textcolor[rgb]{0.29,0.56,0.89}{_{1}}\textcolor[rgb]{0.29,0.56,0.89}{,3}\textcolor[rgb]{0.29,0.56,0.89}{)}])\\
( s_{0} ,[\textcolor[rgb]{0.82,0.01,0.11}{( o_{1} ,0)}\textcolor[rgb]{0.29,0.56,0.89}{( o_{1} ,1)}\textcolor[rgb]{0.82,0.01,0.11}{(}\textcolor[rgb]{0.82,0.01,0.11}{o}\textcolor[rgb]{0.82,0.01,0.11}{_{1}}\textcolor[rgb]{0.82,0.01,0.11}{,2}\textcolor[rgb]{0.82,0.01,0.11}{)}])
\end{array}$};
\draw (487,71) node [anchor=north west][inner sep=0.75pt]  [font=\tiny] [align=left] {$\displaystyle ( i_{1} ,1)$};
\draw (416,119) node [anchor=north west][inner sep=0.75pt]   [align=left] {$\displaystyle  \begin{array}{{>{\displaystyle}l}}
( s_{0} ,[\textcolor[rgb]{0.82,0.01,0.11}{(}\textcolor[rgb]{0.82,0.01,0.11}{o}\textcolor[rgb]{0.82,0.01,0.11}{_{1}}\textcolor[rgb]{0.82,0.01,0.11}{,0}\textcolor[rgb]{0.82,0.01,0.11}{)}\textcolor[rgb]{0.29,0.56,0.89}{(}\textcolor[rgb]{0.29,0.56,0.89}{o}\textcolor[rgb]{0.29,0.56,0.89}{_{1}}\textcolor[rgb]{0.29,0.56,0.89}{,1}\textcolor[rgb]{0.29,0.56,0.89}{)}\textcolor[rgb]{0.82,0.01,0.11}{(}\textcolor[rgb]{0.82,0.01,0.11}{o}\textcolor[rgb]{0.82,0.01,0.11}{_{1}}\textcolor[rgb]{0.82,0.01,0.11}{,2}\textcolor[rgb]{0.82,0.01,0.11}{)}])\\
( s_{1} ,[\textcolor[rgb]{0.82,0.01,0.11}{(}\textcolor[rgb]{0.82,0.01,0.11}{o}\textcolor[rgb]{0.82,0.01,0.11}{_{1}}\textcolor[rgb]{0.82,0.01,0.11}{,0}\textcolor[rgb]{0.82,0.01,0.11}{)}\textcolor[rgb]{0.82,0.01,0.11}{( o_{1} ,1)}\textcolor[rgb]{0.29,0.56,0.89}{( o_{1} ,2)}])
\end{array}$};
\draw (211,120) node [anchor=north west][inner sep=0.75pt]   [align=left] {$\displaystyle  \begin{array}{{>{\displaystyle}l}}
( s_{1} ,[\textcolor[rgb]{0.82,0.01,0.11}{(}\textcolor[rgb]{0.82,0.01,0.11}{o}\textcolor[rgb]{0.82,0.01,0.11}{_{1}}\textcolor[rgb]{0.82,0.01,0.11}{,0}\textcolor[rgb]{0.82,0.01,0.11}{)}\textcolor[rgb]{0.82,0.01,0.11}{( o_{1} ,1)}\textcolor[rgb]{0.29,0.56,0.89}{( o_{1} ,2)}])\\
( s_{2} ,[\textcolor[rgb]{0.82,0.01,0.11}{(}\textcolor[rgb]{0.82,0.01,0.11}{o}\textcolor[rgb]{0.82,0.01,0.11}{_{1}}\textcolor[rgb]{0.82,0.01,0.11}{,0}\textcolor[rgb]{0.82,0.01,0.11}{)}\textcolor[rgb]{0.82,0.01,0.11}{(}\textcolor[rgb]{0.82,0.01,0.11}{o}\textcolor[rgb]{0.82,0.01,0.11}{_{1}}\textcolor[rgb]{0.82,0.01,0.11}{,1}\textcolor[rgb]{0.82,0.01,0.11}{)}\textcolor[rgb]{0.29,0.56,0.89}{(}\textcolor[rgb]{0.29,0.56,0.89}{o}\textcolor[rgb]{0.29,0.56,0.89}{_{1}}\textcolor[rgb]{0.29,0.56,0.89}{,2}\textcolor[rgb]{0.29,0.56,0.89}{)}])
\end{array}$};
\draw (186,128) node [anchor=north west][inner sep=0.75pt]  [font=\tiny] [align=left] {$\displaystyle ( i_{1} ,1)$};
\draw (6,120) node [anchor=north west][inner sep=0.75pt]   [align=left] {$\displaystyle  \begin{array}{{>{\displaystyle}l}}
( s_{2} ,[\textcolor[rgb]{0.82,0.01,0.11}{(}\textcolor[rgb]{0.82,0.01,0.11}{o}\textcolor[rgb]{0.82,0.01,0.11}{_{1}}\textcolor[rgb]{0.82,0.01,0.11}{,0}\textcolor[rgb]{0.82,0.01,0.11}{)}\textcolor[rgb]{0.82,0.01,0.11}{(}\textcolor[rgb]{0.82,0.01,0.11}{o}\textcolor[rgb]{0.82,0.01,0.11}{_{1}}\textcolor[rgb]{0.82,0.01,0.11}{,1}\textcolor[rgb]{0.82,0.01,0.11}{)}\textcolor[rgb]{0.29,0.56,0.89}{(}\textcolor[rgb]{0.29,0.56,0.89}{o}\textcolor[rgb]{0.29,0.56,0.89}{_{1}}\textcolor[rgb]{0.29,0.56,0.89}{,2}\textcolor[rgb]{0.29,0.56,0.89}{)}])\\
( s_{3} ,[\textcolor[rgb]{0.82,0.01,0.11}{(}\textcolor[rgb]{0.82,0.01,0.11}{o}\textcolor[rgb]{0.82,0.01,0.11}{_{1}}\textcolor[rgb]{0.82,0.01,0.11}{,0}\textcolor[rgb]{0.82,0.01,0.11}{)}\textcolor[rgb]{0.82,0.01,0.11}{(}\textcolor[rgb]{0.82,0.01,0.11}{o}\textcolor[rgb]{0.82,0.01,0.11}{_{1}}\textcolor[rgb]{0.82,0.01,0.11}{,1}\textcolor[rgb]{0.82,0.01,0.11}{)}\textcolor[rgb]{0.29,0.56,0.89}{(}\textcolor[rgb]{0.29,0.56,0.89}{o}\textcolor[rgb]{0.29,0.56,0.89}{_{1}}\textcolor[rgb]{0.29,0.56,0.89}{,3}\textcolor[rgb]{0.29,0.56,0.89}{)}])
\end{array}$};
\draw (227,76) node [anchor=north west][inner sep=0.75pt]  [font=\tiny] [align=left] {$\displaystyle ( i_{1} ,1)$};
\draw (59,15) node [anchor=north west][inner sep=0.75pt]  [font=\tiny] [align=left] {$\displaystyle ( i_{1} ,1)$};
\draw (185,15) node [anchor=north west][inner sep=0.75pt]  [font=\tiny] [align=left] {$\displaystyle ( i_{1} ,1)$};
\draw (393,126) node [anchor=north west][inner sep=0.75pt]  [font=\tiny] [align=left] {$\displaystyle ( i_{1} ,1)$};

\end{tikzpicture}

  \caption{Function $\delta$ for $\{(s_0,\epsilon), (s_1, \epsilon)\}$}
  \label{fig:fig:s0s1_example}
\end{figure}

Observe that, since $\mathcal{S}_p$ is a point-interval machine, any enabled for $\mathcal{S}_p$ sequence $\alpha = (i_1, g'_1)(i_2, g'_1 + g'_2) \ldots (i_k, g'_1 + g'_2\ldots + g'_k)$ can be associated with word $w_\alpha = \\(i_1, g'_1)(i_2, g'_2) \ldots (i_k, g'_k) \in (I \times G)^\ast$.

Let $s_1, s_2 \in S$, we state two claims:

\begin{claim}
\label{claim:pairs_not_varnothing}
Conditions (1) and (2) are equivalent: 
\begin{enumerate}
 \item $next\_state_{\mathcal{S}_p}(s_1,\alpha) \neq next\_state_{\mathcal{S}_p}(s_2,\alpha)$ and \\$timed\_out_{\mathcal{S}_p < t(\alpha)}(s_1, \alpha) = timed\_out_{\mathcal{S}_p < t(\alpha)}(s_2, \alpha)$;
 \item $\delta(\timedpair, w_\alpha) = \{
 (next\_state_{\mathcal{S}_p}(s_1,\alpha), \kappa_1) (next\_state_{\mathcal{S}_p}(s_2,\alpha), \kappa_2)\}$ where $\kappa_1 = shift(timed\_out_{\mathcal{S}_p \geq t(\alpha)}(s_1, \alpha), -t(\alpha))$ and\\ $\kappa_2 = shift(timed\_out_{\mathcal{S}_p \geq t(\alpha)}(s_2, \alpha), -t(\alpha))$.
 \end{enumerate}
\end{claim}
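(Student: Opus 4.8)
The plan is to prove the equivalence \emph{together with} the explicit description of $\delta(\timedpair,w_\alpha)$ contained in~(2), by induction on $k=|\alpha|$ (equivalently, on $|w_\alpha|$). The inductive statement I will carry is: \emph{$\delta(\timedpair,w_\alpha)$ is a two-element pair if and only if (1) holds, and in that case it equals $\{(next\_state_{\mathcal{S}_p}(s_1,\alpha),\kappa_1)(next\_state_{\mathcal{S}_p}(s_2,\alpha),\kappa_2)\}$ with $\kappa_i=shift(timed\_out_{\mathcal{S}_p\ge t(\alpha)}(s_i,\alpha),-t(\alpha))$.} This indeed yields (1)$\iff$(2), since (2) is the assertion ``$\delta(\timedpair,w_\alpha)$ equals that specific pair'', and by the second half of the statement any pair produced by $\delta$ is already that one. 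Throughout we assume $s_1\neq s_2$ (so $\timedpair\in\mathcal{W}$) and $\alpha\in Dom_{\mathcal{S}_p}(s_1)\cap Dom_{\mathcal{S}_p}(s_2)$; otherwise some step sends a state to $\bot$, making $\delta$ undefined and (1) false, so both conditions fail trivially. The base case $\alpha=\epsilon$ is immediate: $\delta(\timedpair,\epsilon)=\timedpair$, $next\_state_{\mathcal{S}_p}(s_i,\epsilon)=s_i$, and the relevant $timed\_out$-multisets are empty, so (1) and (2) hold at once.

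For the inductive step write $\alpha=\alpha'(i_k,t_k)$, $w_\alpha=w_{\alpha'}(i_k,g)$ with $g=t_k-t_{k-1}$, so $\delta(\timedpair,w_\alpha)=\delta(\delta(\timedpair,w_{\alpha'}),(i_k,g))$. Three elementary observations drive the bookkeeping: (a) by determinism the first $k-1$ transitions, hence outputs, of the run of $\alpha$ at $s_i$ coincide with those of the run of $\alpha'$ at $s_i$; (b) the $k$-th output is produced at time $t_k+d_i^{(k)}\ge t_k$, so it lies outside $timed\_out_{\mathcal{S}_p<t_k}(s_i,\alpha)$ and, after the internal shift by $-g$, never enters the $cut\_left(\cdot,0)$ part of $\delta$; (c) $timed\_out_{\mathcal{S}_p<t}(s,\alpha)$ is precisely the multiset of run-outputs with timestamp in $[0,t)$, so it is a ``time-prefix'': for $t'\le t$ it restricts to the sub-multiset of entries with timestamp below $t'$. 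From (c) one gets a \emph{persistence lemma}: if $timed\_out_{\mathcal{S}_p<t(\alpha)}(s_1,\alpha)=timed\_out_{\mathcal{S}_p<t(\alpha)}(s_2,\alpha)$ then the same equality holds for every prefix of $\alpha$; combined with the contrapositive of determinism for $next\_state$, this shows (1)$_\alpha\Rightarrow$(1)$_{\alpha'}$.

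For (1)$_\alpha\Rightarrow$(2)$_\alpha$: apply the induction hypothesis to $\alpha'$ to get $\delta(\timedpair,w_{\alpha'})=\{(p_1,\mu_1)(p_2,\mu_2)\}$ with $p_i=next\_state_{\mathcal{S}_p}(s_i,\alpha')$ and $\mu_i$ the shifted pending tail after $\alpha'$, then trace the single step. Since $next\_state_{\mathcal{S}_p}(p_i,(i_k,g))=next\_state_{\mathcal{S}_p}(s_i,\alpha)$ are distinct and non-$\bot$ we are in case~3; by (b) the $cut\_left$-test reduces to comparing $cut\_left(shift(\mu_1,-g),0)$ with $cut\_left(shift(\mu_2,-g),0)$, which by the description of $\mu_i$ and (a) are the (shifted) multisets of run-$\alpha'$ outputs with timestamp in $[t_{k-1},t_k)$ for $s_1$ and $s_2$; these are the multiset differences of $timed\_out_{\mathcal{S}_p<t_k}(s_i,\alpha)$ and $timed\_out_{\mathcal{S}_p<t_{k-1}}(s_i,\alpha')$, whose minuends agree by (1)$_\alpha$ and subtrahends by (1)$_{\alpha'}$, so the test passes and we reach subcase~(3b). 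The resulting pair is $\{(next\_state_{\mathcal{S}_p}(s_1,\alpha),\kappa_1')(next\_state_{\mathcal{S}_p}(s_2,\alpha),\kappa_2')\}$ with $\kappa_i'=cut\_right(shift(\mu_i,-g),0)$ augmented by the $k$-th output recorded at relative time $d_i^{(k)}$; unfolding $\mu_i$ and recombining via (a) identifies $\kappa_i'$ with $shift(timed\_out_{\mathcal{S}_p\ge t_k}(s_i,\alpha),-t_k)$, i.e.\ (2)$_\alpha$. Conversely, if $\delta(\timedpair,w_\alpha)$ is a pair, the last step must have been case~3 acting on a pair, so the induction hypothesis supplies (1)$_{\alpha'}$ and the tail description; reading the transition backwards, landing in subcase~(3b) rather than a $\varnothing$-branch says exactly that $next\_state_{\mathcal{S}_p}(s_1,\alpha)\neq next\_state_{\mathcal{S}_p}(s_2,\alpha)$ and that the outputs completing in $[t_{k-1},t_k)$ agree for $s_1$ and $s_2$, which with (1)$_{\alpha'}$ and (a) gives $timed\_out_{\mathcal{S}_p<t_k}(s_1,\alpha)=timed\_out_{\mathcal{S}_p<t_k}(s_2,\alpha)$, i.e.\ (1)$_\alpha$.

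I expect the main obstacle to be twofold. First, isolating and proving the persistence lemma: once observation (c) is stated it is immediate (restrict a time-prefix multiset to an earlier window), but it is easy to overlook and is exactly what makes (1)$_\alpha$ reduce to (1)$_{\alpha'}$ so that the induction closes. Second, keeping the two time references consistent: the second components of $\delta$-states and the $\mu_i$ are expressed \emph{relative to the time of the last input}, whereas the run timestamps $t_\ell+d_\ell$ and the single-step output are absolute, so one must insert exactly the right $shift$ when matching $cut\_left(shift(\mu_i,-g),0)$ and $cut\_right(shift(\mu_i,-g),0)\cup timed\_out_{\mathcal{S}_p}(\cdot,(i_k,g))$ against the semantic ``completed in $[t_{k-1},t_k)$'' and ``still pending at $t_k$'' multisets (and to read $timed\_out_{\mathcal{S}_p}(\cdot,(i_k,g))$ inside $\delta$ as the new output tagged by its delay, as in Figures~\ref{fig:fig:s0s3_example} and~\ref{fig:fig:s0s1_example}). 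Everything beyond these points is routine multiset arithmetic.
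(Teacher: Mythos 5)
Your proof is correct and follows essentially the same route as the paper's: an induction on the length of $\alpha$ that unfolds one application of $\delta$ and matches cases 3a/3b against the run semantics. Your explicit persistence lemma (that equality of $timed\_out_{\mathcal{S}_p < t(\alpha)}$ restricts to every prefix) supplies the justification for invoking the inductive hypothesis on $\alpha'$, a step the paper's proof of $(1)\implies(2)$ uses implicitly without stating.
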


\begin{proof}
$(1) \implies (2)$
The proof follows by induction on the length of $\alpha$. If $|\alpha| = 0$, then the claim holds. Assume it holds for all $\alpha$ shorter or equal to $k$. Consider $\alpha'$ with $|\alpha'| \leq k+1$. We can write $\alpha' = \alpha(i,t(\alpha) + g)$ with $|\alpha| \leq k$. From inductive assumption we know, that $\delta(\timedpair, w_\alpha) = \{
 (next\_state_{\mathcal{S}_p}(s_1,\alpha), \kappa_1) (next\_state_{\mathcal{S}_p}(s_2,\alpha), \kappa_2)\}$ where\\ $\kappa_1 = shift(timed\_out_{\mathcal{S}_p \geq t(\alpha)}(s_1, \alpha), -t(\alpha))$ and\\ $\kappa_2 = shift(timed\_out_{\mathcal{S}_p \geq t(\alpha)}(s_2, \alpha), -t(\alpha))$.\\ Let us calculate $\delta(\delta(\timedpair, w_\alpha), (i,g))$. Since Condition $(1)$ holds for $\alpha(i,t(\alpha) + g)$, we use Point \ref{def:delta_different_not_less} and obtain $\delta(\delta(\timedpair, w_\alpha), (i,g)) = \{(s'_1, \kappa'_1), (s'_2, \kappa'_2)\}$. It is easy to check that $s'_1 = next\_state_{\mathcal{S}_p}(s_1, \alpha(i,t))$ and $s'_2 = next\_state_{\mathcal{S}_p}(s_2, \alpha(i,t))$. To show that\\ $\kappa'_1 = shift(timed\_out_{\mathcal{S}_p \geq t(\alpha(i, t(\alpha) + g))}(s_1, \alpha), -t(\alpha) - g)$ and\\ $\kappa'_2 = shift(timed\_out_{\mathcal{S}_p \geq t(\alpha(i, t(\alpha) + g))}(s_2, t(\alpha) - g)$ it suffices to notice that first we decrease every delay of $\kappa_i$ by $g$, then we add $timed\_out_{\mathcal{S}_p}(s_i,i, g)$ to the end of sequences, then we remove from the sequence all elements with delay less than $0$.

$(2) \implies (1)$

The proof follows by induction on the length of $w_\alpha$. If $|w_\alpha| = 0$, then the claim holds. Assume that it holds for all $w_\alpha$ shorter than or equal to $k$. Consider $w_\alpha'$ with $|w_\alpha'| \leq k+1$. We can write $w_\alpha' = w_\alpha(i,g)$. From inductive assumption, we know that $next\_state_{\mathcal{S}_p}(s_1,\alpha) \neq next\_state_{\mathcal{S}_p}(s_2,\alpha)$ and $timed\_out_{\mathcal{S}_p < t(\alpha)}(s_1, \alpha) = timed\_out_{\mathcal{S}_p < t(\alpha)}(s_2, \alpha)$. Obviously, \\$next\_state_{\mathcal{S}_p}(s_1,\alpha(i,g)) \neq next\_state_{\mathcal{S}_p}(s_2,\alpha(i,g))$. Also, since (2) holds, we know that $\delta(\delta(\timedpair, w_\alpha'), (i,g))$ admits Point \ref{def:delta_different_not_less}, so (see the condition in Point \ref{def:delta_different_less}) we can conclude the proof.
\end{proof}

\begin{claim}
\label{claim:pairs_varnothing}
Conditions (1) and (2) are equivalent: 
\begin{enumerate}
   \item \begin{enumerate}
   \item $next\_state_{\mathcal{S}_p}(s_1, \alpha) = next\_state_{\mathcal{S}_p}(s_2, \alpha)$ or 
   \item $next\_state_{\mathcal{S}_p}(s_1,\alpha) \neq next\_state_{\mathcal{S}_p}(s_2,\alpha)$ and\\ $timed\_out_{\mathcal{S}_p < t(\alpha)}(s_1, \alpha) \neq timed\_out_{\mathcal{S}_p < t(\alpha)}(s_2, \alpha)$;
   \end{enumerate}
   \item $\delta(\timedpair, w_\alpha) = \varnothing$.
\end{enumerate}
\end{claim}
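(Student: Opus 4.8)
\emph{Plan.} Throughout, fix distinct states $s_1, s_2$ and a timed input sequence $\alpha$ enabled from both of them, so that $\timedpair\in\mathcal{W}$ and, by point~\ref{def:delta_nothing_bot} of the definition of $\delta$, the value $W_\alpha := \delta(\timedpair, w_\alpha)$ is defined all along $w_\alpha$ and lies in $\mathcal{W}$. The plan is to derive the statement from Claim~\ref{claim:pairs_not_varnothing}, which already determines $W_\alpha$ \emph{exactly} in the only case where it is nonempty. Two elementary observations are needed. First, a Boolean one: condition $(1)$ of the present claim is the logical negation of condition $(1)$ of Claim~\ref{claim:pairs_not_varnothing}, since $\neg\big(next\_state_{\mathcal{S}_p}(s_1,\alpha)\neq next\_state_{\mathcal{S}_p}(s_2,\alpha)\ \wedge\ timed\_out_{\mathcal{S}_p<t(\alpha)}(s_1,\alpha)=timed\_out_{\mathcal{S}_p<t(\alpha)}(s_2,\alpha)\big)$ reads ``the next states coincide, or the early outputs differ'', which, after splitting on whether the next states coincide, is exactly $(1a)\vee(1b)$. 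Second, $W_\alpha$ is either $\varnothing$ or a genuine two-element set $\{(s'_1,\kappa_1),(s'_2,\kappa_2)\}$ with $s'_1\neq s'_2$ (singletons are not in $\mathcal{W}$, and $\varnothing$ is absorbing for $\delta$). Given these, $(2)\Rightarrow(1)$ is immediate by contraposition of Claim~\ref{claim:pairs_not_varnothing}: if $(1)$ fails then condition $(1)$ of Claim~\ref{claim:pairs_not_varnothing} holds, forcing $W_\alpha$ to be a two-element set, contradicting $W_\alpha=\varnothing$.

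For $(1)\Rightarrow(2)$ I would induct on $|w_\alpha|$, in parallel with the induction in the proof of Claim~\ref{claim:pairs_not_varnothing}. The base case $w_\alpha=\epsilon$ is vacuous: since $s_1\neq s_2$ and $timed\_out_{\mathcal{S}_p<0}(s_1,\varepsilon)=\varnothing=timed\_out_{\mathcal{S}_p<0}(s_2,\varepsilon)$, condition $(1)$ is false at $\varepsilon$. For the inductive step, write $w_{\alpha'}=w_\alpha(i,g)$ and $\alpha'=\alpha(i,t(\alpha)+g)$, and assume $(1)$ holds for $\alpha'$. If $(1)$ already holds for $\alpha$, the induction hypothesis gives $W_\alpha=\varnothing$, whence $W_{\alpha'}=\delta(\varnothing,(i,g))=\varnothing$ by point~\ref{def:delta_nothing_bot}. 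Otherwise, condition $(1)$ of Claim~\ref{claim:pairs_not_varnothing} holds for $\alpha$, so by that claim $W_\alpha=\{(s'_1,\kappa_1),(s'_2,\kappa_2)\}$ with $s'_j=next\_state_{\mathcal{S}_p}(s_j,\alpha)$, $s'_1\neq s'_2$, and $\kappa_j=shift(timed\_out_{\mathcal{S}_p\geq t(\alpha)}(s_j,\alpha),-t(\alpha))$. Now I would evaluate $\delta(W_\alpha,(i,g))$: if $next\_state_{\mathcal{S}_p}(s'_1,(i,g))=next\_state_{\mathcal{S}_p}(s'_2,(i,g))$ we are in point~\ref{def:delta_merge} and $W_{\alpha'}=\varnothing$; otherwise, since $(1)$ holds for $\alpha'$ while its next states differ, we are in sub-case $(1b)$, i.e. $timed\_out_{\mathcal{S}_p<t(\alpha')}(s_1,\alpha')\neq timed\_out_{\mathcal{S}_p<t(\alpha')}(s_2,\alpha')$, and the remaining task is to show this forces point~\ref{def:delta_different_less} (again $W_{\alpha'}=\varnothing$) rather than point~\ref{def:delta_different_not_less}.

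The crux is this last implication. The quantities $cut\_left(shift(\kappa_j,-g)\cup timed\_out_{\mathcal{S}_p}(s'_j,(i,g)),0)$ compared in point~\ref{def:delta_different_less} are, up to the common shift by $-t(\alpha')$, exactly the multisets of outputs that the run of $\alpha'$ from $s_j$ produces with timestamp in the window $[t(\alpha),t(\alpha'))$: the freshly emitted output is produced strictly after $t(\alpha')$ because guards and delays are positive integers (so it is dropped by $cut\_left(\cdot,0)$ anyway), and no output produced before $t(\alpha)$ enters this window either. Since the pre-$t(\alpha)$ outputs of the two runs agree (this is exactly the ``otherwise'' hypothesis $timed\_out_{\mathcal{S}_p<t(\alpha)}(s_1,\alpha)=timed\_out_{\mathcal{S}_p<t(\alpha)}(s_2,\alpha)$, and appending a strictly later input leaves them unchanged) while the pre-$t(\alpha')$ outputs disagree ($(1b)$ for $\alpha'$), the window contributions must disagree; hence the two $cut\_left$ expressions differ, point~\ref{def:delta_different_less} applies, and the induction closes.

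The main obstacle is precisely this bookkeeping with $cut\_left$, $cut\_right$ and $shift$: matching those operations to ``the outputs of the actual run inside a fixed time window'', getting the window endpoints and the shift amounts to align, and using positivity of guards and delays so that a newly produced output cannot fall in the window. This is the same kind of calculation already performed in the proof of Claim~\ref{claim:pairs_not_varnothing} (``first we decrease every delay of $\kappa_i$ by $g$, then we add $timed\_out_{\mathcal{S}_p}(s_i,i,g)$\dots''), so it can be reused almost verbatim; the only genuinely new content here is the Boolean case analysis identifying which branch of $\delta$ fires. Together, Claims~\ref{claim:pairs_not_varnothing} and~\ref{claim:pairs_varnothing} then say that, for any enabled sequence $\alpha$, the reachable state $W$ of $\mathcal{A}_{\mathcal{S}_p}$ records exactly which pairs $\{s_1,s_2\}$ are not yet homed by $\alpha$, which is what the bound in Theorem~\ref{theorem:point_tfsm_hs_bound} rests on.
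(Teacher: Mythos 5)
Your proof is correct and follows essentially the same route as the paper's: both directions reduce to Claim~\ref{claim:pairs_not_varnothing} plus a case analysis on which branch of the definition of $\delta$ fires, with the key computational step being the identification of the $cut\_left(shift(\cdot,-g)\cup\cdots,0)$ expressions with the outputs emitted in the window $[t(\alpha),t(\alpha'))$ (using positivity of delays to exclude the fresh output). The only organizational differences are that you obtain $(2)\Rightarrow(1)$ by a clean contraposition of Claim~\ref{claim:pairs_not_varnothing} (exploiting that condition $(1)$ here is the Boolean negation of condition $(1)$ there), and run a length induction for $(1)\Rightarrow(2)$, whereas the paper locates the first prefix at which $\delta$ collapses to $\varnothing$ and then invokes absorption; these are interchangeable.
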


\begin{proof}

$(1) \implies (2)$

We will show that $(1a) \implies (2)$ or $(1b) \implies (2)$.
If we assume $(1a)$, then we know that there exists $a'(i,t)$, a prefix of $\alpha$ such that $next\_state_{\mathcal{S}_p}(s_1, \alpha'(i,t))$ $ = next\_state_{\mathcal{S}_p}(s_2, \alpha'(i,t))$ and  $next\_state_{\mathcal{S}_p}(s_1, \alpha') \neq next\_state_{\mathcal{S}_p}(s_2, \alpha')$. From that, we know that \\$\delta(\timedpair, w_{\alpha'}) = \{(s'_1, \kappa'_1),(s'_2, \kappa'_2)\}$ and $\delta(\timedpair, w_{\alpha'}(i, t-t_{\alpha'})) = \varnothing$.

If we assume $(1b)$, then we know that there exists $a'(i,t)$, a prefix of $\alpha$ such that:
\begin{itemize}
\item $next\_state_{\mathcal{S}_p}(s_1, \alpha'(i,t)) \neq next\_state_{\mathcal{S}_p}(s_2, \alpha'(i,t))$;
\item $next\_state_{\mathcal{S}_p}(s_1, \alpha') \neq next\_state_{\mathcal{S}_p}(s_2, \alpha')$;
\item $timed\_out_{\mathcal{S}_p < t(\alpha')}(s_1, \alpha') = timed\_out_{\mathcal{S}_p < t(\alpha')}(s_2, \alpha')$;
\item \label{item:important}$timed\_out_{\mathcal{S}_p < t(\alpha'(i,t))}(s_1, \alpha'(i,t)) \neq timed\_out_{\mathcal{S}_p < t(\alpha'(i,t))}(s_2, \alpha'(i,t))$.

\end{itemize}
Using Claim \ref{claim:pairs_not_varnothing} we know that $\delta(\timedpair,w_{\alpha'}) = \{
 (s'_1, \kappa_1) (s'_2, \kappa_2)\}$ where: 
 \begin{itemize}
 \item $\kappa_1 = shift(timed\_out_{\mathcal{S}_p \geq t(\alpha')}(s_1, \alpha'), -t(\alpha'))$;
 \item $\kappa_2 = shift(timed\_out_{\mathcal{S}_p \geq t(\alpha')}(s_2, \alpha'), -t(\alpha'))$;
 \item $s'_1 = next\_state_{\mathcal{S}_p}(s_1,\alpha')$;
 \item$s'_2 = next\_state_{\mathcal{S}_p}(s_2,\alpha')$.
 \end{itemize} But also it is easy to check that $$cut\_left(shift(\kappa_1, t - t(\alpha'(i,t))) \cup timed\_out_{\mathcal{S}_p}(s'_1,(i,t - t(\alpha'(i,t)))),0) \neq$$ $$ cut\_left(shift(\kappa_2, t - t(\alpha'(i,t))) \cup timed\_out_{\mathcal{S}_p}(s'_2,(i,t - t(\alpha'(i,t)))),0),$$ so we apply Point \ref{def:delta_different_less} of the definition of $\delta$.

Using now induction with Point \ref{def:delta_nothing_bot} ends that case.

$(2) \implies (1)$
Assume $(2)$. Then we know that there is prefix $w_\alpha'(i,g)$ of $w_\alpha$ such that $\delta(\timedpair,w_\alpha'(i,g)) = \varnothing$ and $\delta(\timedpair, w_\alpha') \neq \varnothing$. But that implies  $\delta(\delta(\timedpair,w_\alpha'),(i,g))$ admits Point \ref{def:delta_merge} or Point \ref{def:delta_different_less} of the definition of $\delta$. It is easy to check that either $(1a)$ or $(1b)$ holds. 
\noindent

\end{proof}

Let $\mathcal{D} = \{W \in 2^\mathcal{W} : |W| \leq \binom{S}{2}\}$ and note $W_{init} = \{\{(s_1, \epsilon) (s_2, \epsilon) : \{s_1,s_2\} \in \binom{S}{2}\}\}$. Obviously $W_{init} \in \mathcal{D}$. We construct, for a given $\mathcal{S}_p$, an automaton (abstraction) $\mathcal{A}_{\mathcal{S}_p} = (\mathcal{D}, (I \times G) ,\tau)$ where $\tau: \mathcal{D} \times (I \times G) \rightarrow \mathcal{D}$, and  $\tau(W, (i,g)) = \bigcup_{w\in W}\delta(w, (i,g))$ if $\delta(w, (i,g)) \neq \bot$ for all $w \in W$, otherwise $\delta(w, (i,g)) = \bot$. Since $\mathcal{S}_p$ is deterministic, we know that for all $w_{\alpha} \in (I \times G)^\ast$, it holds $|\tau(W_{init}, w_\alpha)| \leq |W_{init}| = \binom{S}{2}$ so the function $\tau$ is well defined (the image remains in $\mathcal{D}$). An example of the first few states of the automaton $\mathcal{A}_{\mathcal{B}_4}$ is shown in Fig. \ref{fig:fig:a_b_4_example}. Initial state of that automaton is $W_{init}$. Observe that the fourth state encodes only three pairs. Indeed, three pairs of the third state (those with $(o_1, 0)$ in the second coordinate) fulfill the condition \ref{def:delta_different_less} of the definition of $\delta$, which can be easily checked. 

\begin{figure}[ht]
  \centering

\tikzset{every picture/.style={line width=0.75pt}} 

\begin{tikzpicture}[x=0.75pt,y=0.75pt,yscale=-0.91,xscale=0.91]

\draw    (115,91) -- (148,91) ;
\draw [shift={(150,91)}, rotate = 180] [color={rgb, 255:red, 0; green, 0; blue, 0 }  ][line width=0.75]    (10.93,-3.29) .. controls (6.95,-1.4) and (3.31,-0.3) .. (0,0) .. controls (3.31,0.3) and (6.95,1.4) .. (10.93,3.29)   ;
\draw   (1,40.8) .. controls (1,28.76) and (10.76,19) .. (22.8,19) -- (88.2,19) .. controls (100.24,19) and (110,28.76) .. (110,40.8) -- (110,128.2) .. controls (110,140.24) and (100.24,150) .. (88.2,150) -- (22.8,150) .. controls (10.76,150) and (1,140.24) .. (1,128.2) -- cycle ;
\draw   (153,45.4) .. controls (153,30.82) and (164.82,19) .. (179.4,19) -- (289.6,19) .. controls (304.18,19) and (316,30.82) .. (316,45.4) -- (316,124.6) .. controls (316,139.18) and (304.18,151) .. (289.6,151) -- (179.4,151) .. controls (164.82,151) and (153,139.18) .. (153,124.6) -- cycle ;
\draw   (352,47) .. controls (352,32.64) and (363.64,21) .. (378,21) -- (569,21) .. controls (583.36,21) and (595,32.64) .. (595,47) -- (595,125) .. controls (595,139.36) and (583.36,151) .. (569,151) -- (378,151) .. controls (363.64,151) and (352,139.36) .. (352,125) -- cycle ;
\draw    (318,90) -- (351,90) ;
\draw [shift={(353,90)}, rotate = 180] [color={rgb, 255:red, 0; green, 0; blue, 0 }  ][line width=0.75]    (10.93,-3.29) .. controls (6.95,-1.4) and (3.31,-0.3) .. (0,0) .. controls (3.31,0.3) and (6.95,1.4) .. (10.93,3.29)   ;
\draw   (136,253.2) .. controls (136,245.36) and (142.36,239) .. (150.2,239) -- (441.8,239) .. controls (449.64,239) and (456,245.36) .. (456,253.2) -- (456,295.8) .. controls (456,303.64) and (449.64,310) .. (441.8,310) -- (150.2,310) .. controls (142.36,310) and (136,303.64) .. (136,295.8) -- cycle ;
\draw    (472,151) -- (302.77,240.07) ;
\draw [shift={(301,241)}, rotate = 332.24] [color={rgb, 255:red, 0; green, 0; blue, 0 }  ][line width=0.75]    (10.93,-3.29) .. controls (6.95,-1.4) and (3.31,-0.3) .. (0,0) .. controls (3.31,0.3) and (6.95,1.4) .. (10.93,3.29)   ;
\draw    (133,279) -- (81,279) ;
\draw [shift={(79,279)}, rotate = 360] [color={rgb, 255:red, 0; green, 0; blue, 0 }  ][line width=0.75]    (10.93,-3.29) .. controls (6.95,-1.4) and (3.31,-0.3) .. (0,0) .. controls (3.31,0.3) and (6.95,1.4) .. (10.93,3.29)   ;

\draw (6,29) node [anchor=north west][inner sep=0.75pt]  [font=\scriptsize] [align=left] {$\displaystyle  \begin{array}{{>{\displaystyle}l}}
\{( s_{0} ,\epsilon ) ,( s_{1} ,\epsilon )\}\\
\{( s_{0} ,\epsilon ) ,( s_{2} ,\epsilon )\}\\
\{( s_{0} ,\epsilon ) ,( s_{3} ,\epsilon )\}\\
\{( s_{1} ,\epsilon ) ,( s_{2} ,\epsilon )\}\\
\{( s_{1} ,\epsilon ) ,( s_{3} ,\epsilon )\}\\
\{( s_{2} ,\epsilon ) ,( s_{3} ,\epsilon )\}
\end{array}$};
\draw (115,75) node [anchor=north west][inner sep=0.75pt]  [font=\tiny] [align=left] {$\displaystyle ( i_{1} ,1)$};
\draw (155,30) node [anchor=north west][inner sep=0.75pt]  [font=\scriptsize] [align=left] {$\displaystyle  \begin{array}{{>{\displaystyle}l}}
\{( s_{1} ,( o_{1} ,2)) ,( s_{2} ,( o_{1} ,2))\}\\
\{( s_{1} ,( o_{1} ,2)) ,( s_{3} ,( o_{1} ,3))\}\\
\{( s_{1} ,( o_{1} ,2)) ,( s_{0} ,( o_{1} ,1))\}\\
\{( s_{2} ,( o_{1} ,2)) ,( s_{3} ,( o_{1} ,3))\}\\
\{( s_{2} ,( o_{1} ,2)) ,( s_{0} ,( o_{1} ,1))\}\\
\{( s_{3} ,( o_{1} ,3)) ,( s_{0} ,( o_{1} ,1))\}
\end{array}$};
\draw (356,30) node [anchor=north west][inner sep=0.75pt]  [font=\scriptsize] [align=left] {$\displaystyle  \begin{array}{{>{\displaystyle}l}}
\{( s_{2} ,( o_{1} ,1)( o_{1} ,2)) ,( s_{3} ,( o_{1} ,1)( o_{1} ,3))\}\\
\{( s_{2} ,( o_{1} ,1)( o_{1} ,2)) ,( s_{0} ,( o_{1} ,2)( o_{1} ,1))\}\\
\{( s_{2} ,( o_{1} ,1)( o_{1} ,2)) ,( s_{1} ,( o_{1} ,0)( o_{1} ,2))\}\\
\{( s_{3} ,( o_{1} ,1)( o_{1} ,3)) ,( s_{0} ,( o_{1} ,2)( o_{1} ,1))\}\\
\{( s_{3} ,( o_{1} ,1)( o_{1} ,3)) ,( s_{1} ,( o_{1} ,0)( o_{1} ,2))\}\\
\{( s_{0} ,( o_{1} ,2)( o_{1} ,1)) ,( s_{1} ,( o_{1} ,0)( o_{1} ,2))\}
\end{array}$};
\draw (317,75) node [anchor=north west][inner sep=0.75pt]  [font=\tiny] [align=left] {$\displaystyle ( i_{1} ,1)$};
\draw (141,248) node [anchor=north west][inner sep=0.75pt]  [font=\scriptsize] [align=left] {$\displaystyle  \begin{array}{{>{\displaystyle}l}}
\{( s_{3} ,( o_{1} ,0)( o_{1} ,1)( o_{1} ,3)) ,( s_{0} ,( o_{1} ,0)( o_{1} ,2)( o_{1} ,1))\}\\
\{( s_{3} ,( o_{1} ,0)( o_{1} ,1)( o_{1} ,3)) ,( s_{1} ,( o_{1} ,1)( o_{1} ,0)( o_{1} ,2))\}\\
\{( s_{0} ,( o_{1} ,0)( o_{1} ,2)( o_{1} ,1)) ,( s_{1} ,( o_{1} ,1)( o_{1} ,0)( o_{1} ,2))\}
\end{array}$};
\draw (372,181) node [anchor=north west][inner sep=0.75pt]  [font=\tiny] [below, align=left] {$\displaystyle ( i_{1} ,1)$};
\draw (95,267) node [anchor=north west][inner sep=0.75pt]  [font=\tiny] [align=left] {$\displaystyle ( i_{1} ,1)$};
\draw (50,267) node [anchor=north west][inner sep=0.75pt]   [align=left] {...};

\end{tikzpicture}

  \caption{The automaton $\mathcal{A}_{\mathcal{B}_4}$}
  \label{fig:fig:a_b_4_example}
\end{figure}

Observe that $\alpha$ is a homing sequence for $\mathcal{S}_p$ if and only if $w_\alpha$ labels a path from state $W_{init}$ to any state $W$ where for each $w = \{(s_1, \kappa_1), (s_2, \kappa_2)\} \in W$ we have $\kappa_1 \neq \kappa_2$ (then by Claims \ref{claim:pairs_not_varnothing} and \ref{claim:pairs_varnothing} we know that each pair is either merged or split with the corresponding output response). Since $$|\mathcal{D}| \leq 2^\mathcal{|W|} =  2^{\binom{|S \times \mathcal{T}_{out}|}{2} + 1} \leq 2^{\binom{|S| \cdot \tlen}{2} + 1}$$ (see Lemma~\ref{lemma:t_out_len}), the theorem holds. 
\end{proof}

\end{document}